\documentclass[12pt,technote, onecolumn, draft]{IEEEtran}

\usepackage{xcolor}
\definecolor{TUMGray}{gray}{0.6}
\definecolor{TUMGreen}{RGB}{162,173,0}
\definecolor{TUMOrange}{RGB}{227,114,34}
\newtheorem{definition}{Definition}
\newtheorem{corollary}{Corollary}
\newtheorem{theorem}{Theorem}
\newtheorem{lemma}{Lemma}
\newtheorem{remark}{Remark}
\newtheorem{example}{Example}

\newtheorem{construction}{Construction}

\usepackage{cite}
\usepackage{amssymb}
\usepackage{amsmath}
\usepackage{tikz}
\usepackage{mathrsfs}
\usetikzlibrary{arrows, arrows.meta, shapes, shapes.multipart, positioning, calc, backgrounds, fit}

\usepackage[justification=centering]{caption} 
\usepackage[capitalize]{cleveref}

\newcommand{\N}{\mathcal{N}_{h,r,s}}
\newcommand{\GNS}{(\epsilon,\ell)\text{-}\mathcal{N}_{h,r,s}}
\newcommand{\GN}{(\epsilon,\ell)\text{-}\mathcal{N}_{h,r,\alpha\ell+\epsilon}}

\newcommand{\qs}{q^{\mathrm{MDS}}(\mathcal{N})}

\title{On the Minimum Field Size of Network MDS Codes for Generalized Combination Networks$^{\dag}$}

\author{Qin Zhou, Fang-Wei Fu
\IEEEcompsocitemizethanks{\IEEEcompsocthanksitem Qin Zhou and Fang-Wei Fu are with the Chern Institute of Mathematics and LPMC, Nankai University, Tianjin 300071, China, Emails: qinzhou@mail.nankai.edu.cn, fwfu@nankai.edu.cn.}
\thanks{$^\dag$This research is supported by the National Key Research and Development Program of China (Grant No. 2022YFA1005000), the National Natural Science Foundation of China (Grant No.  62371259), the Fundamental Research Funds for the Central Universities of China (Nankai University), and the Nankai Zhide Foundation.}
}

\begin{document}

\maketitle

\begin{abstract}
This paper investigates the minimum field size required for network maximum distance separable (MDS) codes, a critical parameter affecting computational complexity at network nodes. Focusing on generalized combination networks $(\epsilon,\ell)$-$\mathcal{N}_{h,r,\alpha\ell+\epsilon}$ and Zosin--Khuller networks $Z_{h,m,N}$, we develop a systematic framework for both scalar and vector network MDS codes. For scalar codes on $(\epsilon,\ell)$-$\mathcal{N}_{h,r,\alpha\ell+\epsilon}$, we establish an equivalence between the minimum distance of network codes and the minimum Hamming distance of classical linear codes, converting network-level MDS constraints into coding-theoretic conditions. This yields necessary and sufficient existence conditions linked to classical MDS codes and covering Grassmannian codes. Using refined greedy constructions and MRD-based designs, we obtain improved bounds on the minimum field size, outperforming the state-of-the-art universal bound. For vector network codes over $\mathbb{F}_{q}^{t}$, we develop an analogous distance equivalence and characterize existence via covering Grassmannian codes, yielding bounds on the minimum effective field size $q^{t}$. Notably, the gap between optimal scalar and vector MDS codes vanishes for several parameter regimes; we explicitly identify a family of such networks where vector coding offers no field-size advantage over scalar coding. For $Z_{h,m,N}$, we derive lower bounds on the minimal effective field size of vector MDS codes using hypergraph homomorphisms and subset-intersection arguments, strictly improving prior scalar bounds. We further provide a hypergraph-homomorphism-based necessary and sufficient condition for vector MDS construction, yielding an upper bound on the minimum field size. Finally, we bound the MDS gap between optimal scalar and vector solutions.
\end{abstract}

\begin{IEEEkeywords}
Minimum field size, network MDS codes, covering Grassmannian codes, vector network coding, hypergraph homomorphism
\end{IEEEkeywords}

\section{Introduction}
Since Yeung and Zhang~\cite{YZ1999} investigated the general source-coding system with multiple sources, multiple encoders, and multiple decoders in 1999, and Ahlswede et al.~\cite{AC2000} formally introduced the concept of network coding in 2000, network coding has fundamentally overcome the limitations of traditional routing, which merely forwards data. By allowing intermediate nodes to linearly combine received information, network coding significantly improves the information transmission rate and has become a research hotspot in information theory and network communications over the past two decades. Li et al.~\cite{LY2003} confirmed via the vector space approach that linear network coding over a finite field already achieves the maximum information rate for multicast networks. Subsequently, Koetter and M\'edard~\cite{KM2003} established an algebraic framework for linear network coding, laying a solid foundation for subsequent theoretical advancements.

However, network communications are often subject to link errors caused by random noise, traffic congestion, or malicious attacks. Notably, an error occurring on a single link may propagate through downstream links, ultimately corrupting the information received at sink nodes. To address this issue, Cai and Yeung~\cite{CY2002} integrated network coding with error correction techniques, proposing the concept of network error-correcting codes and mitigating errors by introducing redundancy in the spatial domain. Building on this work, subsequent studies~\cite{YC2006,Y2006,YY2011} extended three fundamental bounds from classical coding theory, namely the Hamming bound, the Gilbert–Varshamov bound, and the Singleton bound, to the network context. Among these extensions, linear network codes that achieve the Singleton-type bound are defined as network maximum distance separable (MDS) codes. These codes exhibit optimal error-correction capability, and their construction methods have been developed in~\cite{YY2011,GF2013,GY2021,M2011,Z2008}.

The minimum field size required for network MDS codes is a critical parameter that directly affects their practical applicability. Specifically, the field size impacts the computational complexity and storage overhead at network nodes, and an excessively large field size reduces node computational efficiency and increases hardware costs, thereby hindering practical deployment of network coding. While existing construction schemes for network MDS codes applicable to general single-source multicast networks can achieve error correction, they typically require a large field size (e.g.,~\cite{GY2021}), making it challenging to meet the demands of low-complexity and efficient deployment. Therefore, deriving tight upper and lower bounds on the minimum field size of network MDS codes for specific network topologies, and designing code constructions with low field-size requirements, have become central problems in the current research of network coding.

In this paper, we investigate the minimum field size required for network MDS codes over generalized combination networks $(\epsilon,\ell)$-$\mathcal{N}_{h,r,\alpha\ell+\epsilon}$ and Zosin--Khuller (ZK) networks $Z_{h,m,N}$, two canonical layered multicast topologies for benchmarking scalar and vector network codes. Our analysis covers both scalar and vector linear network codes across distinct parameter regimes. We establish rigorous connections between network codes and classical linear codes, as well as covering Grassmannian codes. Based on these connections, we develop a unified framework that yields tight bounds on the minimum field size for both coding scenarios. The main contributions of this work are summarized as follows.

\begin{enumerate}
    \item For generalized combination networks $(\epsilon,\ell)$-$\mathcal{N}_{h,r,\alpha\ell+\epsilon}$:
    \begin{itemize}
        \item We establish rigorous equivalences between the minimum distance of scalar network codes and the Hamming distance of classical linear codes (Lemma \ref{Lem: relation between}), and between vector network codes and $\mathbb{F}_{q}$-linear codes over $\mathbb{F}_{q}^{t}$ (Lemma \ref{lem:relation between vector}). 
        \item Depending on the parameter regimes $h\leq \alpha$, $\alpha\leq h\leq \alpha\ell$, and $\alpha\ell \leq h \leq \alpha\ell+\epsilon$, we derive necessary and sufficient conditions for the existence of scalar/vector network MDS codes, linking them to classical MDS codes and covering Grassmannian codes (\cref{cor1,thm4,thm5,thm6,cor5,thm9,thm:vector_big,thm:thm vector}).
        \item Using refined greedy constructions and MRD-based designs, we obtain improved lower and upper bounds on the minimum field size (\cref{lem:upper bound of covering,lem:2lower bound of covering,lem:3lower bound of covering,lem:4lower bound,lem:greedy}, \cref{cor4,cor:cor5,cor:vector lower bound,cor:vector 1st case,cor8,cor:cor9}). Our upper bound significantly outperforms the state-of-the-art universal bound in \cite{GY2021}.
        \item We rigorously identify a family of generalized combination networks $(1,\ell)$-$\mathcal{N}_{2\ell,r,2\ell+1}$ with zero MDS gap, demonstrating that vector coding offers no field-size advantage over scalar coding in these configurations (\cref{thm:thm10}).
    \end{itemize}
    \item For Zosin–Khuller networks $Z_{h,m,N}$:
    \begin{itemize}
        \item We introduce a hypergraph homomorphism approach to characterize vector network MDS codes, yielding a necessary and sufficient condition for their existence (\cref{thm15}, \cref{thm:thm14}).
        \item For $h=2$, using hypergraph homomorphism and chromatic number arguments, we derive a lower bound on the minimum field size of vector MDS codes, which in the scalar case yields \(q^{\mathrm{MDS}}(Z_{2,m,N})\geq \psi\left(\frac{\binom{N}{m}}{\left\lfloor\frac{N}{m}\right\rfloor}-1\right)\), where \(\psi(x)\) denotes the smallest prime power that is greater than or equal to \(x\), strictly improving the prior bound from \cite{WS2024} (\cref{thm17}). 
        \item For parameters satisfying $h<N/(N-m)$, applying subset-intersection arguments, we obtain a lower bound $q_{v}^{\mathrm{MDS}}(Z_{h,m,N})\geq \psi\left(\binom{N}{m}-h+1\right)$, which also improves the bound in \cite{WS2024} (\cref{thm:thm13}).
        \item For general $h$, we provide upper bounds on the MDS gap between optimal scalar and vector solutions (\cref{cor:gap}).
        \end{itemize}
\end{enumerate}
Collectively, these results complete the theoretical characterization of feasible MDS coding for generalized combination networks and clarify the inherent performance limits of vector network coding in reducing alphabet size.

The remainder of this paper is organized as follows. Section II provides the basic notation and definitions used throughout this paper. Section III derives bounds for scalar network MDS codes in the network $(\epsilon,\ell)$-$\mathcal{N}_{h,r,\alpha\ell+\epsilon}$. Section IV extends the scalar results to the vector setting and presents a subclass of networks where the MDS gap is explicitly shown to be zero. Section V investigates vector network MDS codes in Zosin–Khuller networks and proposes an upper bound on the gap for this network. Finally, Section VI concludes the paper with a summary and discussion of open problems.

\section{Preliminaries}
Let $\mathcal{G}=(\mathcal{V}, \mathcal{E})$ denote a finite directed acyclic graph with vertex set $\mathcal{V}$ and edge set $\mathcal{E}$, where multiple parallel edges between any two vertices are allowed. An edge \(e=(i, j) \in \mathcal{E}\) represents a link from node $i$ to node $j$, and we use $\operatorname{tail}(e)$ and $\operatorname{head}(e)$ to denote its tail node $i$ and head node $j$, respectively. Additionally, \(\operatorname{In}(i)\) denotes the set of incoming edges of node $i$, and $\operatorname{Out}(i)$ denotes the set of outgoing edges of node $i$.  

In this paper, we consider single-source multicast networks over $\mathcal{G}$, which can be represented by a triple $\mathcal{N}=(\mathcal{G},\sigma,R)$. Here, $\sigma \in \mathcal{V}$ is the single source node, and $R \subset \mathcal{V}\setminus\{\sigma\}$ is a non-empty set of sink nodes. The source node $\sigma$ broadcasts $h$ independent source messages $x_1, x_2, \cdots, x_h$ to all sink nodes in $R$. Without loss of generality, assume that the source node has no incoming edges, each sink node has no outgoing edges, and every link has unit capacity. To facilitate subsequent analysis, we introduce $h$ imaginary source edges connected to $\sigma$, denoted by $d_1, d_2, \cdots, d_h$. This allows us to define $\operatorname{In}(\sigma)$ as $\{d_1, d_2, \cdots, d_h\}$, where the $i$-th source message $x_i$ is transmitted to $\sigma$ via the imaginary edge $d_i$.

In such a network, the coding operation proceeds as follows. For each edge $e \in \mathcal{E}$, the node $\operatorname{tail}(e)$ first receives all packets from its incoming edges $\operatorname{In}(\operatorname{tail}(e))$; the edge $e$ then transmits a function of these received packets. Network coding is called linear if all such functions are linear. This paper focuses on linear network coding. Let $\mathbb{F}_q$ denote the finite field of order $q$. We further distinguish two kinds of coding schemes: scalar network coding and vector network coding. In the former case, both the source messages and the packets carried by edges are elements of $\mathbb{F}_q$; in the latter case, both the messages and the packets are vectors of length $t$ over $\mathbb{F}_q$, where $t$ denotes a positive integer.

\subsection{Scalar Network Error-Correction Coding}

For a scalar linear network code $\mathcal{C}$, the packet $u_e$ carried by $e$ can be calculated as
\[
u_e = \sum_{d \in \operatorname{In}(\operatorname{tail}(e))} u_d k_{d,e},
\]
where $k_{d,e} \in \mathbb{F}_q$, and $(k_{d,e}: d \in \operatorname{In}(\operatorname{tail}(e)))^T$ is called the local encoding vector of the edge $e$. Since $u_e$ is a linear combination of the messages $x_1, x_2, \cdots, x_h$, there exists a column vector $f_e \in \mathbb{F}_q^h$ such that
\[
u_e = (x_1, x_2, \cdots, x_h) \cdot f_e,
\]
where $f_e$ is called the global encoding vector of $e$ and is determined by the local encoding vectors. For each sink node $\gamma \in R$, denote
\[
F(\gamma) \triangleq (f_e)_{e \in \operatorname{In}(\gamma)} \in \mathbb{F}_q^{h \times |\operatorname{In}(\gamma)|},
\]
and call it the decoding matrix at $\gamma$. Then $\gamma$ receives the packets $(u_e: e \in \operatorname{In}(\gamma)) = (x_1, x_2, \cdots, x_h) \cdot F(\gamma)$ and can decode all messages if and only if $F(\gamma)$ has full row rank, i.e., $\operatorname{rank}(F(\gamma)) = h$. We say that $\mathcal{C}$ is decodable if for each sink node $\gamma \in R$, the matrix $F(\gamma)$ has full row rank.

If an error occurs on edge $e$, the head node $\operatorname{head}(e)$ receives a corrupted packet $\tilde{u}_e = u_e + z_e$, where $u_e \in \mathbb{F}_q$ is the packet that is supposed to be transmitted over $e$ and $z_e \in \mathbb{F}_q$ is the additive error. We treat $z_e$ as the error message and call the vector $\mathbf{z}=(z_e)_{e\in\mathcal{E}}$ the error message vector. An error pattern $\rho \subseteq \mathcal{E}$ is a set of links in which errors occur. An error message vector $\mathbf{z}$ matches an error pattern $\rho$ if $z_e=0$ for all $e\notin \rho$, namely, $\operatorname{supp}(\mathbf{z})\subseteq \rho$, which is denoted by $\mathbf{z}\in\rho$. The corrupted packet $\tilde{u}_e$ can be calculated as
\[
\tilde{u}_e = \sum_{d \in \operatorname{In}(\operatorname{tail}(e))} \tilde{u}_d k_{d,e} + z_e.
\]
Similarly, $\tilde{u}_e$ can also be expressed as a linear combination of both source messages and error messages. Let $\mathbf{x}=(x_1,x_2,\cdots,x_h)$ be the source message vector and $\mathbf{z}=(z_e:e\in\mathcal{E})$ be the error message vector. Then
\[
\tilde{u}_e = (\mathbf{x}, \mathbf{z}) \cdot \tilde{f}_e,
\]
where the vector $\tilde{f}_e \in \mathbb{F}_q^{h+|\mathcal{E}|}$ is called the extended global encoding vector of $e$. For each sink node $\gamma \in R$, denote
\[
\tilde{F}(\gamma) \triangleq (\tilde{f}_e)_{e \in \operatorname{In}(\gamma)} \in \mathbb{F}_q^{(h+|\mathcal{E}|) \times |\operatorname{In}(\gamma)|},
\]
and call it the extended global encoding matrix at $\gamma$. This matrix admits a natural block decomposition, separating source-related and error-related components, i.e.,
\[
\tilde{F}(\gamma) = \begin{pmatrix} F(\gamma) \\ G(\gamma) \end{pmatrix},
\]
where $F(\gamma) \in \mathbb{F}_q^{h \times |\operatorname{In}(\gamma)|}$ is the original decoding matrix of $\gamma$, and $G(\gamma) \in \mathbb{F}_q^{|\mathcal{E}| \times |\operatorname{In}(\gamma)|}$ is the error-related submatrix.

Assume $\mathcal{C}$ is decodable; otherwise, even with no errors, at least one sink node would fail to decode the source messages. For any two vectors $y_{\gamma}, y_{\gamma}^{\prime} \in \mathbb{F}_{q}^{|\operatorname{In}(\gamma)|}$, the distance between them is defined as
\[
d^{\gamma}(y_{\gamma}, y_{\gamma}^{\prime}) = \min \{ |\rho| : \exists \text{ an error vector } \mathbf{z} \text{ such that } y_{\gamma} - y_{\gamma}^{\prime} = \mathbf{z} \cdot G(\gamma) \text{ and } \operatorname{supp}(\mathbf{z}) \subseteq \rho \}.
\]
One can verify that the pair $(\mathbb{F}_{q}^{|\operatorname{In}(\gamma)|}, d^{(\gamma)}(\cdot, \cdot))$ forms a metric space. Building on this metric, the minimum distance of the code $\mathcal{C}$ at the sink node $\gamma$ is defined as
\begin{equation}\label{eq:definition}
d_{\min}(\mathcal{C}, \gamma) \triangleq \min_{\substack{x, x' \in \mathbb{F}_{q}^{h} \\ x \neq x'}} d^{(\gamma)}(x \cdot F(\gamma), x' \cdot F(\gamma)).
\end{equation}
This minimum distance directly characterizes the error-correcting capability of $\mathcal{C}$ at $\gamma$; that is, the sink node $\gamma$ can correct up to $\lfloor (d_{\min}(\mathcal{C}, \gamma) - 1) / 2 \rfloor$ link errors.

For two nodes $u$ and $v$, a cut separating $v$ from $u$ is a set of edges whose removal leaves no path from $u$ to $v$. The capacity of a cut is the number of edges in the set, and the minimum cut capacity separating $v$ from $u$ is the minimum such value over all cuts. Let $C_{\gamma}$ be the minimum cut capacity separating $\gamma$ from $\sigma$. In \cite[Theorem 2]{GF2013}, a Singleton-type bound on the minimum distance $d_{\min}(\mathcal{C}, \gamma)$ is presented:
\begin{equation}\label{eq:Singleton}
d_{\min}(\mathcal{C}, \gamma) \leq C_{\gamma} - h + 1.
\end{equation}
A decodable code $\mathcal{C}$ is called a network maximum distance separable (MDS) code if it achieves the bound in \eqref{eq:Singleton} with equality for all sink nodes. Such codes are optimal in error correction. For a given network $\mathcal{N}$, let $\qs$ denote the minimum field size required to construct a scalar network MDS code.

\subsection{Vector Network Error-Correction Coding}

For a vector linear network code $\mathcal{C}$, the packet $\tilde{u}_e \in \mathbb{F}_q^t$ transmitted over edge $e$ can be expressed as
\[
\tilde{u}_e = \sum_{d \in \operatorname{In}(\operatorname{tail}(e))} \tilde{u}_d k_{d,e} + z_e,
\]
where $k_{d,e} \in \mathbb{F}_q^{t \times t}$ is the local encoding coefficient, and $z_e \in \mathbb{F}_q^t$ is the error on edge $e$. Let $\mathbf{x} \in \mathbb{F}_q^{th}$ be the source message vector, and let $\mathbf{z} \in \mathbb{F}_q^{t|\mathcal{E}|}$ be the error message vector. Then the corrupted packet $\tilde{u}_e$ can be rewritten as a linear combination of the source message vector $\mathbf{x}$ and the error message vector $\mathbf{z}$, i.e.,
\[
\tilde{u}_e = (\mathbf{x}, \mathbf{z}) \cdot \tilde{f}_e,
\]
where $\tilde{f}_e \in \mathbb{F}_q^{t(h+|\mathcal{E}|) \times t}$ is the extended global encoding vector of $e$. For each sink node $\gamma \in R$, the extended global encoding matrix of $\gamma$ has the form:
\[
\tilde{F}(\gamma) \triangleq (\tilde{f}_e)_{e \in \operatorname{In}(\gamma)} = \begin{pmatrix} F(\gamma) \\ G(\gamma) \end{pmatrix},
\]
where $F(\gamma) \in \mathbb{F}_q^{th \times t|\operatorname{In}(\gamma)|}$ is the decoding matrix of $\gamma$, and $G(\gamma) \in \mathbb{F}_q^{t|\mathcal{E}| \times t|\operatorname{In}(\gamma)|}$ is the error-related matrix.

For any two vectors $y_{\gamma}, y_{\gamma}^{\prime} \in \mathbb{F}_{q}^{t|\operatorname{In}(\gamma)|}$, define their distance as
\[
d^{\gamma}(y_{\gamma}, y_{\gamma}^{\prime}) = \min \{ |\rho| : \exists \text{ an error vector } \mathbf{z} \text{ such that } y_{\gamma} - y_{\gamma}^{\prime} = \mathbf{z} \cdot G(\gamma) \text{ and } \operatorname{supp}(\mathbf{z}) \subseteq \rho \}.
\]
The minimum distance of a decodable vector code $\mathcal{C}$ at the sink node $\gamma$ is defined as
\begin{equation}
d_{\min}(\mathcal{C}, \gamma) \triangleq \min_{\substack{x, x' \in \mathbb{F}_{q}^{th} \\ x \neq x'}} d^{(\gamma)}(x \cdot F(\gamma), x' \cdot F(\gamma)).
\end{equation}
This minimum distance $d_{\min}(\mathcal{C}, \gamma)$ is similarly constrained by a Singleton-type bound (see \cite[Lemma 15]{WS2024}):
\begin{align}\label{eq: v Singleton}
d_{\min}(\mathcal{C}, \gamma) \leq C_{\gamma} - h + 1.
\end{align}
A decodable vector code $\mathcal{C}$ is called a vector network MDS code if $d_{\min}(\mathcal{C}, \gamma) = C_{\gamma} - h + 1$ for every $\gamma \in R$. For a given network $\mathcal{N}$, let $q_{v}^{\mathrm{MDS}}(\mathcal{N})$ denote the minimum effective field size for vector network MDS codes, defined as the minimum value of $q^{t}$ such that a vector network MDS code over $\mathbb{F}_{q}^{t}$ exists. Note that a scalar network code over $\mathbb{F}_{q}$ can be equivalently viewed as a vector network code over $\mathbb{F}_{q}$ with vector length $t = 1$. This implies $q_{v}^{\mathrm{MDS}}(\mathcal{N}) \leq q^{\mathrm{MDS}}(\mathcal{N})$. Furthermore, vector network coding offers greater flexibility in choosing encoding coefficients, so $q_{v}^{\mathrm{MDS}}(\mathcal{N})$ may be strictly smaller than $q^{\mathrm{MDS}}(\mathcal{N})$. To quantify this advantage, define the MDS gap of the network $\mathcal{N}$ as
\[
\mathrm{gap}^{\mathrm{MDS}}(\mathcal{N}) \triangleq q^{\mathrm{MDS}}(\mathcal{N}) - q_{v}^{\mathrm{MDS}}(\mathcal{N}).
\]

\section{Scalar Error-Correcting Codes for Generalized Combination Networks}

In this section, we focus on a generalization of the combination network $\N$, denoted by $\GNS$. This network has played a pivotal role in demonstrating the advantages of vector coding over scalar coding, as shown in \cite{EW2018,LW2021}. As a special case, the combination network $\N$ has been extensively studied \cite{CC2020,ML2012,XM2007}, including research on error correction \cite{WS2024}. Here we extend this line of research to investigate error correction in the generalized combination network $\GNS$.

The generalized combination network $\GNS$, which is shown in \cref{fig:GNmodel}, is a network with three layers. In the first layer, there is a single source node $\sigma$, which multicasts $h$ independent source messages to all sink nodes. In the middle layer, there are $r$ nodes, which are indexed by the elements of $[r] \triangleq \{1, 2, \dots, r\}$. The source node $\sigma$ connects to each middle-layer node via $\ell$ parallel links. For any $\alpha$-subset of the middle-layer nodes, where $\alpha = (s - \epsilon) / \ell$, there is exactly one sink node connected to these $\alpha$ middle-layer nodes via $\ell$ parallel links. In total, there are $\binom{r}{\alpha}$ sink nodes, each indexed by a unique $\alpha$-subset of $[r]$ and requiring all $h$ source messages. The set of all sink nodes is denoted by $\binom{[r]}{\alpha}$. Additionally, the source node $\sigma$ directly connects to each sink node via $\epsilon$ parallel links. For each sink node, the total number of incoming links is $s = \alpha\ell + \epsilon$. The total number of links in the network $\GNS$ is $L = r\ell + \binom{r}{\alpha}(\alpha\ell + \epsilon)$. In what follows, we will characterize the generalized combination network $\GNS$ using the parameter $\alpha$ instead of $s$, since $\alpha$ more naturally captures the network's combinatorial structure.

\begin{figure}[h]
    \centering
\def\x{0.55}
\begin{tikzpicture}
  [font=\normalsize,>=stealth',
  mycircle/.style={circle, draw=TUMGray, very thick, text width=.1em, minimum height=1.5em, text centered},
  mylink/.style={color=black, thick, ->, bend right=5},
  mylink_r/.style={color=black, thick, ->, bend left=5},
  myarc/.style={color=TUMOrange, thick},
  myarc1/.style={color=TUMGreen, thick},
  myarc2/.style={color=blue, thick},
 ]
  \coordinate (Source node) at (0*\x,4*\x);
  {\node[mycircle,label=above:{$\sigma$}] (Source node) {};}
  \node[draw=none,above right = \x*3pt and \x*5pt of Source node] {$\mathbf{x}_{1},\mathbf{x}_{2},\cdots,\mathbf{x}_{h}$};
  \node[mycircle,below left = \x*40pt and \x*100pt of Source node] (M0) {};
  \node[mycircle,right = \x*20pt of M0] (M1) {};
  \node[mycircle,right = \x*20pt of M1] (M2) {};
  \node[draw=none,right = \x*20pt of M2] (M3) {$\cdots$};
  \node[mycircle,right = \x*20pt of M3] (M4) {};
  \node[mycircle,right = \x*20pt of M4]
  (M5) {};
  \node[draw=none,right= 15pt of M5]  {$r$ middle layer nodes};
   \path[] (Source node) edge[mylink] (M0.north)
  edge[mylink]  (M1.north)
  edge[mylink] node [midway] (link1)
  {} (M2.north)
  edge[mylink]  (M4.north)
  edge[mylink] (M5.north);
  \path[] (Source node) edge[mylink_r] (M0.north)
  edge[mylink_r] (M1.north)
  edge[mylink_r] (M2.north)
  edge[mylink_r] (M4.north)
  edge[mylink_r] (M5.north);
  \node[draw=none,right= 5pt of link1]  {$\cdots$};
  \node[mycircle,below left = \x*60pt and \x*20pt of M0] (S0) {};
  \node[draw=none,right = \x*100pt of S0] (S1) {$\cdots$};
  \node[mycircle,right= 100pt of S1, label=right:{$\binom{r}{\alpha}$ sink nodes}] (S2) {};
   \path[] (M0) edge[mylink] (S0.north);
   \path[] (M0) edge[mylink_r] (S0.north);
   \path[] (M1) edge[mylink] (S0.70);
   \path[] (M1) edge[mylink_r] node [pos=0.3] (Tlink1) {} (S0.70);
   \path[] (M3) edge[mylink] (S0.45);
   \path[] (M3) edge[mylink_r] (S0.45);
   \node[draw=none,right = \x*3pt of Tlink1]  {$\cdots$};
  
   \path[] (M5) edge[mylink] node [pos=0.3] (Tlink2) {} (S2.north);
   \path[] (M5) edge[mylink_r] (S2.north);
   \path[] (M3) edge[mylink] (S2.110);
   \path[] (M3) edge[mylink_r] (S2.110);
   \node[draw=none,left = \x*0pt of Tlink2] {$\cdots$};
   \draw[mylink, bend right=50] (Source node) to (S0.100);
   \draw[mylink, bend right=60] (Source node) to (S0.100);
   \draw[mylink, bend left=50] (Source node) to (S2.80);
   \draw[mylink, bend left=60] (Source node) to (S2.80);
    \draw [myarc] ($(Source node)-(\x*90pt,\x*25pt)$) arc (160:230:\x*15pt);
    \node[draw=none, myarc] at ($(M0.45)+(\x*16pt, \x*4pt)$) {$\ell$};
    \draw [myarc] ($(Source node)-(\x*50pt,\x*25pt)$) arc (190:250:\x*15pt);
    \draw [myarc] ($(Source node)-(\x*18pt,\x*35pt)$) arc (240:290:\x*15pt);
    \draw [myarc] ($(Source node)+(\x*60pt,\x*-35pt)$) arc (270:330:\x*15pt);
    \draw [myarc] ($(Source node)+(\x*100pt,\x*-40pt)$) arc (310:370:\x*15pt);
    
    \draw [myarc] ($(M0)-(\x*30pt,\x*30pt)$) arc (220:280:\x*15pt);
    \draw [myarc] ($(M1)-(\x*45pt,\x*25pt)$) arc (190:250:\x*15pt);
    \draw [myarc] ($(M3)-(\x*90pt,\x*25pt)$) arc (160:230:\x*15pt);
    \draw [myarc] ($(M3)+(\x*55pt,\x*-35pt)$) arc (270:340:\x*15pt);
    \draw [myarc] ($(M5)+(\x*13pt,\x*-45pt)$) arc (290:350:\x*15pt);

    \draw [myarc1] ($(Source node)-(\x*90pt,\x*-5pt)$) arc (170:230:\x*15pt);
    \node[draw=none, myarc1] at ($(Source node)-(\x*100pt, \x*-10pt)$) {$\epsilon$};
    \draw [myarc1] ($(Source node)+(\x*100pt,\x*-5pt)$) arc (280:340:\x*15pt);
    
    \draw [myarc2] ($(Source node)-(\x*150pt,\x*95pt)$) arc (210:270:\x*50pt);
    \node[draw=none, myarc2] at ($(Source node)-(\x*100pt, \x*125pt)$) {$\alpha\ell$};

    \draw [myarc2] ($(Source node)+(\x*130pt,\x*-115pt)$) arc (270:330:\x*50pt);
    \node[draw=none, myarc2] at ($(Source node)+(\x*120pt, \x*-125pt)$) {$\alpha\ell$};
\end{tikzpicture}
\caption{A depiction of the $\GN$ network.}
   \label{fig:GNmodel}
 \end{figure} 

When considering error-correcting codes over the network $\GN$, we may assume without loss of generality that each middle node simply forwards the received packets to sink nodes. We now prove this claim. Let $t$ be a positive integer. Consider an arbitrary vector network MDS code $\mathcal{C}$ over $\mathbb{F}_q^t$ on the network $\GN$. For each middle node $i \in [r]$ and each of the $\ell$ parallel links connecting $\sigma$ to this middle node, let $A_{ij} \in \mathbb{F}_q^{ht \times t}$ denote the local encoding vector for the $j$-th link. For each $i \in [r]$, concatenate these matrices to form
\[
A_i = \begin{pmatrix} A_{i1} & A_{i2} & \cdots & A_{i\ell} \end{pmatrix} \in \mathbb{F}_q^{ht \times \ell t}.
\]
Consider a sink node $\gamma = \{i_1, i_2, \dots, i_\alpha\} \subseteq [r]$. For each $k \in [\alpha]$ and each $j \in [\ell]$, let $B_{i_k}^{j} \in \mathbb{F}_q^{\ell t \times t}$ denote the local encoding vector for the $j$-th link which connects the middle node $i_k$ and $\gamma$. For each $k \in [\alpha]$, concatenate these matrices to form
\[
B_{i_k} = \begin{pmatrix} B_{i_k}^{1} & B_{i_k}^{2} & \cdots & B_{i_k}^{\ell} \end{pmatrix} \in \mathbb{F}_q^{\ell t \times \ell t}.
\]
For each $j \in [\epsilon]$, let $C_j \in \mathbb{F}_q^{ht \times t}$ denote the local encoding vector for the $j$-th link which connects the source node $\sigma$ and $\gamma$. Concatenate these matrices to form
\[
C = \begin{pmatrix} C_1 & C_2 & \cdots & C_{\epsilon} \end{pmatrix} \in \mathbb{F}_q^{ht \times \epsilon t}.
\]
To simplify the extended global encoding matrix of $\gamma$, define
\(
A = \begin{pmatrix} A_{i_1} & A_{i_2} & \cdots & A_{i_\alpha} \end{pmatrix}
\)
and the block-diagonal matrix
\[
\operatorname{diag}(B) =
\begin{pmatrix}
B_{i_1} & \mathbf{0} & \cdots & \mathbf{0} \\
\mathbf{0} & B_{i_2} & \cdots & \mathbf{0} \\
\vdots & \vdots & \ddots & \vdots \\
\mathbf{0} & \mathbf{0} & \cdots & B_{i_\alpha}
\end{pmatrix}.
\]
Then at the sink node $\gamma$, the extended global encoding matrix of the code $\mathcal{C}$ is
\[
\begin{pmatrix} F(\gamma) \\ G(\gamma) \end{pmatrix}
=
\begin{pmatrix}
A \cdot \operatorname{diag}(B) & C \\
\operatorname{diag}(B) & \mathbf{0} \\
I_{\alpha \ell t} & \mathbf{0} \\
\mathbf{0} & I_{\epsilon t} \\
\mathbf{0} & \mathbf{0}
\end{pmatrix},
\]
where $I_n \in \mathbb{F}_q^{n \times n}$ is the identity matrix, and $\mathbf{0}$ denotes the all-zero matrix of appropriate dimensions.

We now show that if we modify $\mathcal{C}$ to a new code $\mathcal{C}^{\prime}$ where all middle nodes only route packets, then $\mathcal{C}^{\prime}$ remains a vector network MDS code over $\mathbb{F}_q^t$. We prove this via contradiction. Suppose $\mathcal{C}^{\prime}$ is not a vector network MDS code. Then there exists at least one sink node $\gamma = \{i_1, i_2, \dots, i_\alpha\}$ such that $\mathcal{C}^{\prime}$ does not achieve the bound in \eqref{eq: v Singleton} at $\gamma$. The extended global encoding matrix of $\mathcal{C}^{\prime}$ at $\gamma$ is
\[
\begin{pmatrix} F^{\prime}(\gamma) \\ G^{\prime}(\gamma) \end{pmatrix}
=
\begin{pmatrix}
A & C \\
I_{\alpha \ell t} & \mathbf{0} \\
I_{\alpha \ell t} & \mathbf{0} \\
\mathbf{0} & I_{\epsilon t} \\
\mathbf{0} & \mathbf{0}
\end{pmatrix}.
\]
Since each row of $G^{\prime}(\gamma)$ has at most one nonzero entry, there exist two distinct source message vectors $\mathbf{x}^1, \mathbf{x}^2 \in \mathbb{F}_q^{ht}$ and an error vector $\mathbf{z} \in (\mathbb{F}_q^t)^{\alpha\ell+\epsilon}$ with Hamming weight $\mathrm{wt}_H(\mathbf{z}) < \alpha\ell + \epsilon - h + 1$ such that
\[
\mathbf{x}^1 \cdot \begin{pmatrix} A & C \end{pmatrix} - \mathbf{x}^2 \cdot \begin{pmatrix} A & C \end{pmatrix} = \mathbf{z} \cdot I_{(\alpha\ell+\epsilon)t}.
\]
Partition $\mathbf{z}$ to match the block structure of $\begin{pmatrix} A & C \end{pmatrix}$: let $\mathbf{z} = \begin{pmatrix} \mathbf{z}_1 & \mathbf{z}_2 \end{pmatrix}$, where $\mathbf{z}_1 \in (\mathbb{F}_q^t)^{\alpha\ell}$ and $\mathbf{z}_2 \in (\mathbb{F}_q^t)^{\epsilon}$. Then we have
\begin{align*}
& \mathbf{x}^1 \cdot \begin{pmatrix} A \cdot \operatorname{diag}(B) & C \end{pmatrix} - \mathbf{x}^2 \cdot \begin{pmatrix} A \cdot \operatorname{diag}(B) & C \end{pmatrix} \\
&= \begin{pmatrix} \mathbf{z}_1 \cdot \operatorname{diag}(B) & \mathbf{z}_2  \end{pmatrix} \\
&= \mathbf{z} \cdot \begin{pmatrix} \operatorname{diag}(B) & \mathbf{0} \\ \mathbf{0} & I_{\epsilon t} \end{pmatrix}.
\end{align*}
This implies that in the original code $\mathcal{C}$, the distance between the received packets corresponding to $\mathbf{x}^1$ and $\mathbf{x}^2$ at the sink node $\gamma$ is less than $\alpha\ell + \epsilon - h + 1$, contradicting the fact that $\mathcal{C}$ is a vector network MDS code. Thus, any vector network MDS code can be transformed into one in which all middle nodes simply forward packets, over the same finite field.

\subsection{Bounds on the Size of Covering Grassmannian Codes}\label{sec:covering_bounds}

To establish the forthcoming bounds on the minimum field size for network MDS codes, we first recall the concept of covering Grassmannian codes and present several lemmas that provide upper and lower bounds on the size of such codes under specific parameters. These bounds will be used in the subsequent subsections to derive existence conditions and field-size bounds for network MDS codes.

\begin{definition}[Covering Grassmannian Codes \cite{EZ2019}] 
Let $\mathcal{G}(n, k)$ denote the set of all $k$-dimensional subspaces of $\mathbb{F}_q^n$. An $\alpha$-$(n, k, \delta)_{q}^{c}$ covering Grassmannian code $C$ is a subset of $\mathcal{G}(n, k)$ such that each $\alpha$-element subset of $C$ spans a subspace whose dimension is at least $\delta+k$ in $\mathbb{F}_q^n$. Additionally, let $B_q(n, k, \delta; \alpha)$ denote the maximum possible size of an $\alpha$-$(n, k, \delta)_{q}^{c}$ covering Grassmannian code.
\end{definition}

\begin{lemma}\label{lem:upper bound of covering}
Let $n$, $k$, and $\alpha$ be positive integers such that $\alpha \geq 2$ and $n \geq \alpha k$. Then the following recursive bound holds:
\[
B_q(n,k,(\alpha-1)k;\alpha) \leq B_q(n-k,k,(\alpha-2)k;\alpha-1) + 1,
\]
and consequently,
\[
B_q(n,k,(\alpha-1)k;\alpha) \leq \left\lfloor \frac{q^{n-(\alpha-2)k}-1}{q^k-1} \right\rfloor + \alpha - 2.
\]
\end{lemma}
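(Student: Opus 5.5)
The recursive bound comes from a quotient argument; the closed form then follows by iterating it down to the case $\alpha=2$.

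\textbf{Recursive step.} Let $C=\{V_1,\dots,V_M\}$ be an $\alpha$-$(n,k,(\alpha-1)k)_q^c$ code, so any $\alpha$ members span a space of dimension at least $\alpha k$. Such an $\alpha$-tuple lives in a sum of $\alpha$ subspaces of dimension $k$. Hence the sum is direct and has dimension exactly $\alpha k$.

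Fix $V_M$ and pass to the quotient $\pi:\mathbb{F}_q^n\to\mathbb{F}_q^n/V_M\cong\mathbb{F}_q^{n-k}$. For $i<M$ put $W_i=\pi(V_i)$.

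\emph{Injectivity of $\pi$ on each $V_i$.} Since $\alpha\ge 2$, we can choose an $\alpha$-subset containing $V_i$ and $V_M$; this needs $M\ge\alpha$, and otherwise the bound is trivial. That subset spans a direct sum, so $V_i\cap V_M=0$. Thus $\dim W_i=k$.

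\emph{The $W_i$ are distinct and form an $(\alpha-1)$-covering code.} Take any $\alpha-1$ indices $i_1,\dots,i_{\alpha-1}<M$. Then
\[
\dim(W_{i_1}+\dots+W_{i_{\alpha-1}})=\dim(V_{i_1}+\dots+V_{i_{\alpha-1}}+V_M)-k\ge \alpha k-k=(\alpha-1)k .
\]
This already forces the $W_i$ to be pairwise distinct when $\alpha-1\ge 2$. When $\alpha=2$, distinctness holds because $V_i+V_j+V_M$ need not be direct, but the required bound $\delta=0$ is trivial and distinctness is not needed.

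\emph{Handling $\alpha=2$ in the recursion.} Here the claimed recursion is $B_q(n,k,k;2)\le B_q(n-k,k,0;1)+1$. The quantity $B_q(n-k,k,0;1)$ is just the number of $k$-subspaces, $\binom{n-k}{k}_q$, so the argument above gives this case anyway. The genuine base case I will actually use is different: $B_q(m,k,k;2)$, pairwise trivially intersecting $k$-spaces (a partial spread). By counting nonzero vectors this is at most $\lfloor (q^m-1)/(q^k-1)\rfloor$.

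\emph{Conclusion of the recursion.} $\{W_i\}_{i<M}$ is an $(\alpha-1)$-$(n-k,k,(\alpha-2)k)_q^c$ code of size $M-1$. This gives the recursive inequality.

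\textbf{Closed form.} Apply the recursion $\alpha-2$ times, reducing to $\alpha'=2$ with ambient dimension $n-(\alpha-2)k$:
\[
B_q(n,k,(\alpha-1)k;\alpha)\le B_q(n-(\alpha-2)k,k,k;2)+\alpha-2 .
\]
The hypothesis $n\ge\alpha k$ guarantees that every intermediate ambient dimension $n-jk\ge(\alpha-j)k$, so each step is legitimate. The partial spread bound then gives exactly $\lfloor (q^{n-(\alpha-2)k}-1)/(q^k-1)\rfloor+\alpha-2$.

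\textbf{Main obstacle.} The delicate point is the quotient step: showing $\pi$ is injective on each $V_i$ and that the images are distinct. Both rely on the "exactly direct sum" observation, which needs $\alpha\ge2$ and at least $\alpha$ codewords.
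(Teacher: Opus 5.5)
For $\alpha\ge 3$, your recursion and your closed form are correct, and they follow the paper's route. The paper projects each $c_j$ along a fixed codeword $c_1$ onto a complement $W\cong\mathbb{F}_q^{n-k}$, which is the same as your quotient by $V_M$. It then iterates down to $\alpha=2$ and cites the partial-spread bound from \cite{QW2023}, which you prove directly by counting nonzero vectors. You are also more careful than the paper in checking that the images are pairwise distinct. That check is necessary, because the image family must have $M-1$ members.

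The one real error is your paragraph on $\alpha=2$. There the images need \emph{not} be distinct, and your sentence justifying distinctness contradicts itself. Distinctness \emph{is} needed, since you are bounding the size of a set of subspaces. In fact, the inequality $B_q(n,k,k;2)\le B_q(n-k,k,0;1)+1=\binom{n-k}{k}_q+1$ is false in general. For $n=2k$, a $k$-spread gives $B_q(2k,k,k;2)=q^k+1$, while the right-hand side is $\binom{k}{k}_q+1=2$. In your quotient picture, $\mathbb{F}_q^{2k}/V_M\cong\mathbb{F}_q^k$, and all $q^k$ images coincide with the whole space.

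So the recursive part of the statement holds only for $\alpha\ge 3$. That is also all the paper's proof establishes: it argues ``for any $i\ge 2$'' with $\alpha=i+1$, even though the lemma is phrased for $\alpha\ge2$. You should drop the $\alpha=2$ claim rather than try to prove it. The closed form still holds for every $\alpha\ge 2$. Your iteration only uses steps with $\alpha'\ge 3$, and the case $\alpha=2$ is exactly the partial-spread bound.
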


\begin{IEEEproof}
For any $i \ge 2$, let $C$ be an $(i+1)$-$(n, k, ik)_q^c$ covering Grassmannian code achieving the maximum cardinality, i.e., $|C| = B_q(n, k, ik; i+1)$. Fix an arbitrary codeword $c_1 \in C$. There exists an $(n-k)$-dimensional subspace $W \subseteq \mathbb{F}_q^n$ such that $\mathbb{F}_q^n = c_1 \oplus W$ and $c_1 \cap W = \{\mathbf{0}\}$. Let $\varphi: W \to \mathbb{F}_q^{n-k}$ be a linear isomorphism. For each $c_j \in C \setminus \{c_1\}$, define
\[
c_j' = \left\{ \varphi(\vec{v}_2) : \vec{v}_1 + \vec{v}_2 \in c_j \text{ such that } \vec{v}_1 \in c_1 \text{ and } \vec{v}_2 \in W \right\}, \quad 2 \le j \le B_q(n, k, ik; i+1).
\]
It follows that $\dim(c_j') = k$ and $c_j' \subseteq \mathbb{F}_q^{n-k}$. One can verify that the collection $\{c_j' : 2 \le j \le B_q(n, k, ik; i+1)\}$ forms an $i$-$(n-k, k, (i-1)k)_q^c$ covering Grassmannian code. Therefore, we obtain
\[
B_q(n, k, ik; i+1) \le B_q(n-k, k, (i-1)k; i) + 1, \quad \forall \ i \ge 2.
\]
Iterating this inequality yields
\begin{align}\label{eq: size of covering}
B_q(n, k, (\alpha-1)k; \alpha) \leq B_q(n-(\alpha-2)k, k, k; 2) + (\alpha-2).
\end{align}
From \cite{QW2023}, the following upper bound for $B_q(n-(\alpha-2)k, k, k; 2)$ is known:
\[
B_q(n-(\alpha-2)k, k, k; 2) \leq \left\lfloor \frac{q^{n-(\alpha-2)k} - 1}{q^k - 1} \right\rfloor.
\]
Substituting this into \eqref{eq: size of covering}, we conclude that
\[
B_q(n, k, (\alpha-1)k; \alpha) \leq \left\lfloor \frac{q^{n-(\alpha-2)k} - 1}{q^k - 1} \right\rfloor + \alpha-2.
\]
\end{IEEEproof}

In the special case where \(\alpha = 2\), the covering Grassmannian code reduces to the classical subspace code setting. The following result, adapted from \cite{EV2011}, gives a lower bound on the maximum size of such codes.

\begin{lemma}{\cite[Theorem 11]{EV2011}}\label{lem:2lower bound of covering}
Let $n$, $k$ be positive integers such that $n\geq 2k$. Then the maximum size of a $2$-$(n,k,k)_{q}^{c}$ covering Grassmannian code satisfies:
\[
B_{q}(n, k, k; 2)\geq q^{k} \lfloor \frac{q^{n-k}-1}{q^k-1} \rfloor +1.
\]
\end{lemma}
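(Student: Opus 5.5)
The statement is a lower bound on $B_q(n,k,k;2)$, so I would give an explicit code. First I reformulate the condition. For $\alpha=2$ and $\delta=k$, a $2$-$(n,k,k)_q^c$ covering Grassmannian code is a family of $k$-dimensional subspaces in which any two distinct members $U,V$ satisfy $\dim(U+V)\ge 2k$, that is, $U\cap V=\{\mathbf{0}\}$. So I need a partial $k$-spread in $\mathbb{F}_q^n$ with at least $q^k\lfloor (q^{n-k}-1)/(q^k-1)\rfloor+1$ elements.

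The construction combines a lifted MRD code with a recursive partial spread.
\begin{itemize}
\item Write $\mathbb{F}_q^n=\mathbb{F}_q^k\oplus\mathbb{F}_q^{n-k}$. Since $n-k\ge k$, take a Gabidulin (MRD) code $\mathcal{M}\subseteq\mathbb{F}_q^{k\times(n-k)}$ with minimum rank distance $k$. It has $q^{n-k}$ elements.
\item Lift each $M\in\mathcal{M}$ to $U_M=\operatorname{rowspace}(I_k\mid M)$. Two lifts $U_M,U_{M'}$ meet trivially: a common vector $x(I_k\mid M)=y(I_k\mid M')$ forces $x=y$ and $x(M-M')=0$, and $M-M'$ has full row rank $k$, so $x=0$.
\item Every lift also meets $W=\{0\}\oplus\mathbb{F}_q^{n-k}$ trivially, because any nonzero vector of $U_M$ has a nonzero first block.
\item Therefore I can add any partial $k$-spread $\mathcal{S}$ of $W\cong\mathbb{F}_q^{n-k}$, provided $n-k\ge k$. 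This gives
\[
B_q(n,k,k;2)\ \ge\ q^{n-k}+B_q(n-k,k,k;2).
\]
\end{itemize}

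Next I unroll the recursion. Write $n=mk+\rho$ with $0\le\rho<k$ and $m\ge2$. Applying the recursion repeatedly until the ambient dimension drops to $k+\rho<2k$, where one further subspace can be taken, gives
\[
B_q(n,k,k;2)\ \ge\ q^{n-k}+q^{n-2k}+\cdots+q^{k+\rho}+1.
\]
This sum has $m-1$ powers. I then compare it with the target by computing the floor: $q^{n-k}-1=q^{\rho}(q^{(m-1)k}-1)+(q^{\rho}-1)$, and $q^{\rho}-1<q^k-1$. Hence
\[
\left\lfloor \frac{q^{n-k}-1}{q^k-1}\right\rfloor=q^{\rho}\sum_{j=0}^{m-2}q^{jk}.
\]
Multiplying by $q^k$ gives $\sum_{j=1}^{m-1}q^{\rho+jk}$, which is exactly the sum of the $m-1$ powers above. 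Adding the final $+1$ gives the claimed bound.

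The main obstacle is this final bookkeeping: the recursion must terminate with exactly one extra subspace, and the floor must reproduce precisely the powers $q^{\rho+jk}$. I would also check the edge case $n-k<2k$, where the partial spread in $W$ is a single $k$-subspace.
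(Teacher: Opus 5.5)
Your proof is correct, and there is no gap. The paper gives no argument of its own for this lemma. It simply imports the bound from Etzion--Vardy, where it is the classical partial-spread lower bound, usually written as $\frac{q^n-q^{k+\rho}}{q^k-1}+1$ with $n=mk+\rho$, $0\le\rho<k$.

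Your route is the standard proof of that cited result, not a different one. You lift a $k\times(n-k)$ MRD code with rank distance $k$ (size $q^{n-k}$), adjoin a partial spread of the complementary coordinate subspace $W$, and so obtain the recursion $B_q(n,k,k;2)\ge q^{n-k}+B_q(n-k,k,k;2)$. Unrolled down to ambient dimension $k+\rho<2k$, this gives exactly $\sum_{j=1}^{m-1}q^{\rho+jk}+1$.

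Your floor computation $\lfloor (q^{n-k}-1)/(q^k-1)\rfloor=q^{\rho}\sum_{j=0}^{m-2}q^{jk}$ then identifies the classical form with the paper's form $q^k\lfloor (q^{n-k}-1)/(q^k-1)\rfloor+1$. The paper leaves this identification implicit, so your write-up is a self-contained version of what the paper only cites.
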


For two matrices $A, B \in \mathbb{F}_q^{m \times k}$, the rank distance between them is defined as $d(A,B) := \operatorname{rank}(A - B)$. A linear subspace $\mathcal{C} \subseteq \mathbb{F}_q^{m \times k}$ is called a linear rank-metric code with parameters $[m \times k, K, d]$ if it has dimension $K$ over $\mathbb{F}_q$ and satisfies $d(C_1, C_2) \ge d$ for all distinct $C_1, C_2 \in \mathcal{C}$. Any such code satisfies the following Singleton-type bound in rank-metric coding theory:
\begin{equation}\label{eq:rank_singleton}
K \le \min\left\{m(k - d + 1),\ k(m - d + 1)\right\}.
\end{equation}
Codes achieving this bound with equality are known as maximum rank distance (MRD) codes, and are known to exist for all admissible parameters \cite{D1978}.

Let $f(x) = f_0 + f_1x + \dots + f_{k-1}x^{k-1} + x^k \in \mathbb{F}_q[x]$ be a primitive polynomial of degree $k$ over $\mathbb{F}_q$. Its companion matrix is given by
\[
D = \begin{pmatrix}
0 & 0 & \dots & 0 & -f_0 \\
1 & 0 & \dots & 0 & -f_1 \\
0 & 1 & \dots & 0 & -f_2 \\
\vdots & \vdots & \ddots & \vdots & \vdots \\
0 & 0 & \dots & 1 & -f_{k-1}
\end{pmatrix}.
\]
Define the set of matrices generated by polynomials in \(D\) as
\[
\mathscr{C} = \{a_0 I_k + a_1 D + \dots + a_{k-1} D^{k-1} \mid a_0, a_1, \dots, a_{k-1} \in \mathbb{F}_q\},
\]
where $I_k$ denotes the $k \times k$ identity matrix. Since $f(x)$ is primitive, $\mathscr{C}$ forms a matrix field isomorphic to $\mathbb{F}_{q^k}$. Every nonzero matrix in $\mathscr{C}$ is invertible; hence, any two distinct matrices in $\mathscr{C}$ have rank distance $k$, and the minimum rank distance of $\mathscr{C}$ is $k$. Note that $|\mathscr{C}| = q^k$ achieves the Singleton bound in \eqref{eq:rank_singleton}. Consequently, $\mathscr{C}$ is an MRD code with parameters $[k \times k, k, k]$. This constructed MRD code will serve as a key building block for establishing the lower bounds in \cref{lem:3lower bound of covering} and \cref{lem:4lower bound} below, whose proofs are deferred to Appendix A.

\begin{lemma}\label{lem:3lower bound of covering}
Let $n$, $k$ be positive integers such that $n \geq 3k$ and $k \mid n$. Then the maximum size of a $3$-$(n,k,2k)_{q}^{c}$ covering Grassmannian code satisfies
\[
B_q(n,k,2k;3) \geq \frac{q^{(\lfloor (n/k-3)/2 \rfloor + 2)k} - 1}{q^k - 1} + 2\left( \frac{n/k-3}{2} - \left\lfloor \frac{n/k-3}{2} \right\rfloor \right).
\]
\end{lemma}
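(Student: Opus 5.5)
The plan is an explicit construction over $\mathbb{F}_{q^k}$, realized with the MRD code $\mathscr{C}$. Write $m=n/k\ge 3$ and $a=\lfloor (m-3)/2\rfloor+2$, so that $m=2a-1$ when $m$ is odd and $m=2a$ when $m$ is even. The additive term $2\bigl(\tfrac{m-3}{2}-\lfloor\tfrac{m-3}{2}\rfloor\bigr)$ is $0$ for odd $m$ and $1$ for even $m$. So it suffices to do two things:
\begin{enumerate}
\item[(i)] build a $3$-$((2a-1)k,k,2k)_q^c$ code with $\frac{q^{ak}-1}{q^k-1}$ codewords;
\item[(ii)] for even $m$, append one block of $k$ coordinates and one extra codeword.
\end{enumerate}
We identify $\mathscr{C}$ with $\mathbb{F}_{q^k}$, so that $\mathbb{F}_q^{jk}\cong \mathbb{F}_{q^k}^{j}$. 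Every $\mathbb{F}_{q^k}$-scalar multiplication then acts as an invertible $k\times k$ matrix in $\mathscr{C}$, and every nonzero difference of two elements of $\mathscr{C}$ is again invertible.

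\textbf{Step 1 (codewords).} For each projective point $x\in \mathrm{PG}(a-1,q^k)$ take a fixed normalized representative $x\in\mathbb{F}_{q^k}^a$, i.e.\ its first nonzero coordinate is $1$. Define the $k$-dimensional $\mathbb{F}_q$-subspace
\[
S_x=\{(\lambda x,\ \theta(\lambda)\,y(x)) : \lambda\in\mathbb{F}_{q^k}\}\subseteq \mathbb{F}_{q^k}^{a}\times\mathbb{F}_{q^k}^{a-1}.
\]
Here $\theta$ is a nontrivial $\mathbb{F}_q$-linear twist; my first choice is the Frobenius $\lambda\mapsto\lambda^{q}$, which is exactly what the matrices of $\mathscr{C}$ allow. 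The vector $y(x)\in\mathbb{F}_{q^k}^{a-1}$ is obtained by dropping the leading $1$ of $x$ and applying a second Frobenius power coordinatewise. The dimensions add up to $a+(a-1)=2a-1=m$, and there are $\frac{q^{ak}-1}{q^k-1}$ codewords.

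\textbf{Step 2 (pairs and non-collinear triples).} Two distinct codewords meet trivially already in the first block, because distinct $\mathbb{F}_{q^k}$-points do. The same argument shows that for three points spanning an $\mathbb{F}_{q^k}$-plane, $S_{x_1}+S_{x_2}+S_{x_3}$ has dimension $3k$.

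\textbf{Step 3 (collinear triples; main obstacle).} Suppose $x_1,x_2,x_3$ are collinear. Then there is a unique (up to scalar) relation $\sum_i\lambda_i x_i=0$, and any $\mathbb{F}_q$-dependency among the three subspaces must have first-block coefficients $\mu\lambda_i$. One then needs
\[
\sum_i \theta(\mu\lambda_i)\,y(x_i)\neq 0\qquad\text{for all } \mu\neq 0 .
\]
The danger is that a purely semilinear $\theta$ paired with $y$ reproduces the same relation, since Frobenius preserves linear relations. So the exponents of the twist and of $y$ must differ. The normalization then forces one $\lambda_i$ ratio to be a non-$\mathbb{F}_q$ element whenever the three points are distinct, and I would verify the inequality by a direct computation on the parametrization $x_3=x_1+c\,x_2$. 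If this Frobenius choice fails, the fallback is to choose the second block via rows of an MRD code $\mathscr{C}$-matrix such that differences of the twisted images remain invertible.

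\textbf{Step 4 (even $m$).} For even $m$, embed the previous code into the first $m-1$ blocks and add $E=\{0\}^{(m-1)k}\times\mathbb{F}_q^{k}$. Any triple containing $E$ spans $k$ plus the dimension of the span of the other two codewords, and that span is $2k$ by Step 2. This gives $B_q(n,k,2k;3)\ge \frac{q^{ak}-1}{q^k-1}+1$, completing the bound in the statement.
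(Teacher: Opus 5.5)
\textbf{There is a genuine gap in Step 3.} Your split into (i) and (ii) is correct. Step 4 is correct too; it is exactly the paper's $\hat{B}_m$. Steps 1--2 correctly handle pairs and non-collinear triples. The gap is Step 3, and it is not a side case: for $n=3k$ (so $a=2$) every triple of points of $\mathrm{PG}(1,q^k)$ is collinear, so Step 3 is the whole proof there.

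Your twist fails for every $q\ge 3$, whatever $\mathbb{F}_q$-linear $\theta$ and Frobenius exponent you choose. The reason is that $\theta$ and $x\mapsto y(x)$ are both $\mathbb{F}_q$-linear. Any relation among normalized representatives with coefficients in $\mathbb{F}_q$ therefore passes unchanged into the second block. Concretely, fix $t\in\mathbb{F}_q\setminus\{0,1\}$ and take the distinct normalized points $x_1=(1,0,0,\dots,0)$, $x_2=(1,1,0,\dots,0)$, $x_3=(1,t,0,\dots,0)$. They satisfy $(1-t)x_1+t x_2-x_3=0$. For $(\lambda_1,\lambda_2,\lambda_3)=(1-t,t,-1)$ the first block vanishes. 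Since $\theta(s)=s\,\theta(1)$ and $s^{q^e}=s$ for $s\in\mathbb{F}_q$, the second block is
\[
\theta(1)\bigl[(1-t)\,y(x_1)+t\,y(x_2)-y(x_3)\bigr]=\theta(1)\,(t-t,0,\dots,0)=0 .
\]
Hence $\dim_{\mathbb{F}_q}(S_{x_1}+S_{x_2}+S_{x_3})<3k$. In particular, your claim that the normalization forces some ratio $\lambda_i/\lambda_j\notin\mathbb{F}_q$ is false. The fallback (``use rows of an MRD matrix'') is not a construction, so the three-point condition is never established.

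\textbf{The missing idea.} The paper avoids semilinear twists altogether. In effect every codeword is an $\mathbb{F}_{q^k}$-line, so $\mathbb{F}_q$-dimension $3k$ reduces to $\mathbb{F}_{q^k}$-independence. Independence of the problematic triples comes from a block Vandermonde (quadratic) lift.
\begin{enumerate}
\item The construction starts from the conic $B_0=\{(I_k;c;c^2):c\in\mathscr{C}\}\cup\{(0_k;0_k;I_k)\}$.
\item It recurses by $B_j=\{(b;c;c^2): b\in B_{j-1},\,c\in\mathscr{C}\}\cup\{(0_k;\dots;0_k;I_k)\}$. Each step adds $2k$ coordinates, and $|B_j|=q^k|B_{j-1}|+1$ with $|B_0|=q^k+1$ yields $\frac{q^{(j+2)k}-1}{q^k-1}$.
\item Triples are checked by how many prefixes $b'$ coincide. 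If all are distinct, use induction. If exactly two are equal, the two $c$'s differ and their difference is invertible. If all are equal, the determinant is $\det(c_3-c_2)\det(c_2-c_1)\det(c_3-c_1)\neq 0$. A triple containing the new unit block reduces to the rank of $\begin{pmatrix} b_2' & b_3'\\ c_2 & c_3\end{pmatrix}$.
\end{enumerate}

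Replacing your twist by $\theta=\mathrm{id}$ and coordinatewise squares is still not enough, as long as the first block must run over normalized representatives of all of $\mathrm{PG}(a-1,q^k)$. For $a=2$ and $q$ odd, the triple $(1,c_1),(1,c_2),(0,1)$ gives second block $(c_1-c_2)(c_1+c_2-y_\infty)$, which vanishes for suitable $c_1\neq c_2$. The paper sidesteps this by using $(0_k;0_k;I_k)$, whose first two blocks are zero, and by treating such points recursively.
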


\begin{remark}
For the case where $k \nmid n$, the bound in \cref{lem:3lower bound of covering} can be adapted by replacing $n/k$ with $\lfloor n/k \rfloor$.
\end{remark}

\begin{lemma}\label{lem:4lower bound}
Let $n$, $k$, $\alpha$ be positive integers such that $\alpha \geq 4$ and $n \geq \alpha k$. Then the maximum size of an $\alpha$-$(n,k,(\alpha-1)k)_{q}^{c}$ covering Grassmannian code satisfies
\[
B_q(n,k,(\alpha-1)k;\alpha) \geq q^k + \left\lfloor \frac{n}{k} \right\rfloor - \alpha + 1.
\]
\end{lemma}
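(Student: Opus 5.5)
The plan is to give an explicit construction of an $\alpha$-$(n,k,(\alpha-1)k)_q^c$ covering Grassmannian code, built from the MRD field $\mathscr{C}\cong\mathbb{F}_{q^k}$. The defining condition says that any $\alpha$ codewords span a space of dimension at least $(\alpha-1)k+k=\alpha k$. Since each codeword has dimension $k$, this means any $\alpha$ codewords must form a direct sum. Set $m=\lfloor n/k\rfloor\ge\alpha$. It suffices to work in $\mathbb{F}_q^{mk}$, since padding with $n-mk$ zero coordinates embeds such a code in $\mathbb{F}_q^n$ without changing any dimensions. Write vectors of $\mathbb{F}_q^{mk}$ as $m$ blocks of length $k$.

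\emph{Step 1 (a Vandermonde family).} For each $M\in\mathscr{C}$, let $U_M$ be the row space of the $k\times mk$ matrix
\[
\begin{pmatrix} I_k & M & M^2 & \cdots & M^{\alpha-1} & \mathbf{0} & \cdots & \mathbf{0}\end{pmatrix}.
\]
Each $U_M$ clearly has dimension $k$. Take $\alpha$ distinct $M_1,\dots,M_\alpha\in\mathscr{C}$ and stack their generator matrices. The first $\alpha$ block columns form a block Vandermonde matrix whose entries all lie in the commutative field $\mathscr{C}$. I will compute its determinant over $\mathscr{C}$ (legitimate by commutativity), which equals $\prod_{i<j}(M_j-M_i)$. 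Every factor is a nonzero element of $\mathscr{C}$ and hence invertible, so the stacked matrix has full rank $\alpha k$ over $\mathbb{F}_q$. This step is where the MRD property of $\mathscr{C}$ (all nonzero differences invertible) is used, and I expect it to be the main point needing care. The passage from a determinant over $\mathscr{C}$ to invertibility over $\mathbb{F}_q$ should be justified, for instance by block elimination using the inverses $(M_j-M_i)^{-1}\in\mathscr{C}$.

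\emph{Step 2 (point at infinity).} Add $U_\infty$, the row space of the matrix with $I_k$ in block $\alpha$ and zeros elsewhere. Consider any $\alpha-1$ of the $U_M$ together with $U_\infty$. Eliminating block column $\alpha$ via $U_\infty$ leaves an $(\alpha-1)$-block Vandermonde matrix in the first $\alpha-1$ block columns. By the same argument as in Step 1 it is invertible, so the span again has dimension $\alpha k$.

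\emph{Step 3 (extra coordinate blocks).} For $\alpha+1\le j\le m$, add $E_j$, the row space of $I_k$ placed in block $j$. Now take any $\alpha$ codewords, and suppose exactly $s$ of them are of type $E_j$. These $E_j$ occupy distinct blocks outside the first $\alpha$ and are therefore independent of everything else. The remaining $\alpha-s\le\alpha$ codewords are from the family $\{U_M\}\cup\{U_\infty\}$. They live in the first $\alpha$ blocks and are independent by Steps 1--2, since any subset of an independent family is independent. Hence the total span has dimension $\alpha k$.

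Counting gives $q^k+1+(m-\alpha)=q^k+\lfloor n/k\rfloor-\alpha+1$ codewords, which yields the stated lower bound on $B_q(n,k,(\alpha-1)k;\alpha)$.
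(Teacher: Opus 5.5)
Your proof is correct and is essentially the paper's construction. The paper takes the row spaces of $(I_k, B, B^2, \dots, B^{\lfloor n/k\rfloor-1}, \mathbf{0})$ for $B\in\mathscr{C}$ together with the coordinate blocks $E_i$, $\alpha\le i\le\lfloor n/k\rfloor$, so your $U_\infty$ is its $E_\alpha$; your only change, truncating the Vandermonde rows after block $\alpha$, makes the independence of the extra blocks immediate, and your determinant-over-$\mathscr{C}$ and elimination arguments are more explicit than the paper's sketchy mixed case. (One wording fix: in Step 3, apply the block-Vandermonde argument directly to the first $j$ block columns for any $j\le\alpha$ members of $\{U_M\}\cup\{U_\infty\}$, rather than inheriting independence from $\alpha$-subsets, which need not exist when $q^k+1<\alpha$, e.g.\ $q=2$, $k=1$, $\alpha=4$.)
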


In addition to the MRD-based constructions above, we provide another lower bound via a refined greedy construction. The proof is deferred to Appendix B.

\begin{lemma}\label{lem:greedy}
Let $q$ be a prime power, and let $n,k,\alpha$ be positive integers satisfying $\alpha \ge 3$ and $n \ge \alpha k$. For any non-negative integer $m$, define
\[
\begin{aligned}
T_1 &= q^{n}-q^{(\alpha-1)k}, \\
T_2 &= (m-\alpha)(q^k-1)+\sum_{j=\alpha-1}^{m-2}\binom{j}{\alpha-2}(q^k-1)\big(q^{(\alpha-2)k}-1\big), \\
T_3 &= (q^{k-1}-1)+\binom{m-1}{\alpha-1}(q^{k-1}-1)\big(q^{(\alpha-1)k}-1\big).
\end{aligned}
\]
Then there exists a maximum non-negative integer $m$ such that
\[
T_1 - T_2 - T_3 > 0.
\]
Correspondingly, there exists an $\alpha$-$(n,k,(\alpha-1)k)_q^c$ covering Grassmannian code in $\mathcal{G}_q(n,k)$ with exactly $m$ codewords.
\end{lemma}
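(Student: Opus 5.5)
The plan is a greedy, vector-by-vector extension argument. First I will reformulate the covering condition. A family $C\subseteq\mathcal{G}_q(n,k)$ is an $\alpha$-$(n,k,(\alpha-1)k)_q^c$ code exactly when any $\alpha$ codewords span a space of dimension $\alpha k$, i.e.\ when any $\alpha$ of them form a direct sum. So, given a valid code $\{c_1,\dots,c_{m-1}\}$, a new $k$-space $V$ can be appended if and only if $V\cap U_S=\{\mathbf 0\}$ for every $(\alpha-1)$-subset $S\subseteq[m-1]$, where $U_S=\bigoplus_{i\in S}c_i$ has dimension $(\alpha-1)k$.

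I will build $V=\langle v_1,\dots,v_k\rangle$ one vector at a time. Set $W_{j}=\langle v_1,\dots,v_{j}\rangle$. Then $V\cap U_S=\{\mathbf 0\}$ for all $S$ iff, for each $j$, $v_j\notin \bigcup_S (U_S+W_{j-1})$. The forbidden set grows with $j$, so it suffices to bound the worst step $j=k$, where $\dim W_{k-1}=k-1$.

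Next I will bound the forbidden set $\mathcal{B}=\bigcup_S(U_S\oplus W_{k-1})$ by splitting it into two parts.
\begin{enumerate}
\item The first part is $\bigcup_S U_S$. I will count it incrementally in the order codewords were added. Fix one $U_{S_0}$ with $S_0=\{1,\dots,\alpha-1\}$; it contributes $q^{(\alpha-1)k}$ vectors. When $c_{j+1}$ is added (for $j\ge\alpha-1$), the new subspaces are $U_S=c_{j+1}\oplus U_{S'}$ with $S'$ an $(\alpha-2)$-subset of $[j]$. The new vectors not already covered lie in $c_{j+1}\setminus\{\mathbf 0\}$ (at most $q^k-1$ of them) or are of the form $u+w$ with $u\in c_{j+1}\setminus\{\mathbf 0\}$ and $w\in U_{S'}\setminus\{\mathbf 0\}$ (at most $\binom{j}{\alpha-2}(q^k-1)(q^{(\alpha-2)k}-1)$ of them). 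Summing over the codewords added after the first $\alpha-1$ gives $|\bigcup_S U_S|\le q^{(\alpha-1)k}+T_2$, which matches the index ranges in the definition of $T_2$ exactly.
\item The second part is the extra vectors coming from $W_{k-1}$. These are the nonzero vectors of $W_{k-1}$ (at most $q^{k-1}-1$) and, for each of the $\binom{m-1}{\alpha-1}$ subsets $S$, the vectors $u+w$ with $u\in U_S\setminus\{\mathbf 0\}$ and $w\in W_{k-1}\setminus\{\mathbf 0\}$. This totals at most $T_3$.
\end{enumerate}
Hence $|\mathcal{B}|\le q^{(\alpha-1)k}+T_2+T_3$. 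So if $T_1-T_2-T_3>0$, a good $v_k$ exists, and a fortiori good $v_1,\dots,v_{k-1}$ exist as well, since their forbidden sets are contained in the corresponding smaller unions.

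To finish, I will note that $T_2$ and $T_3$ are nondecreasing in $m$ while $T_1$ is fixed. Therefore the set of $m$ satisfying $T_1-T_2-T_3>0$ is an initial segment of the integers. It is finite, because $T_3$ grows like $\binom{m-1}{\alpha-1}$, so a maximum $m$ exists. Since $n\ge\alpha k$, the first $\alpha-1$ codewords can be chosen as independent $k$-spaces, so small $m$ satisfy the inequality. Inducting on the number of codewords, each step from $m'-1$ to $m'\le m$ succeeds because the inequality holds at $m'$, and this produces a code with exactly $m$ codewords.

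The main obstacle I expect is the first part of the bound, namely verifying the incremental union count carefully. I must check that every vector of $\bigcup_S U_S$ is counted at the stage when its first containing $U_S$ appears. I must also make sure the decomposition $u+w$, with $u$ in the newest codeword, is valid; this relies on the previously built family being direct for any $\alpha-1$ members, which is where the induction hypothesis is used.
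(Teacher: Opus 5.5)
Your proposal is correct and follows the paper's greedy argument essentially step for step. The incremental count of $\bigcup_S U_S$ is how the paper bounds its sets $S_{m,1}$ (giving $T_1-T_2$), the split of $U_S+W_{k-1}$ into $W_{k-1}\setminus\{\mathbf 0\}$ plus the $\binom{m-1}{\alpha-1}$ families of sums is its bound on $S_{m,k}$ (giving $T_3$), and your monotonicity argument makes explicit the step-by-step induction the paper leaves implicit. Two small touch-ups: when $k=1$ one has $T_3\equiv 0$, so finiteness of the admissible $m$ should be attributed to the unbounded growth of $T_2$; and nonemptiness is cleanest by direct evaluation, e.g.\ $T_1-T_2-T_3=q^n-q^{\alpha k-1}>0$ at $m=\alpha$.
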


\begin{remark}
We now compare our lower bounds on $B_q(n,k,(\alpha-1)k;\alpha)$ with the known bound in \cite{QW2023}. The known bound, derived from \cite[Theorem V.1]{QW2023}, is based on a greedy selection restricted to subspaces within a fixed spread and requires maximizing $m$ such that
\[
\frac{q^n-1}{q^k-1} - \sum_{i=0}^{m-\alpha} \binom{\alpha-2+i}{\alpha-2} \frac{q^{(\alpha-1)k}-1}{q-1} \ge 1.
\]
Our refined greedy construction for $\alpha\ge3$ removes this restriction by considering all $k$-dimensional subspaces in $\mathcal{G}(n,k)$, and generally yields a tighter bound. For $\alpha=3$, we provide an MRD-based construction in Lemma \ref{lem:3lower bound of covering}; for $\alpha\ge4$, a similar construction is given in Lemma \ref{lem:4lower bound}. These two bounds and the refined greedy bound are complementary, with their relative tightness depending on the parameters $(q,n,k,\alpha)$. Numerical comparisons show that both of our bounds improve upon the known bound in \cite{QW2023}. A detailed numerical comparison is given in Table~\ref{tab:bound_compare}.

\begin{table}[htbp]
\centering
\caption{Comparison of lower bounds on $B_q(n,k,(\alpha-1)k;\alpha)$}
\label{tab:bound_compare}
\begin{tabular}{cccccccc}
\hline
$q$ & $n$ & $k$ & $\alpha$ & Known bound \cite{QW2023} &Lemma~\ref{lem:3lower bound of covering} &Lemma~\ref{lem:4lower bound}& Lemma~\ref{lem:greedy}  \\
\hline
2 & 8 & 2 & 3 & 4 &6 & -& 5  \\
2 & 10 & 2 & 3 & 8 & 21& -&10  \\
2 & 20 & 5 & 3 & 9 & 34&- & 12  \\

3 & 18 & 3 & 4 & 22 & -& 30&25  \\
3 & 36 & 6 & 4 & 187 &-&732 & 214  \\
4 & 24 &4 &5& 14 & -& 258 & 15 \\
5& 23 & 3 & 6 &20 &-&127&21\\
5& 30 & 3 &6 & 166 &-&130 & 174\\

\hline
\end{tabular}
\end{table}

\end{remark}

\subsection{Scalar Error-Correcting Codes in the $\GN$ Network}

We now consider scalar error-correcting codes for the network $\GN$. We first define the local encoding vectors of a scalar network code $\mathcal{C}$, then characterize the extended global encoding matrix at each sink node, and finally establish a critical connection between the network code's minimum distance and the Hamming distance of a classical linear code.

For each $i \in [r]$ and $j \in [\ell]$, let $v_{i,j} \in \mathbb{F}_q^h$ be the local encoding vector on the $j$-th incoming edge of the middle node $i$. Further, for each middle node $i \in [r]$, denote the source-to-middle encoding matrix
\[
v_i = \begin{pmatrix} v_{i,1} & v_{i,2} & \cdots & v_{i,\ell} \end{pmatrix}.
\]
For a sink node $\gamma = \{i_1, i_2, \dots, i_\alpha\} \subseteq [r]$, let $a_\gamma \in \mathbb{F}_q^{h \times \epsilon}$ denote the direct-link encoding matrix: its columns are the local encoding vectors of the $\epsilon$ parallel direct links from $\sigma$ to $\gamma$. Then the extended global encoding matrix at $\gamma$ is
\begin{equation}\label{eq:G}
\begin{pmatrix} F(\gamma) \\ G(\gamma) \end{pmatrix}
=
\begin{pmatrix}
V & a_\gamma \\
I_{\alpha \ell} & \mathbf{0} \\
I_{\alpha \ell} & \mathbf{0}\\
\mathbf{0} & I_{\epsilon} \\
\mathbf{0} & \mathbf{0}
\end{pmatrix},
\end{equation}
where \(
V = \begin{pmatrix} v_{i_1} & v_{i_2} & \cdots & v_{i_\alpha} \end{pmatrix}
\).

Let \(\mathcal{C}(\gamma)\) be the classical linear code over \(\mathbb{F}_q\) generated by the matrix \(F(\gamma)\). The following lemma establishes that the minimum distance of the scalar network code \(\mathcal{C}\) at \(\gamma\) is exactly the minimum Hamming distance of \(\mathcal{C}(\gamma)\).

\begin{lemma}\label{Lem: relation between}
Let \(\mathcal{C}\) be a decodable scalar linear network code for the generalized combination network \(\GN\). Then for each sink node \(\gamma \in \binom{[r]}{\alpha}\),
\[
d_{\min}(\mathcal{C}, \gamma) = d_H(\mathcal{C}(\gamma)).
\]
\end{lemma}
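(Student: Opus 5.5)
We prove the equality by showing both inequalities, working directly from the definition \eqref{eq:definition} and the explicit form \eqref{eq:G} of the extended global encoding matrix at $\gamma$. Write $s=\alpha\ell+\epsilon=|\operatorname{In}(\gamma)|$. By linearity, $d^{(\gamma)}(xF(\gamma),x'F(\gamma))$ depends only on $y=(x-x')F(\gamma)$, which ranges over the nonzero codewords of $\mathcal{C}(\gamma)$ as $x\neq x'$ ranges over pairs. Decodability makes $F(\gamma)$ full row rank, so $x\ne x'$ forces $y\neq \mathbf{0}$. Hence
\[
d_{\min}(\mathcal{C},\gamma)=\min_{\mathbf{0}\neq y\in\mathcal{C}(\gamma)} \min\{\mathrm{wt}_H(\mathbf{z}) : \mathbf{z}\cdot G(\gamma)=y\},
\]
where $\mathrm{wt}_H$ counts the nonzero coordinates of $\mathbf{z}$, i.e.\ the number of erroneous links. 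It therefore suffices to show that, for each $y\in\mathbb{F}_q^{s}$, the minimum weight of an error vector $\mathbf{z}$ with $\mathbf{z}G(\gamma)=y$ equals $\mathrm{wt}_H(y)$.

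For the upper bound, we use the rows of $G(\gamma)$ coming from the in-edges of $\gamma$ themselves. These are the block $\begin{pmatrix} I_{\alpha\ell} & \mathbf{0}\end{pmatrix}$ for the middle-to-sink links and $\begin{pmatrix}\mathbf{0} & I_\epsilon\end{pmatrix}$ for the direct links. Placing the coordinates of $y$ as errors on exactly these $s$ links gives some $\mathbf{z}$ with $\mathbf{z}G(\gamma)=y$ and $\mathrm{wt}_H(\mathbf{z})=\mathrm{wt}_H(y)$. Hence the inner minimum is at most $\mathrm{wt}_H(y)$.

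For the lower bound, we observe that every row of $G(\gamma)$ has at most one nonzero entry, and that entry equals $1$. Specifically:
\begin{itemize}
\item the row of the $j$-th source-to-middle link of node $i_k$ is the unit vector at the corresponding column among the first $\alpha\ell$ columns;
\item the rows of the in-edges of $\gamma$ are unit vectors as described above;
\item all other rows (links into other sinks, other middle nodes) are zero.
\end{itemize}
Consequently, $\mathbf{z}G(\gamma)$ is a sum of $\mathrm{wt}_H(\mathbf{z})$ scaled unit vectors. Its support is contained in the union of the supports of the rows indexed by $\operatorname{supp}(\mathbf{z})$, so $\mathrm{wt}_H(\mathbf{z}G(\gamma))\le \mathrm{wt}_H(\mathbf{z})$. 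Thus any $\mathbf{z}$ producing $y$ has weight at least $\mathrm{wt}_H(y)$.

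Combining the two bounds, the inner minimum equals $\mathrm{wt}_H(y)$. Minimizing over nonzero $y\in\mathcal{C}(\gamma)$ gives $d_{\min}(\mathcal{C},\gamma)=d_H(\mathcal{C}(\gamma))$.

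The only genuinely delicate step is verifying the structural claim about $G(\gamma)$: that each row has at most one nonzero entry, and in particular that upstream errors on source-to-middle links propagate to exactly one sink link. This relies on the earlier reduction to middle nodes that merely forward, which is already built into \eqref{eq:G}. We will check it edge-by-edge against the block form.
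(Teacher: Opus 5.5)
Your proof is correct and follows essentially the same route as the paper: the identity block of $G(\gamma)$ formed by the in-edges of $\gamma$ gives $d_{\min}(\mathcal{C},\gamma)\le d_H(\mathcal{C}(\gamma))$, and the fact that every row of $G(\gamma)$ in \eqref{eq:G} has at most one nonzero entry gives the reverse inequality. Your explicit use of decodability, to guarantee that $(x-x')F(\gamma)\neq\mathbf{0}$ so the minimum runs over nonzero codewords, makes precise a point the paper leaves implicit.
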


\begin{IEEEproof}
From \eqref{eq:G}, the rows of \(G(\gamma)\) corresponding to the edges in \(\operatorname{In}(\gamma)\) form the identity matrix \(I_{\alpha\ell+\epsilon}\). For any two distinct source message vectors \(x, x' \in \mathbb{F}_q^h\), let \(\Delta = x \cdot F(\gamma) - x' \cdot F(\gamma)\). Since \(I_{\alpha\ell+\epsilon}\) is a submatrix of \(G(\gamma)\), there exists an error vector \(\mathbf{z}\) such that \(\Delta = \mathbf{z} \cdot G(\gamma)\) and \(|\operatorname{supp}(\mathbf{z})| = wt_H(\Delta)\). Thus,
\[
\begin{aligned}
d^{\gamma}(x \cdot F(\gamma), x' \cdot F(\gamma))
&= \min \{ |\rho| : \exists \text{ an error vector } \mathbf{z} \text{ such that } \Delta = \mathbf{z} \cdot G(\gamma) \text{ and } \operatorname{supp}(\mathbf{z}) \subseteq \rho \} \\
&\leq wt_H(x \cdot F(\gamma) - x' \cdot F(\gamma)) \\
&= d_H(x \cdot F(\gamma), x' \cdot F(\gamma)).
\end{aligned}
\]
From \eqref{eq:definition}, we obtain
\[
\begin{aligned}
d_{\min}(\mathcal{C}, \gamma)
&= \min_{\substack{x, x' \in \mathbb{F}_q^h \\ x \neq x'}} d^{\gamma}(x \cdot F(\gamma), x' \cdot F(\gamma)) \\
&\leq \min_{\substack{x, x' \in \mathbb{F}_q^h \\ x \neq x'}} d_H(x \cdot F(\gamma), x' \cdot F(\gamma)) \\
&= d_H(\mathcal{C}(\gamma)).
\end{aligned}
\]

On the other hand, for any distinct \(x, x' \in \mathbb{F}_q^h\), let \(\rho(x, x')\) denote the error pattern that minimizes \(|\rho|\) in the definition of \(d^{\gamma}(x \cdot F(\gamma), x' \cdot F(\gamma))\). From \eqref{eq:G}, each row of \(G(\gamma)\) has at most one nonzero entry. This implies that \(|\rho(x, x')| \ge wt_H(x \cdot F(\gamma) - x' \cdot F(\gamma))\). Thus,
\[
\begin{aligned}
d_{\min}(\mathcal{C}, \gamma)
&= \min_{\substack{x, x' \in \mathbb{F}_q^h \\ x \neq x'}} d^{\gamma}(x \cdot F(\gamma), x' \cdot F(\gamma)) \\
&= \min_{\substack{x, x' \in \mathbb{F}_q^h \\ x \neq x'}} |\rho(x, x')| \\
&\ge \min_{\substack{x, x' \in \mathbb{F}_q^h \\ x \neq x'}} wt_H(x \cdot F(\gamma) - x' \cdot F(\gamma)) \\
&= d_H(\mathcal{C}(\gamma)).
\end{aligned}
\]
Combining the two inequalities above yields
\[
d_{\min}(\mathcal{C}, \gamma) = d_H(\mathcal{C}(\gamma)).
\]
\end{IEEEproof}

Further, let \(\mathcal{C}(r)\) be a linear code over \(\mathbb{F}_q\) generated by the matrix \(F \triangleq \begin{pmatrix} v_1 & v_2 & \cdots & v_r \end{pmatrix}\), where \(v_i\) is the source-to-middle encoding matrix for the middle node \(i\). The following lemma connects the minimum Hamming distance of the code \(\mathcal{C}(r)\) to the minimum distance of the scalar network code \(\mathcal{C}\).

\begin{lemma}\label{thm3}
Let \(\mathcal{C}\) be a scalar linear network code for the generalized combination network \(\GN\), where \(h \le \alpha\). Let \(d\) be a positive integer satisfying
\(
\alpha\ell + \epsilon - \alpha + 1 \le d \le \alpha\ell + \epsilon - h + 1
\). If \(\mathcal{C}\) is decodable and its minimum distance satisfies \(d_{\min}(\mathcal{C}, \gamma) \ge d\) for each sink node \(\gamma \in \binom{[r]}{\alpha}\), then the code \(\mathcal{C}(r)\) is an \([r\ell, h, \ge (r-\alpha)\ell - \epsilon + d]_q\) linear code.
\end{lemma}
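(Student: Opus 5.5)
Fix a nonzero message $x\in\mathbb{F}_q^h$ and bound the weight of the codeword $xF=(xv_1,\dots,xv_r)\in\mathcal{C}(r)$ from below. For each middle node $i$ set $w_i=\operatorname{wt}_H(xv_i)\in\{0,\dots,\ell\}$. Choose a sink $\gamma=\{i_1,\dots,i_\alpha\}$ consisting of $\alpha$ middle nodes with the \emph{largest} values $w_i$; this is possible since $r\ge\alpha$. By Lemma~\ref{Lem: relation between}, $d_H(\mathcal{C}(\gamma))=d_{\min}(\mathcal{C},\gamma)\ge d$. The codeword of $\mathcal{C}(\gamma)$ attached to $x$ is $(xv_{i_1},\dots,xv_{i_\alpha},xa_\gamma)$. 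It is nonzero, because $F(\gamma)$ has full row rank $h$ by decodability. Hence
\[
\sum_{k=1}^{\alpha} w_{i_k}+\operatorname{wt}_H(xa_\gamma)\ge d,
\qquad\text{so}\qquad
\sum_{k=1}^{\alpha} w_{i_k}\ge d-\epsilon .
\]

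Next I bound the weight of $x$ on the remaining $r-\alpha$ middle nodes. Write $S=\sum_{k}w_{i_k}\ge d-\epsilon$. The lower bound $d\ge\alpha\ell+\epsilon-\alpha+1$ gives $S\ge \alpha\ell-\alpha+1=\alpha(\ell-1)+1$. By pigeonhole, at least one node of $\gamma$ must have $w_i=\ell$.

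The main obstacle is showing that every node outside $\gamma$ also has $w_i=\ell$. My first approach is an exchange argument. Suppose some $j\notin\gamma$ has $w_j<\ell$. Among the chosen nodes, take one with the smallest weight and call it $i_{\min}$. Replace $i_{\min}$ by $j$ to form $\gamma'$. Applying the same inequality to $\gamma'$ gives
\[
\sum_{i\in\gamma'}w_i\ge d-\epsilon\ge\alpha(\ell-1)+1 .
\]
Because $\gamma$ was chosen with the largest weights, $w_j\le w_{i_{\min}}$. So every node of $\gamma'$ has weight at most $\ell$, and the node $j$ has weight at most $\ell-1$. This alone forces no contradiction, since $\gamma'$ still has slack. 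The way through is to use the hypothesis $h\le\alpha$ instead. If $xv_j$ had a zero coordinate on some link, I would try to choose $\alpha$ middle nodes containing $j$ together with $h-1$ other "bad" coordinates, and argue that the resulting sink codeword is forced to have weight below $d$. I expect this step to be delicate, and it may in fact be simpler than it looks. An alternative route is to note that the stated bound only needs
\[
\operatorname{wt}_H(xF)\ge S+\sum_{j\notin\gamma}w_j\ge d-\epsilon+\sum_{j\notin\gamma}w_j .
\]

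That last reformulation reduces everything to the inequality $\sum_{j\notin\gamma}w_j\ge (r-\alpha)\ell$, i.e.\ to showing that all nodes outside $\gamma$ are fully nonzero. If that fails in general, I would switch the choice of $\gamma$. Instead of the heaviest nodes, I would let $\gamma$ contain all nodes with $w_i<\ell$, padded arbitrarily to size $\alpha$. Then every node outside $\gamma$ has $w_j=\ell$ automatically, and the sink inequality gives the total weight
\[
\operatorname{wt}_H(xF)\ge (d-\epsilon)+(r-\alpha)\ell,
\]
which is exactly the claim. For this choice to work I need the number of deficient nodes to be at most $\alpha$. Suppose instead there were $\alpha+1$ deficient nodes. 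Take any $\alpha$ of them as a sink; its weight is at most $\alpha(\ell-1)+\epsilon$. On the other hand it must be at least $d\ge\alpha\ell+\epsilon-\alpha+1$, which is a contradiction. So at most $\alpha$ nodes are deficient, and this choice of $\gamma$ is valid.

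Finally, I check the dimension. Because some sink decoding matrix $F(\gamma)$ has rank $h$, and its $V$-part together with $a_\gamma$ spans $\mathbb{F}_q^h$, I need $F$ itself to have rank $h$. This follows from the weight bound just proved: $(r-\alpha)\ell-\epsilon+d>0$ forces $xF\neq 0$ for every $x\ne0$. So $\mathcal{C}(r)$ is an $[r\ell,h,\ge (r-\alpha)\ell-\epsilon+d]_q$ code. The hypothesis $h\le\alpha$ is used only to make the range of $d$ nonempty.
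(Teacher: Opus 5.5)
Your final argument, where $\gamma$ is chosen to contain every deficient node, is correct. It is a weight-based dual of the paper's column-based proof.

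The paper never fixes a message. It uses $d_H(\mathcal{C}(\gamma))\ge d$ to say that any $\alpha\ell+\epsilon-d+1$ columns of $F(\gamma)$ have rank $h$. The lower bound on $d$ gives $\alpha\ell+\epsilon-d+1\le\alpha$, so any that many columns of $F$ come from at most $\alpha$ middle nodes and lie inside a single $F(\gamma)$. It then reads off $d_H(\mathcal{C}(r))\ge r\ell-(\alpha\ell+\epsilon-d+1)+1$.

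Read dually, the paper covers some $\alpha\ell+\epsilon-d+1$ zeros of $xF$ by one sink. You instead first show that \emph{all} zeros of $xF$ lie in few middle nodes, and then cover them all by one sink. The lower bound on $d$ enters at the analogous place. Your counting actually gives at most $\alpha-1$ deficient nodes, since any $\alpha$ of them would form a sink codeword of weight at most $\alpha(\ell-1)+\epsilon<d$.

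The paper's route is shorter and carries over verbatim to column blocks in the vector case. Yours needs the extra counting step, but it yields a little more structure: among any $\alpha$ middle nodes, some $xv_i$ has full weight $\ell$. Your derivation of $\operatorname{rank}F=h$ from $(r-\alpha)\ell-\epsilon+d\ge r\ell-\alpha+1\ge 1$ is also cleaner than the paper's remark that $F(\gamma)$ is a submatrix of $F$. That remark ignores the direct-link columns $a_\gamma$. The paper's conclusion still holds, because its next step shows that some $\alpha\ell+\epsilon-d+1\ge h$ columns of $F$ already have rank $h$.

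In a write-up, drop the exploratory first half. If $\gamma$ consists of the $\alpha$ heaviest nodes, the claim that every node outside $\gamma$ has $w_i=\ell$ is false in general, because that choice pushes the deficient nodes outside $\gamma$.
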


\begin{IEEEproof}
By \cref{Lem: relation between}, the condition \(d_{\min}(\mathcal{C}, \gamma) \ge d\) implies \(d_H(\mathcal{C}(\gamma)) \ge d\) for each \(\gamma \in \binom{[r]}{\alpha}\). Thus, any \(\alpha\ell+\epsilon-d+1\) columns of \(F(\gamma)\) form a matrix of full row rank. Otherwise, there exists a nonzero codeword in \(\mathcal{C}(\gamma)\) with Hamming weight at most \(d-1\), contradicting \(d_H(\mathcal{C}(\gamma)) \ge d\). Observe that \(F(\gamma)\) is a submatrix of \(F\). Since \(F(\gamma)\) has full row rank, \(F\) also has full row rank, equal to \(h\).

Since $\alpha\ell+\epsilon-d+1 \le \alpha$, any $\alpha\ell+\epsilon-d+1$ columns of $F$ must lie within the columns indexed by some $\alpha$-subset of middle nodes, and hence are contained in some $F(\gamma)$. Therefore, any \(\alpha\ell+\epsilon-d+1\) columns of \(F\) form a matrix of full row rank. This implies that every nonzero codeword in \(\mathcal{C}(r)\) has Hamming weight at least
\[
r\ell - (\alpha\ell+\epsilon-d+1) + 1 = (r-\alpha)\ell - \epsilon + d.
\]
Thus, the proof is complete.
\end{IEEEproof}

Combining Lemmas \ref{Lem: relation between} and \ref{thm3} yields the following necessary condition for the existence of scalar network MDS codes.

\begin{theorem}\label{cor1}
Let \(\mathcal{C}\) be a scalar linear network code for the generalized combination network \(\GN\), where \(h \le \alpha\). If \(\mathcal{C}\) is MDS, then \(\mathcal{C}(r)\) is MDS, and for each \(\gamma \in \binom{[r]}{\alpha}\), \(\mathcal{C}(\gamma)\) is MDS.
\end{theorem}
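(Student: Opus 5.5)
The plan is to read the network MDS condition as a minimum-distance condition on the sink codes and then apply Lemma~\ref{Lem: relation between} and Lemma~\ref{thm3} with the extremal choice of $d$. Suppose $\mathcal{C}$ is a scalar network MDS code on $\GN$ with $h\le\alpha$. I first record the min-cut. Each sink $\gamma$ has $\alpha\ell+\epsilon$ incoming links, and any cut separating $\gamma$ from $\sigma$ needs capacity at least $\alpha\ell+\epsilon$. The reason is that the $\epsilon$ direct links are edge-disjoint paths, and each of the $\alpha\ell$ links from a middle node $i_k$ to $\gamma$ pairs with one of the $\ell$ parallel links from $\sigma$ to $i_k$, giving further disjoint paths. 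Hence $C_\gamma=\alpha\ell+\epsilon$, and the MDS property says exactly that $\mathcal{C}$ is decodable with $d_{\min}(\mathcal{C},\gamma)=\alpha\ell+\epsilon-h+1$ for every $\gamma\in\binom{[r]}{\alpha}$.

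For the sink codes, Lemma~\ref{Lem: relation between} gives $d_H(\mathcal{C}(\gamma))=\alpha\ell+\epsilon-h+1$. Decodability means $F(\gamma)$ has full row rank $h$, so $\mathcal{C}(\gamma)$ is an $[\alpha\ell+\epsilon,h]_q$ code. Its minimum distance therefore meets the classical Singleton bound $n-k+1$ with equality, and $\mathcal{C}(\gamma)$ is MDS.

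For $\mathcal{C}(r)$, I apply Lemma~\ref{thm3} with $d=\alpha\ell+\epsilon-h+1$. This value is admissible because $h\le\alpha$ gives $d\ge\alpha\ell+\epsilon-\alpha+1$, and trivially $d\le\alpha\ell+\epsilon-h+1$. The lemma then yields that $\mathcal{C}(r)$ is an $[r\ell,h,\ge(r-\alpha)\ell-\epsilon+\alpha\ell+\epsilon-h+1]_q=[r\ell,h,\ge r\ell-h+1]_q$ code. Its dimension is $h$ because $F$ has full row rank, as shown in the proof of Lemma~\ref{thm3}. The classical Singleton bound gives $d_H\le r\ell-h+1$, so equality holds and $\mathcal{C}(r)$ is MDS.

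The only step needing care is the min-cut computation $C_\gamma=\alpha\ell+\epsilon$; it is routine once the disjoint paths are exhibited. Everything else reduces directly to the two preceding lemmas together with the Singleton bound.
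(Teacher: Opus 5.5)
Your proposal is correct and is essentially the paper's argument. The paper obtains the theorem by combining Lemma~\ref{Lem: relation between} and Lemma~\ref{thm3} with the extremal choice $d=\alpha\ell+\epsilon-h+1$, which is admissible since $h\le\alpha$. You additionally make explicit the min-cut value $C_\gamma=\alpha\ell+\epsilon$ and the final appeal to the classical Singleton bound, both of which the paper leaves implicit.
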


\begin{corollary}\label{cor:lower_bounds}
For the generalized combination network \(\mathcal{N}=\GN\) with \(2 \le h \le \alpha\), the minimum field size \(\qs\) satisfies
\[
\qs \ge \psi(r\ell - h + 1) \quad \text{and} \quad \qs \ge \psi(\alpha\ell + \epsilon - h + 1),
\]
where \(\psi(x)\) denotes the smallest prime power that is greater than or equal to \(x\).
\end{corollary}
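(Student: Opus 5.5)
The plan is to combine \cref{cor1} with the classical field-size bound for linear MDS codes. Suppose a scalar network MDS code $\mathcal{C}$ over $\mathbb{F}_q$ exists for $\GN$ with $2\le h\le\alpha$. By \cref{cor1}, the code $\mathcal{C}(r)$ is an $[r\ell,h,r\ell-h+1]_q$ MDS code. Likewise, for every sink $\gamma$, the code $\mathcal{C}(\gamma)$ is an $[\alpha\ell+\epsilon,h,\alpha\ell+\epsilon-h+1]_q$ MDS code. Its length is $\alpha\ell+\epsilon$ and its dimension is $h$, because $F(\gamma)$ has full row rank by decodability. So it suffices to show that a linear $[n,h]_q$ MDS code with $2\le h$ forces $q\ge n-h+1$. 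Since $q$ is a prime power, this is equivalent to $q\ge\psi(n-h+1)$.

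For the classical step, take an $[n,h,n-h+1]_q$ MDS code. Its dual is an $[n,n-h,h+1]_q$ MDS code, and every $h$ columns of its generator matrix are independent. I would use the standard argument (e.g.\ the bound $n\le q+k-1$ for MDS codes of dimension $k\ge2$).

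Bring the generator matrix to systematic form $(I_h\mid P)$. Every entry of $P$ is nonzero, since otherwise some $h$ columns would be dependent. Every $2\times2$ minor of $P$ is also nonzero.

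Scale each column of $P$ so that its first row is all $1$s. Then, for any fixed row $i\ge2$, the entries $P_{i,1},\dots,P_{i,n-h}$ are pairwise distinct. Indeed, equal entries give a vanishing $2\times2$ minor with row $1$. They are also nonzero, so they are distinct elements of $\mathbb{F}_q^\ast$, and hence $n-h\le q-1$.

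The hypothesis $h\ge2$ is exactly what guarantees that such a second row exists. This yields $q\ge n-h+1$.

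Applying this with $n=r\ell$ gives $\qs\ge\psi(r\ell-h+1)$. Applying it with $n=\alpha\ell+\epsilon$ to any single sink gives $\qs\ge\psi(\alpha\ell+\epsilon-h+1)$.

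The only point needing care is the MDS bound itself, which is standard. I do not expect a real obstacle beyond checking that \cref{cor1} applies verbatim under $h\le\alpha$.
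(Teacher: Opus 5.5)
Your proposal is correct and follows the paper's proof. Both arguments use Theorem~\ref{cor1} to obtain linear MDS codes of dimension $h$ and lengths $r\ell$ and $\alpha\ell+\epsilon$, and then use the fact that an $[n,h]_q$ MDS code with $h\ge 2$ forces $q\ge n-h+1$. The paper cites this fact via the Griesmer bound, whereas you prove it directly from the systematic form $(I_h\mid P)$ and the nonvanishing $2\times 2$ minors of $P$; that argument is valid and self-contained.
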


\begin{IEEEproof}
By Theorem~\ref{cor1}, if a linear network MDS code over \(\mathbb{F}_q\) for \(\GN\) exists, then there must exist two linear MDS codes over \(\mathbb{F}_q\) of lengths \(r\ell\) and \(\alpha\ell+\epsilon\), both of dimension \(h\). Since \(h \ge 2\), applying the Griesmer bound to such linear MDS codes yields \(q \ge r\ell - h + 1\) and \(q \ge \alpha\ell + \epsilon - h + 1\) \cite[pp. 340–341]{R2006}.
\end{IEEEproof}

\begin{theorem}\label{thm4}
For the generalized combination network \(\GN\) with \(h \le \alpha\), a linear network MDS code over \(\mathbb{F}_q\) exists if and only if an \([n_{\max}, h]_q\) MDS code exists, where \(n_{\max} = \max(\alpha\ell + \epsilon, r\ell)\).
\end{theorem}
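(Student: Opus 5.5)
The plan is to prove the two directions separately. Necessity follows directly from Theorem~\ref{cor1}; sufficiency needs an explicit construction of a network MDS code from an $[n_{\max},h]_q$ MDS code.

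\emph{Necessity.} Suppose a scalar network MDS code $\mathcal{C}$ over $\mathbb{F}_q$ exists for $\GN$ with $h\le\alpha$. By Theorem~\ref{cor1}, $\mathcal{C}(r)$ is an $[r\ell,h]_q$ MDS code, and each $\mathcal{C}(\gamma)$ is an $[\alpha\ell+\epsilon,h]_q$ MDS code. Hence MDS codes of dimension $h$ exist over $\mathbb{F}_q$ at both lengths, and in particular at length $n_{\max}$.

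\emph{Sufficiency.} Let $M\in\mathbb{F}_q^{h\times n_{\max}}$ generate an $[n_{\max},h]_q$ MDS code, so any $h$ columns of $M$ are linearly independent. I would split into two cases.

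\emph{Case $n_{\max}=r\ell$.} Assign the $r\ell$ columns of $M$ to the links $\sigma\to$ middle nodes, giving $v_1,\dots,v_r$. For each sink $\gamma$, the matrix $V$ then consists of $\alpha\ell$ distinct columns of $M$. It remains to choose the $\epsilon$ direct-link columns $a_\gamma$ so that all $\alpha\ell+\epsilon$ columns of $F(\gamma)$ are in general position. Since $\alpha\ell+\epsilon\le r\ell$ need not leave enough unused columns, I would instead shorten or puncture. The key observation is that $\epsilon$ may exceed $(r-\alpha)\ell$, though that would contradict $n_{\max}=r\ell$. Indeed, in this case $\alpha\ell+\epsilon\le r\ell$, so $\epsilon\le(r-\alpha)\ell$. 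We may therefore take $a_\gamma$ to be any $\epsilon$ columns of $M$ not used by $\gamma$'s middle nodes, for instance columns from middle nodes outside $\gamma$. Then $F(\gamma)$ is a set of $\alpha\ell+\epsilon$ distinct columns of $M$, so every $h$ of them are independent. This makes $\mathcal{C}(\gamma)$ an $[\alpha\ell+\epsilon,h]$ MDS code, and decodability holds as well. By Lemma~\ref{Lem: relation between}, $d_{\min}(\mathcal{C},\gamma)=\alpha\ell+\epsilon-h+1$, which meets the Singleton bound~\eqref{eq:Singleton} because $C_\gamma=\alpha\ell+\epsilon$.

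\emph{Case $n_{\max}=\alpha\ell+\epsilon>r\ell$.} Reserve $r\ell$ columns of $M$ for the $v_i$. Give every sink the same $\epsilon$ direct-link columns, namely the remaining $\alpha\ell+\epsilon-r\ell$ columns of $M$ together with $r\ell-\alpha\ell$ columns taken from the $v_i$ of middle nodes outside $\gamma$. Again $F(\gamma)$ consists of distinct columns of $M$, so the same conclusion follows from Lemma~\ref{Lem: relation between}.

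The main point needing care is exactly this bookkeeping: verifying that for every sink $\gamma$ the $\epsilon$ direct columns can be chosen distinct from the $\alpha\ell$ columns of $V$. That reduces to checking that there are $n_{\max}-\alpha\ell\ge\epsilon$ columns of $M$ outside those assigned to $\gamma$'s middle nodes, which holds by the definition of $n_{\max}$. Since the direct links are private to each sink, $a_\gamma$ may depend on $\gamma$, so no global consistency constraint arises. The MDS code requirement therefore reduces precisely to the existence of the $[n_{\max},h]_q$ code.
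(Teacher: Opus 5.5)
Your proof is correct and follows the paper's argument. Necessity comes from Theorem~\ref{cor1}, and sufficiency comes from assigning $r\ell$ columns of the MDS generator matrix to the source-to-middle links and filling each $a_\gamma$ with $\epsilon$ of the $n_{\max}-\alpha\ell\ge\epsilon$ columns not used by $\gamma$'s middle nodes. Your case split is unnecessary, since that single count covers both cases, and in the second case ``the same $\epsilon$ direct-link columns'' is a slip: the part drawn from middle nodes outside $\gamma$ depends on $\gamma$.
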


\begin{IEEEproof}
The ``only if" direction follows from \cref{cor1}. For the ``if" direction, let \(G\) be a generator matrix of an \([n_{\max}, h]_q\) MDS code with columns \(g_1, \dots, g_{n_{\max}}\). Assign \(v_{i,j} = g_{(i-1)\ell + j}\) for each middle node \(i\) and link \(j\). For each sink node \(\gamma = \{i_1, \dots, i_\alpha\}\), select any \(\epsilon\) columns from the remaining \(n_{\max} - \alpha\ell\) columns of \(G\) to form the direct-link encoding matrix \(a_\gamma\), and let middle nodes forward. The decoding matrix at each sink node consists of \(\alpha\ell+\epsilon\) columns of \(G\), hence generates an MDS code. By \cref{Lem: relation between}, the constructed network code is MDS.
\end{IEEEproof}

\begin{corollary}\label{cor2}
For the generalized combination network \(\mathcal{N}=\GN\) with \(h \le \alpha\), the minimum field size \(\qs\) satisfies
\[
\qs \le \psi(n_{\max} - 1),
\]
where \(n_{\max} = \max(\alpha\ell + \epsilon, r\ell)\).
\end{corollary}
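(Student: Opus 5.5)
By Theorem~\ref{thm4}, it suffices to exhibit, for $q=\psi(n_{\max}-1)$, an $[n_{\max},h]_q$ MDS code. Once such a code exists, the ``if'' direction of Theorem~\ref{thm4} gives a scalar linear network MDS code over $\mathbb{F}_q$ for $\GN$. By the definition of $\qs$ as the minimum field size, this yields $\qs \le \psi(n_{\max}-1)$. The whole argument therefore reduces to a classical existence statement for linear MDS codes.

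For that classical statement I would use the extended (doubly extended) Reed--Solomon codes. For any prime power $q$ and any $1\le h\le q+1$, there is a $[q+1,h]_q$ MDS code. Its generator matrix has the columns $(1,a,a^2,\dots,a^{h-1})^T$ for $a\in\mathbb{F}_q$, together with $(0,\dots,0,1)^T$. The MDS property holds because every $h\times h$ minor is either a Vandermonde determinant or, when the point at infinity is included, a Vandermonde determinant of size $h-1$; in both cases it is nonzero.

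Next I would shorten the code to the right length. Since $q=\psi(n_{\max}-1)\ge n_{\max}-1$, we have $q+1\ge n_{\max}$. Deleting columns of an MDS generator matrix (puncturing) preserves the property that any $h$ columns are linearly independent, provided the length stays at least $h$. Here $n_{\max}\ge \alpha\ell+\epsilon\ge h$, because the min-cut $C_\gamma=\alpha\ell+\epsilon$ must be at least $h$ for decodability. Taking any $n_{\max}$ of the $q+1$ columns therefore gives an $[n_{\max},h]_q$ MDS code.

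The proof also needs $h\le q+1$. This follows from $h\le n_{\max}\le q+1$. The case $h=1$ is trivial, since the repetition code is MDS over any field.

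I do not expect a real obstacle here. The only points to check are these length and dimension conditions and the invocation of Theorem~\ref{thm4}.
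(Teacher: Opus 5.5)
Your proposal is correct and follows the same route as the paper: for any prime power $q \ge n_{\max}-1$, (doubly) extended Reed--Solomon codes give an $[n_{\max},h]_q$ MDS code, and the ``if'' direction of Theorem~\ref{thm4} then yields a scalar network MDS code over $\mathbb{F}_q$. The only difference is that you spell out what the paper leaves implicit: the explicit generator matrix, the puncturing step, and the checks that $h \le n_{\max} \le q+1$.
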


\begin{IEEEproof}
For each prime power \(q \ge n_{\max} - 1\), an \([n_{\max}, h]_q\) MDS code exists (e.g., via extended Reed–Solomon codes). By Theorem \ref{thm4}, there exists a network MDS code over \(\mathbb{F}_q\) for \(\GN\) with \(h \le \alpha\). Hence \(\qs \le \psi(n_{\max} - 1)\).
\end{IEEEproof}

We next consider the case where the parameters satisfy $\alpha\ell \le h \le \alpha\ell+\epsilon$. Denote
\[
S(\mathcal{C}) \triangleq \{ \langle v_i \rangle \mid 1 \le i \le r \},
\]
where $\langle v_i \rangle$ denotes the subspace over $\mathbb{F}_q$ spanned by the columns of the source-to-middle encoding matrix $v_i$. The following lemma relates the network code's minimum distance to the structural properties of $S(\mathcal{C})$.

\begin{lemma}\label{lem3}
Let \(\mathcal{C}\) be a scalar linear network code for the generalized combination network \(\GN\), where \(\alpha\ell \le h \le \alpha\ell+\epsilon\). Let $d$ be a positive integer satisfying
\(
\epsilon - h + 1 \le d \le \alpha\ell + \epsilon - h + 1
\). If \(\mathcal{C}\) is decodable and its minimum distance satisfies \(d_{\min}(\mathcal{C}, \gamma) \ge d\) for each sink node \(\gamma \in \binom{[r]}{\alpha}\), then the subspace collection \(S(\mathcal{C})\) satisfies the following two properties:
\begin{enumerate}
\item Each subspace in \(S(\mathcal{C})\) has dimension at most \(\ell\);
\item Any \(\alpha\)-subset of \(S(\mathcal{C})\) spans a subspace of dimension at least \(h - \epsilon + d - 1\).
\end{enumerate}
\end{lemma}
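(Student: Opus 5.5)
The plan is to read both properties off the structure of the decoding matrix $F(\gamma)=\begin{pmatrix} V & a_\gamma\end{pmatrix}$ in \eqref{eq:G}, using \cref{Lem: relation between} to turn the hypothesis on $d_{\min}(\mathcal{C},\gamma)$ into a statement about the Hamming distance of $\mathcal{C}(\gamma)$.

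\textbf{Property 1.} This is immediate. For each $i\in[r]$, the matrix $v_i=\begin{pmatrix} v_{i,1} & \cdots & v_{i,\ell}\end{pmatrix}$ has exactly $\ell$ columns. Hence $\dim\langle v_i\rangle\le \ell$.

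\textbf{Property 2, step 1: columns that can be deleted.} Fix a sink $\gamma=\{i_1,\dots,i_\alpha\}$. By \cref{Lem: relation between}, $d_H(\mathcal{C}(\gamma))=d_{\min}(\mathcal{C},\gamma)\ge d$. The standard consequence is that deleting any $d-1$ columns of $F(\gamma)$ leaves a matrix of full row rank $h$. Indeed, otherwise some nonzero $x\in\mathbb{F}_q^h$ would be orthogonal to all remaining columns. Then $xF(\gamma)$ would be a nonzero codeword of $\mathcal{C}(\gamma)$, nonzero because decodability gives $\operatorname{rank}F(\gamma)=h$. Its weight would be at most $d-1$, a contradiction.

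\textbf{Property 2, step 2: delete only direct-link columns.} The key choice is which $d-1$ columns to delete: I will take all of them from the $\epsilon$ columns of $a_\gamma$. This is legitimate because the hypotheses $h\ge\alpha\ell$ and $d\le \alpha\ell+\epsilon-h+1$ give
\[
d-1\le \alpha\ell+\epsilon-h\le \epsilon .
\]
After the deletion, the remaining matrix consists of $V$ together with $\epsilon-d+1\ge 0$ columns of $a_\gamma$, and it has rank $h$.

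\textbf{Property 2, step 3: subadditivity of rank.} By subadditivity of rank,
\[
h\le \operatorname{rank}(V)+(\epsilon-d+1),
\]
so $\operatorname{rank}(V)\ge h-\epsilon+d-1$. Finally, $\operatorname{rank}(V)$ equals the dimension of the span of $\langle v_{i_1}\rangle,\dots,\langle v_{i_\alpha}\rangle$. Since $\gamma$ was an arbitrary $\alpha$-subset of $[r]$, this gives Property 2.

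The only point needing care is the inequality $d-1\le\epsilon$, which is exactly where the regime $\alpha\ell\le h$ enters. The lower bound $d\ge \epsilon-h+1$ in the statement is not needed for the argument; it only ensures the target dimension $h-\epsilon+d-1$ is meaningful.
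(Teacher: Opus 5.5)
Your proof is correct and follows essentially the same route as the paper. The paper also uses \cref{Lem: relation between} to get that any $\alpha\ell+\epsilon-d+1$ columns of $F(\gamma)$ have rank $h$. It notes that $\alpha\ell+\epsilon-d+1\ge\alpha\ell$, so these columns can be taken to be all of $V$ plus $\epsilon-d+1$ columns of $a_\gamma$, which is your deletion of $d-1$ direct-link columns. It then concludes $\operatorname{rank}(V)\ge h-\epsilon+d-1$ by the same rank count. You are also right that $d\ge\epsilon-h+1$ is not needed; the paper cites it only to note that $\epsilon-d+1\le h$.
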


\begin{IEEEproof}
For each $i \in [r]$, the matrix $v_i$ has $\ell$ columns, so the dimension of its column span is at most $\ell$.

For Property 2), consider an arbitrary $\alpha$-subset of $S(\mathcal{C})$, which corresponds to middle nodes $i_1, i_2, \dots, i_\alpha$. Let $\gamma = \{i_1, i_2, \dots, i_\alpha\}$ be the sink node connected to these $\alpha$ middle nodes. By \cref{Lem: relation between}, $d_{\min}(\mathcal{C}, \gamma) \ge d$ implies $d_H(\mathcal{C}(\gamma)) \ge d$. It follows that any $\alpha\ell + \epsilon - d + 1$ columns of $F(\gamma)$ form a matrix of full row rank. Since $h \ge \alpha\ell$ and $d \le \alpha\ell + \epsilon - h + 1$, we have $\alpha\ell + \epsilon - d + 1 \ge \alpha\ell$. Thus, there exists an $h \times (\alpha\ell + \epsilon - d + 1)$ submatrix $\hat{F}(\gamma)$ of $F(\gamma)$ satisfying both conditions below:
\begin{itemize}
    \item $\hat{F}(\gamma)$ contains $\begin{pmatrix} v_{i_1} & v_{i_2} & \cdots & v_{i_\alpha} \end{pmatrix}$ as a submatrix;
    \item $\hat{F}(\gamma)$ has rank $h$.
\end{itemize}
Since $d \ge \epsilon - h + 1$, we have $\epsilon - d + 1 \le h$. Therefore, the rank of $\begin{pmatrix} v_{i_1} & v_{i_2} & \cdots & v_{i_\alpha} \end{pmatrix}$ is at least $h - \epsilon + d - 1$, which implies that the dimension of the span of the $\alpha$ subspaces is at least $h - \epsilon + d - 1$.
\end{IEEEproof}

The following result follows directly from \cref{Lem: relation between} and \cref{lem3}.

\begin{theorem}\label{thm5}
Let \(\mathcal{C}\) be a linear network MDS code over \(\mathbb{F}_q\) for the generalized combination network \(\GN\), where \(\alpha\ell \le h \le \alpha\ell+\epsilon\). Then \(S(\mathcal{C})\) is an \(\alpha\)-\((h, \ell, (\alpha-1)\ell)_q^c\) covering Grassmannian code, and for each sink node \(\gamma\), \(\mathcal{C}(\gamma)\) is MDS.
\end{theorem}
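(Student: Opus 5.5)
The plan is to specialize \cref{lem3} to the MDS value of $d$ and then read off the two conclusions. Since $\mathcal{C}$ is a network MDS code, for every sink $\gamma$ we have $d_{\min}(\mathcal{C},\gamma)=C_\gamma-h+1$. In $\GN$ the min-cut of every sink is $C_\gamma=\alpha\ell+\epsilon$: the $\alpha\ell+\epsilon$ incoming links of $\gamma$ form a cut, and there are that many edge-disjoint paths from $\sigma$ to $\gamma$. Hence we take $d=\alpha\ell+\epsilon-h+1$.

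\textbf{Admissibility of $d$ in \cref{lem3}.} The upper constraint $d\le \alpha\ell+\epsilon-h+1$ holds with equality. The lower constraint $d\ge \epsilon-h+1$ is equivalent to $\alpha\ell\ge 0$. Since $\alpha\ell\le h$, we also have $d\le\epsilon+1$, and $d\ge 1$ because $h\le\alpha\ell+\epsilon$, so $d$ is a positive integer. An MDS code is decodable by definition.

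\textbf{Covering property.} \cref{lem3} then gives two facts. First, every $\langle v_i\rangle$ has dimension at most $\ell$. Second, any $\alpha$ of these subspaces span a space of dimension at least
\[
h-\epsilon+d-1=\alpha\ell .
\]
A span of $\alpha$ subspaces, each of dimension at most $\ell$, has dimension at most $\alpha\ell$, so equality must hold throughout. Because $r\ge\alpha$, every $v_i$ lies in some $\alpha$-subset, and therefore every $\langle v_i\rangle$ has dimension exactly $\ell$; the $\alpha$ spaces in each $\alpha$-subset are moreover independent. In particular, distinct indices give distinct subspaces (when $\alpha\ge 2$), so $S(\mathcal{C})$ has $r$ elements. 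Thus $S(\mathcal{C})\subseteq\mathcal{G}(h,\ell)$, and every $\alpha$-subset spans dimension $\alpha\ell=(\alpha-1)\ell+\ell$, which is exactly the definition of an $\alpha$-$(h,\ell,(\alpha-1)\ell)_q^c$ covering Grassmannian code.

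\textbf{MDS property of $\mathcal{C}(\gamma)$.} By \cref{Lem: relation between}, $d_H(\mathcal{C}(\gamma))=d_{\min}(\mathcal{C},\gamma)=\alpha\ell+\epsilon-h+1$. Decodability means $F(\gamma)$ has rank $h$, so $\mathcal{C}(\gamma)$ is an $[\alpha\ell+\epsilon,h]$ code meeting the classical Singleton bound, i.e. it is MDS.

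The only delicate point is the dimension-exactly-$\ell$ and distinctness bookkeeping (including the degenerate case $\alpha=1$, where distinctness is not needed for the counting argument). Everything else is a direct substitution into the earlier lemmas.
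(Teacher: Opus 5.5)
Your proposal is correct and follows the same route as the paper, which simply states that the theorem follows directly from \cref{Lem: relation between} and \cref{lem3} with $d=\alpha\ell+\epsilon-h+1$. You also check explicitly that each $\langle v_i\rangle$ has dimension exactly $\ell$ and that distinct indices give distinct subspaces. The paper leaves this implicit, but it is needed for $S(\mathcal{C})\subseteq\mathcal{G}(h,\ell)$, so including it is a genuine improvement in rigor.
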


\begin{corollary}\label{cor:covering_necessary}
Let $\alpha$, $\ell$, $\epsilon$, $h$, and $r$ be positive integers such that $\alpha \geq 2$ and $\alpha\ell \leq h \leq \alpha\ell+\epsilon$. If $\GN$ admits a linear network MDS code over $\mathbb{F}_q$, then
\[
r \le \left\lfloor \frac{q^{h-(\alpha-2)\ell} - 1}{q^{\ell} - 1} \right\rfloor + \alpha - 2
\quad \text{and} \quad
\alpha\ell + \epsilon - h + 1 \le q.
\]
\end{corollary}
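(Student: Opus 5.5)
The two inequalities come from the two halves of \cref{thm5}, and I will treat them separately.

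For the bound on $r$, I would apply \cref{thm5} to a linear network MDS code $\mathcal{C}$ over $\mathbb{F}_q$. It shows that $S(\mathcal{C})$ is an $\alpha$-$(h,\ell,(\alpha-1)\ell)_q^c$ covering Grassmannian code.

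The step needing care is the claim $|S(\mathcal{C})| = r$, i.e.\ that the $r$ subspaces $\langle v_i\rangle$ are distinct and each has dimension exactly $\ell$. Take any $\alpha$ middle nodes $i_1,\dots,i_\alpha$. Property 2) of \cref{lem3}, with $d=\alpha\ell+\epsilon-h+1$, says the span of $\langle v_{i_1}\rangle,\dots,\langle v_{i_\alpha}\rangle$ has dimension at least $\alpha\ell$. Each of these subspaces has dimension at most $\ell$, so the span has dimension at most $\alpha\ell$, and equality is forced. Two consequences follow:
\begin{itemize}
\item each $\langle v_{i_k}\rangle$ has dimension exactly $\ell$;
\item the sum is direct, so any two distinct indices give subspaces meeting trivially and hence distinct.
\end{itemize}
Any two indices lie in a common $\alpha$-subset, since $r\ge\alpha$ (otherwise there are no sinks and the bound is trivial). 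So the map $i\mapsto\langle v_i\rangle$ is injective into $\mathcal{G}(h,\ell)$, and therefore $r\le B_q(h,\ell,(\alpha-1)\ell;\alpha)$.

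Next I would invoke \cref{lem:upper bound of covering} with $n=h$, $k=\ell$. Its hypotheses $\alpha\ge2$ and $n\ge\alpha k$ hold because $h\ge\alpha\ell$. This gives
\[
r\le\left\lfloor \frac{q^{h-(\alpha-2)\ell}-1}{q^\ell-1}\right\rfloor+\alpha-2 .
\]

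For the second inequality, \cref{thm5} also says $\mathcal{C}(\gamma)$ is an $[\alpha\ell+\epsilon,h]_q$ MDS code. I would then use the standard fact that a nontrivial MDS code of dimension $h\ge2$ and length $n$ has $n\le q+h-1$. This is the same Griesmer-type bound used in \cref{cor:lower_bounds}, and it gives $\alpha\ell+\epsilon-h+1\le q$.

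The degenerate cases need a short check. If $h=\alpha\ell+\epsilon$, the inequality reads $1\le q$, which is trivial. If $h=1$, then $h\ge\alpha\ell\ge2$ is impossible, so this case does not arise. The MDS-conjecture-free bound $n\le q+k-1$ for $k\ge2$ covers all remaining cases, since duality handles $k=n-1$ trivially.

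The only real obstacle is the injectivity and dimension argument above. Everything else is a direct citation.
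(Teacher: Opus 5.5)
Your proposal is correct and takes the same route as the paper. The paper derives the bound on $r$ from Theorem~\ref{thm5} combined with Lemma~\ref{lem:upper bound of covering} applied with $n=h$, $k=\ell$, and gets $\alpha\ell+\epsilon-h+1\le q$ from the MDS property of $\mathcal{C}(\gamma)$ via the standard length bound for $[n,h]_q$ MDS codes with $h\ge 2$. Your use of Property 2) of Lemma~\ref{lem3} with $d=\alpha\ell+\epsilon-h+1$ shows that the subspaces $\langle v_i\rangle$ are $\ell$-dimensional and pairwise distinct, so $|S(\mathcal{C})|=r$; this fills in a step the paper leaves implicit.
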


\begin{IEEEproof}
This result follows directly from Theorem \ref{thm5} and Lemma \ref{lem:upper bound of covering}.
\end{IEEEproof}

The following theorem provides a necessary and sufficient condition for the existence of scalar network MDS codes in the parameter range \(\alpha\ell \le h \le \alpha\ell+\epsilon\).

\begin{theorem}\label{thm6}
For the generalized combination network \(\GN\) with \(\alpha\ell \le h \le \alpha\ell+\epsilon\), a linear network MDS code over \(\mathbb{F}_q\) exists if and only if both of the following hold:
\begin{enumerate}
    \item There exists an \(\alpha\)-\((h, \ell, (\alpha-1)\ell)_q^c\) covering Grassmannian code with at least \(r\) codewords;
    \item There exists an \([\alpha\ell+\epsilon, h]_q\) MDS code.
\end{enumerate}
\end{theorem}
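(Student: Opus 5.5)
The ``only if'' direction follows from Theorem~\ref{thm5}. If an MDS network code $\mathcal{C}$ over $\mathbb{F}_q$ exists, then $S(\mathcal{C})$ is an $\alpha$-$(h,\ell,(\alpha-1)\ell)_q^c$ covering Grassmannian code, and each $\mathcal{C}(\gamma)$ is an $[\alpha\ell+\epsilon,h]_q$ MDS code. Two points need care.

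First, $S(\mathcal{C})$ must have $r$ distinct members, each of dimension exactly $\ell$. Property 2) of Lemma~\ref{lem3} with $d=\alpha\ell+\epsilon-h+1$ gives $\dim(\langle v_{i_1}\rangle+\cdots+\langle v_{i_\alpha}\rangle)\ge \alpha\ell$. Since each summand has dimension at most $\ell$, every $\langle v_i\rangle$ has dimension exactly $\ell$ and any $\alpha$ of them form a direct sum. Distinctness then follows because $\alpha\ge 2$, so any two of them meet trivially. Hence $|S(\mathcal{C})|=r$, which gives condition 1).

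Second, condition 2) is immediate from the MDS property of any single $\mathcal{C}(\gamma)$. This requires at least one sink, i.e.\ $r\ge\alpha$, which is implicit in the network definition.

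For the ``if'' direction, take $r$ codewords $U_1,\dots,U_r\in\mathcal{G}(h,\ell)$ of the covering code and an $[\alpha\ell+\epsilon,h]_q$ MDS code with generator matrix $G$. For a sink $\gamma=\{i_1,\dots,i_\alpha\}$, the space $W_\gamma=U_{i_1}\oplus\cdots\oplus U_{i_\alpha}$ has dimension $\alpha\ell$. The aim is to choose bases $v_i$ of $U_i$, together with direct-link matrices $a_\gamma$, so that $F(\gamma)=(v_{i_1}\cdots v_{i_\alpha}\ a_\gamma)$ generates an MDS code. The difficulty is that the bases $v_i$ are shared across sinks, while MDS-ness of $F(\gamma)$ is a condition on every $h$-subset of columns.

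The trick is to transport $G$ by an invertible change of coordinates, separately for each sink. The columns of $G$ are split as $G=(G_1\ G_2)$, with $G_1$ consisting of the first $\alpha\ell$ columns. Since $h\ge\alpha\ell$ and the code is MDS, any $\alpha\ell\le h$ columns are linearly independent, so $G_1$ has rank $\alpha\ell$. One then seeks $M_\gamma\in GL_h(\mathbb{F}_q)$ with $M_\gamma G_1=(v_{i_1}\cdots v_{i_\alpha})$. This is always solvable because both matrices have full column rank $\alpha\ell$: extend both to bases of $\mathbb{F}_q^h$ and map one basis to the other. The fixed bases $v_i$ (chosen arbitrarily, once for each middle node) are thus compatible with every sink simultaneously.

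Next set $a_\gamma=M_\gamma G_2$. Then $F(\gamma)=M_\gamma G$, and left multiplication by an invertible matrix preserves the generated code, so $F(\gamma)$ generates an MDS code of length $\alpha\ell+\epsilon$ and dimension $h$. In particular $F(\gamma)$ has full row rank, so the code is decodable.

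Finally, let the middle nodes forward. The extended global encoding matrix then has exactly the form \eqref{eq:G}, and Lemma~\ref{Lem: relation between} gives $d_{\min}(\mathcal{C},\gamma)=\alpha\ell+\epsilon-h+1$ for every sink, so the code is a network MDS code. The only step needing real care is the existence of $M_\gamma$, which relies on the direct-sum consequence of the covering property; the remainder is bookkeeping.
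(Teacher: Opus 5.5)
Your proof is correct and follows the paper's argument: the ``only if'' part comes from Theorem~\ref{thm5}, and the ``if'' part transports one fixed MDS generator matrix $G$ to each sink by an invertible change of basis. The paper's matrix is explicitly $V(\gamma)G_1^{-1}$, where $G_1$ is the first $h$ columns of $G$ and $V(\gamma)$ extends $(v_{i_1}\cdots v_{i_\alpha})$ to an invertible matrix; this is exactly your basis-extension construction of $M_\gamma$.

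Your extra check that the $\langle v_i\rangle$ are $r$ distinct $\ell$-dimensional subspaces fills a step the paper leaves implicit. It does rely on $\alpha\ge 2$, which the statement needs anyway. For $\alpha=1$, $h=\ell$, $r\ge 2$, a network MDS code exists whenever an $[\ell+\epsilon,\ell]_q$ MDS code does, yet $\mathbb{F}_q^{\ell}$ has only one $\ell$-dimensional subspace, so condition 1) fails.
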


\begin{IEEEproof}
The ``only if" direction follows immediately from \cref{thm5}.

For the ``if" direction, we construct a scalar network code explicitly using the two given components. Let \(\{s_1, s_2, \dots, s_r\}\) be a subset of the \(\alpha\)-\((h, \ell, (\alpha-1)\ell)_q^c\) covering Grassmannian code. Then for each \(i \in [r]\), \(s_i\) is an \(\ell\)-dimensional subspace of \(\mathbb{F}_q^h\). Choose a basis of \(s_i\) and form the \(h \times \ell\) matrix \(v_i\) whose columns are these basis vectors; this matrix serves as the source-to-middle encoding matrix for middle node \(i\). Let \(G\) be a generator matrix of the \([\alpha\ell+\epsilon, h]_q\) MDS code \(C\). Partition \(G\) as \(G = \begin{pmatrix} G_1 & G_2 \end{pmatrix}\), where \(G_1 \in \mathbb{F}_q^{h \times h}\) and \(G_2 \in \mathbb{F}_q^{h \times (\alpha\ell+\epsilon-h)}\). Since any \(h\) columns of \(G\) are linearly independent, \(G_1\) is invertible.

For any sink node \(\gamma = \{i_1, i_2, \dots, i_\alpha\}\), denote
\[
V_1(\gamma) = \begin{pmatrix} v_{i_1} & v_{i_2} & \cdots & v_{i_\alpha} \end{pmatrix}.
\]
By the definition of covering Grassmannian codes, \(\operatorname{rank}(V_1(\gamma)) = \alpha\ell\). Since \(h \ge \alpha\ell\), we can choose \(h - \alpha\ell\) linearly independent vectors to form \(V_2(\gamma)\) such that
\(
V(\gamma) = \begin{pmatrix} V_1(\gamma) & V_2(\gamma) \end{pmatrix}
\) is invertible. Then \(V(\gamma)G_1^{-1}G\) is an alternative generator matrix of \(C\):
\[
V(\gamma)G_1^{-1}G = \begin{pmatrix} V_1(\gamma) & V_2(\gamma) & V(\gamma)G_1^{-1}G_2 \end{pmatrix}.
\]
Select the last \(\epsilon\) columns of \(V(\gamma)G_1^{-1}G\) as the local encoding vectors on the direct links associated with \(\gamma\). By construction, the decoding matrix at each sink node is a generator matrix of the MDS code \(C\). Hence, by \cref{Lem: relation between}, the constructed network code is MDS.
\end{IEEEproof}

Combining Lemmas \ref{lem:2lower bound of covering}--\ref{lem:greedy} yields the following corollary.

\begin{corollary}\label{cor4}
Let $\alpha$, $\ell$, $\epsilon$, $h$, and $r$ be positive integers such that $\alpha\ell \le h \le \alpha\ell+\epsilon$. If \(q \ge \alpha\ell+\epsilon-1\) and either of the following conditions holds:
\begin{itemize}
    \item $\alpha=2$ and $q^{\ell} \left\lfloor \frac{q^{h-\ell}-1}{q^\ell-1} \right\rfloor + 1 \ge r$;
    \item $\alpha=3$ and $\frac{q^{(\lfloor \frac{h/\ell-3}{2} \rfloor + 2)\ell}-1}{q^{\ell}-1} + 2\left( \frac{h/\ell-3}{2} - \left\lfloor \frac{h/\ell-3}{2} \right\rfloor \right) \ge r$;
    \item $\alpha\ge 4$ and $q^{\ell} + \left\lfloor \frac{h}{\ell} \right\rfloor - \alpha + 1 \ge r$;
    \item $\alpha\ge 3$ and $T_1 - T_2 - T_3 > 0$, where $T_1,T_2,T_3$ are as defined in Lemma~\ref{lem:greedy} with $n=h$, $k=\ell$, and $m=r$.
\end{itemize}
then a linear network MDS code over \(\mathbb{F}_q\) exists for the network \(\GN\).
\end{corollary}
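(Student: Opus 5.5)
The plan is to verify the two conditions of Theorem~\ref{thm6} separately and then invoke its ``if'' direction.

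\textbf{Condition 2.} Since $q \ge \alpha\ell+\epsilon-1$ and $q$ is a prime power, a doubly extended Reed--Solomon code exists over $\mathbb{F}_q$ of length $q+1 \ge \alpha\ell+\epsilon$. Its dimension $h$ satisfies $h \le \alpha\ell+\epsilon \le q+1$. Shortening or puncturing this code yields an $[\alpha\ell+\epsilon, h]_q$ MDS code, exactly as in the proof of Corollary~\ref{cor2}. The degenerate cases $h=\alpha\ell+\epsilon$ (the identity code) and $h=1$ (the repetition code) are MDS trivially.

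\textbf{Condition 1.} We need an $\alpha$-$(h,\ell,(\alpha-1)\ell)_q^c$ covering Grassmannian code with at least $r$ codewords. This is the statement $B_q(h,\ell,(\alpha-1)\ell;\alpha) \ge r$, plus the remark that any subset of a covering Grassmannian code is again one, since the condition is on $\alpha$-subsets. We handle each bullet by specializing $n=h$ and $k=\ell$; the hypothesis $n \ge \alpha k$ needed by the lemmas is exactly $h \ge \alpha\ell$.
\begin{itemize}
\item For $\alpha=2$, Lemma~\ref{lem:2lower bound of covering} gives $B_q(h,\ell,\ell;2) \ge q^\ell\lfloor (q^{h-\ell}-1)/(q^\ell-1)\rfloor+1 \ge r$.
\item For $\alpha=3$, Lemma~\ref{lem:3lower bound of covering} gives the stated quantity. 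It requires $\ell \mid h$; otherwise we use the subsequent remark, replacing $h/\ell$ by $\lfloor h/\ell\rfloor$, which matches the bullet when $\ell \mid h$. I would state this caveat explicitly.
\item For $\alpha \ge 4$, Lemma~\ref{lem:4lower bound} gives $B_q \ge q^\ell+\lfloor h/\ell\rfloor-\alpha+1 \ge r$.
\item For $\alpha \ge 3$ with $T_1-T_2-T_3>0$ at $m=r$, we need an $m$-codeword code. Lemma~\ref{lem:greedy} guarantees such a code for the maximal $m$ with positivity. Since $T_2+T_3$ is nondecreasing in $m$, the maximal $m$ is at least $r$, so an $r$-codeword code exists by taking a subset.
\end{itemize}

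\textbf{Conclusion and main obstacle.} With both conditions in hand, Theorem~\ref{thm6} yields a scalar network MDS code over $\mathbb{F}_q$. The main obstacle is bookkeeping rather than mathematics: checking the monotonicity in the greedy bullet, and the divisibility hypothesis in the $\alpha=3$ bullet. Everything else is direct substitution into the earlier lemmas.
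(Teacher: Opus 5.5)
Your proposal is correct and follows the paper's route. The paper obtains this corollary by combining the covering-code lower bounds of Lemmas~\ref{lem:2lower bound of covering}--\ref{lem:greedy} (with $n=h$, $k=\ell$, passing to an $r$-subset) and the existence of an $[\alpha\ell+\epsilon,h]_q$ MDS code for $q\ge\alpha\ell+\epsilon-1$ with the ``if'' direction of Theorem~\ref{thm6}.

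The $\ell\nmid h$ caveat you flag for $\alpha=3$ needs no amendment to the statement. The bullet's expression exceeds its $\lfloor h/\ell\rfloor$-version by exactly the fractional part of $h/\ell$, which is less than $1$, and the $\lfloor h/\ell\rfloor$-version is an integer. So for integer $r$ the two conditions coincide, and the remark after Lemma~\ref{lem:3lower bound of covering} applies.
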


\begin{example}
\begin{enumerate}
    \item Consider the generalized combination network $(3,1)$-$\mathcal{N}_{4,10,6}$, which satisfies the assumptions of Corollary~\ref{cor4}. The necessary condition \(q \ge \alpha\ell+\epsilon-1\) gives \(q \ge 5\). Since $\alpha=3$, the second condition in Corollary~\ref{cor4} holds for $q\ge 8$, while the fourth condition holds for $q\ge 5$. Therefore, by Corollary~\ref{cor4}, a linear network MDS code over $\mathbb{F}_q$ exists for all $q\ge 5$, certified by the fourth condition.
    \item Consider the generalized combination network $(3,2)$-$\mathcal{N}_{8,100,9}$, which satisfies the assumptions of Corollary~\ref{cor4}. The necessary condition gives $q\ge 8$. Since $\alpha=3$, the second condition holds for $q\ge 11$, while the fourth holds for $q\ge 17$. Thus, Corollary~\ref{cor4} guarantees a linear network MDS code over $\mathbb{F}_q$ for all $q\ge 11$ via the second condition.
    \item Consider the generalized combination network $(4,1)$-$\mathcal{N}_{9,50,9}$, which satisfies the assumptions of Corollary~\ref{cor4}. The necessary condition gives $q\ge 8$. Since $\alpha=5$, the third condition holds for $q\ge 47$, while the fourth holds for $q\ge 13$. Thus, Corollary~\ref{cor4} guarantees a linear network MDS code over $\mathbb{F}_q$ for all $q\ge 13$ via the fourth condition.
    \item Consider the generalized combination network $(5,2)$-$\mathcal{N}_{11,150,13}$, which satisfies the assumptions of Corollary~\ref{cor4}. The necessary condition gives $q\ge 12$. Since $\alpha=4$, the third condition holds for $q\ge 13$, while the fourth holds for $q\ge 29$. Thus, Corollary~\ref{cor4} guarantees a linear network MDS code over $\mathbb{F}_q$ for all $q\ge 13$ via the third condition.
\end{enumerate}
\end{example}

We now analyze the properties of scalar network codes for the generalized combination network $\GN$ under the parameter constraint \(\alpha \le h \le \alpha\ell\).

\begin{lemma}\label{thm7}
Let \(\mathcal{C}\) be a scalar linear network code for the generalized combination network \(\GN\), where \(\alpha \le h \le \alpha\ell\). If \(\mathcal{C}\) is decodable and \(d_{\min}(\mathcal{C}, \gamma) \ge d\) for each sink node \(\gamma \in \binom{[r]}{\alpha}\), then
\begin{enumerate}
    \item\label{result1} When \(\epsilon+1 \le d \le \alpha\ell+\epsilon-h+1\), for any \(\alpha\) source-to-middle encoding matrices \(v_{i_1}, \dots, v_{i_\alpha}\), the matrix
    \[
    V = \begin{pmatrix} v_{i_1} & v_{i_2} & \cdots & v_{i_\alpha} \end{pmatrix}
    \]
    is the generator matrix of an \([\alpha\ell, h, \ge d-\epsilon]_q\) linear code.
    \item\label{result2} When \(\max\{\epsilon-h+1, 1\} \le d \le \epsilon+1\), the collection \(S(\mathcal{C})\) satisfies
    \begin{enumerate}
        \item each subspace in \(S(\mathcal{C})\) has dimension at most \(\ell\);
        \item any \(\alpha\) subspaces in \(S(\mathcal{C})\) span a subspace of dimension at least \(h-\epsilon+d-1\).
    \end{enumerate}
\end{enumerate}
\end{lemma}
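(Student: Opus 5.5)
Both parts start from \cref{Lem: relation between}: the hypothesis \(d_{\min}(\mathcal{C},\gamma)\ge d\) becomes \(d_H(\mathcal{C}(\gamma))\ge d\). Equivalently, any \(\alpha\ell+\epsilon-d+1\) columns of \(F(\gamma)=\begin{pmatrix} V & a_\gamma\end{pmatrix}\) form a matrix of rank \(h\). Otherwise some nonzero \(x\) would be orthogonal to those columns, and \(xF(\gamma)\) would have weight at most \(d-1\). As in \cref{thm3}, decodability together with this fact shows that \(F(\gamma)\) has full row rank \(h\). Everything below works with the fixed sink \(\gamma=\{i_1,\dots,i_\alpha\}\) whose middle nodes are the chosen ones.

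For part \ref{result1}, I would show that any \(\alpha\ell-(d-\epsilon)+1\) columns of \(V\) already have rank \(h\). Fix such a set \(S\) of columns of \(V\), and adjoin all \(\epsilon\) columns of \(a_\gamma\). This gives \(\alpha\ell+\epsilon-d+1\) columns of \(F(\gamma)\), and the count is admissible because \(d\ge \epsilon+1\) gives \(|S|\le \alpha\ell\). Hence \(S\cup a_\gamma\) has rank \(h\).

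That alone does not give rank \(h\) for \(S\), so I would argue directly with weights instead. Take a nonzero \(x\in\mathbb{F}_q^h\). The weight of \(xF(\gamma)\) is at least \(d\), and the \(a_\gamma\) part contributes at most \(\epsilon\) of it. Therefore \(\mathrm{wt}_H(xV)\ge d-\epsilon\ge 1\). This shows two things at once: \(V\) has rank \(h\), so the code it generates has dimension \(h\), and every nonzero codeword \(xV\) has weight at least \(d-\epsilon\). So \(V\) generates an \([\alpha\ell,h,\ge d-\epsilon]_q\) code. The step to handle carefully is that \(d\ge\epsilon+1\) is exactly what makes \(d-\epsilon\ge1\), which is what gives injectivity of \(x\mapsto xV\) and hence dimension \(h\).

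For part \ref{result2}, property (a) holds trivially because \(v_i\) has \(\ell\) columns. For property (b), I would mirror the proof of \cref{lem3}. Let \(N=\alpha\ell+\epsilon-d+1\). Since \(d\le\epsilon+1\), we get \(N\ge\alpha\ell\). Since \(d\ge1\), we get \(N\le\alpha\ell+\epsilon\). So we may choose an \(N\)-column submatrix \(\hat F(\gamma)\) of \(F(\gamma)\) that contains all of \(V\) together with \(N-\alpha\ell=\epsilon-d+1\) columns of \(a_\gamma\). This submatrix has rank \(h\), and the extra columns can raise the rank of \(V\) by at most \(\epsilon-d+1\). Hence
\[
\operatorname{rank}(V)\ge h-\epsilon+d-1.
\]
This bound is meaningful, i.e.\ positive, exactly because \(d\ge \epsilon-h+1\). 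Therefore the span of \(\langle v_{i_1}\rangle,\dots,\langle v_{i_\alpha}\rangle\) has dimension at least \(h-\epsilon+d-1\). I expect no serious obstacle in this part; the only care needed is checking that the column counts \(N-\alpha\ell\) lie in \([0,\epsilon]\) under the stated range of \(d\).
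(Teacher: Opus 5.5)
Your proof is correct and follows the paper's argument. Part 2 is exactly the adaptation of the proof of Lemma~\ref{lem3} that the paper invokes. In part 1, your bound $\mathrm{wt}_H(xV)\ge \mathrm{wt}_H(xF(\gamma))-\epsilon\ge d-\epsilon$ is the weight form of the paper's observation that any $\alpha\ell+\epsilon-d+1$ columns of $V$ are already columns of $F(\gamma)$ and hence have rank $h$.

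That observation is also what your abandoned detour missed. A set $S$ of $\alpha\ell-(d-\epsilon)+1=\alpha\ell+\epsilon-d+1$ columns of $V$ is itself an admissible column set of $F(\gamma)$, so it has rank $h$ with nothing adjoined. Adjoining $a_\gamma$ would in fact give $\alpha\ell+2\epsilon-d+1$ columns, not $\alpha\ell+\epsilon-d+1$.
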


\begin{IEEEproof}
For any sink node \(\gamma = \{i_1, i_2, \dots, i_\alpha\}\), the decoding matrix at \(\gamma\) is \(F(\gamma) = \begin{pmatrix} V & a_\gamma \end{pmatrix}\), where \(a_\gamma\) is the direct-link encoding matrix. By \cref{Lem: relation between}, \(d_{\min}(\mathcal{C}, \gamma) \ge d\) implies \(d_H(\mathcal{C}(\gamma)) \ge d\). Hence, any \(\alpha\ell+\epsilon-d+1\) columns of \(F(\gamma)\) form a matrix of full row rank. Since \(\alpha\ell+\epsilon-d+1 \le \alpha\ell\), any \(\alpha\ell+\epsilon-d+1\) columns of \(V\) also form a matrix of full row rank. Thus, the code generated by \(V\) is an \([\alpha\ell, h, \ge d-\epsilon]_q\) linear code.

Item 2) follows by adapting the subspace dimension argument used in the proof of \cref{lem3}.
\end{IEEEproof}

Combining Lemma \ref{Lem: relation between} and Lemma \ref{thm7} yields the following theorem.

\begin{theorem}\label{cor5}
Let \(\mathcal{C}\) be a scalar linear network code for the generalized combination network \(\GN\), where \(\alpha \le h \le \alpha\ell\). If \(\mathcal{C}\) is MDS, then for any sink node \(\gamma = \{i_1, i_2, \dots, i_\alpha\}\), \(\mathcal{C}(\gamma)\) is MDS, and
\(
V = \begin{pmatrix} v_{i_1} & v_{i_2} & \cdots & v_{i_\alpha} \end{pmatrix}
\) is the generator matrix of an \([\alpha\ell, h]\) MDS code.
\end{theorem}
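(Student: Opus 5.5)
This follows by specializing \cref{thm7} to the MDS value of $d$ and combining it with \cref{Lem: relation between}. The plan has two parts.

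\emph{First claim.} Assume $\mathcal{C}$ is MDS. Each sink $\gamma$ has min-cut $C_\gamma=\alpha\ell+\epsilon$, since its $\alpha\ell+\epsilon$ incoming edges form a cut and there are that many edge-disjoint paths from $\sigma$. The MDS property therefore gives $d_{\min}(\mathcal{C},\gamma)=\alpha\ell+\epsilon-h+1$. By \cref{Lem: relation between}, $d_H(\mathcal{C}(\gamma))=\alpha\ell+\epsilon-h+1$. Decodability gives $\operatorname{rank}F(\gamma)=h$, so $\mathcal{C}(\gamma)$ is an $[\alpha\ell+\epsilon,h,\alpha\ell+\epsilon-h+1]_q$ code, which is MDS.

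\emph{Second claim.} Set $d=\alpha\ell+\epsilon-h+1$ and apply item~\ref{result1} of \cref{thm7}. Its hypothesis $\epsilon+1\le d\le\alpha\ell+\epsilon-h+1$ reduces to $h\le\alpha\ell$, which is assumed. The lemma then says $V$ generates an $[\alpha\ell,h,\ge \alpha\ell-h+1]_q$ linear code, and implicitly that $V$ has full row rank $h$. To confirm the rank directly: any $\alpha\ell+\epsilon-d+1=h\le\alpha\ell$ columns of $V$ have rank $h$.

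For the distance, the Singleton bound gives $d_H\le \alpha\ell-h+1$, so equality holds and the code generated by $V$ is an $[\alpha\ell,h]$ MDS code.

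Nothing here is a real obstacle; the proof is a parameter check. The only points needing care are verifying that the MDS distance value satisfies the range condition of item~\ref{result1}, and confirming that the dimension of the code generated by $V$ is exactly $h$, i.e.\ that $V$ has full row rank.
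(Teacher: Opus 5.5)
Your proposal is correct and follows the paper's route: the paper derives this theorem by combining Lemma~\ref{Lem: relation between} with item~1) of Lemma~\ref{thm7} at the MDS value $d=\alpha\ell+\epsilon-h+1$, which is exactly your argument. Your explicit checks fill in details the paper leaves implicit: that $C_\gamma=\alpha\ell+\epsilon$, that this $d$ lies in the admissible range precisely because $h\le\alpha\ell$, and that the Singleton bound forces equality for the code generated by $V$.
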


\begin{corollary}\label{cor:cor5}
Let $\alpha$, $\ell$, $\epsilon$, $h$, and $r$ be positive integers such that \(h \ge 2\ell\) and \(\alpha \le h \le \alpha\ell\). If \(\GN\) admits a linear network MDS code over \(\mathbb{F}_q\), then
\[
r \le \left\lfloor \frac{q^{h-(\lfloor h/\ell \rfloor-2)\ell} - 1}{q^{\ell} - 1} \right\rfloor + \left\lfloor \frac{h}{\ell} \right\rfloor - 2
\quad \text{and} \quad
\alpha\ell + \epsilon - h + 1 \le q.
\]
\end{corollary}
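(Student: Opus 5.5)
The plan is to handle the two inequalities separately. Both come from Theorem~\ref{cor5}: for every sink $\gamma=\{i_1,\dots,i_\alpha\}$, the code $\mathcal{C}(\gamma)$ is an $[\alpha\ell+\epsilon,h]_q$ MDS code, and $V=\begin{pmatrix} v_{i_1} & \cdots & v_{i_\alpha}\end{pmatrix}$ generates an $[\alpha\ell,h]_q$ MDS code.

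\textbf{Field-size inequality.} Since $h\ge 2\ell\ge 2$, the existence of an $[\alpha\ell+\epsilon,h]_q$ MDS code with $h\ge 2$ gives $q\ge \alpha\ell+\epsilon-h+1$. This follows from the Griesmer-bound argument already used in Corollary~\ref{cor:lower_bounds}.

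\textbf{Bound on $r$.} Set $\beta=\lfloor h/\ell\rfloor$. The hypotheses give $2\le\beta\le\alpha$ and $\beta\ell\le h$. The goal is to show that $S(\mathcal{C})$ is a $\beta$-$(h,\ell,(\beta-1)\ell)_q^c$ covering Grassmannian code with $r$ distinct members.

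\begin{itemize}
\item Take any $\beta$ distinct middle nodes $i_1,\dots,i_\beta$. Since $\beta\le\alpha\le r$, this set extends to an $\alpha$-subset, i.e.\ to a sink $\gamma$.
\item The corresponding $V$ generates an MDS code of dimension $h$, so any $h$ of its columns are linearly independent.
\item The $\beta\ell\le h$ columns of $\begin{pmatrix} v_{i_1} & \cdots & v_{i_\beta}\end{pmatrix}$ are therefore linearly independent.
\item Consequently each $\langle v_i\rangle$ has dimension exactly $\ell$, and any $\beta$ of these subspaces span dimension $\beta\ell=\ell+(\beta-1)\ell$.
\item Because $\beta\ge 2$, any two of the subspaces span dimension $2\ell$. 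Hence the $\langle v_i\rangle$ are pairwise distinct (indeed they intersect trivially), so $|S(\mathcal{C})|=r$.
\end{itemize}

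Applying Lemma~\ref{lem:upper bound of covering} with $n=h$, $k=\ell$ and $\alpha$ replaced by $\beta$ is legitimate, since $\beta\ge 2$ and $h\ge\beta\ell$. It yields
\[
r\le B_q(h,\ell,(\beta-1)\ell;\beta)\le \left\lfloor \frac{q^{h-(\beta-2)\ell}-1}{q^\ell-1}\right\rfloor+\beta-2,
\]
which is the claimed bound.

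\textbf{Main obstacle.} The delicate step is the passage from the $\alpha$-wise MDS property to the $\beta$-wise covering property. It requires that every $\beta$-subset of middle nodes sits inside some sink, which holds because $\beta\le\alpha\le r$. It also requires $\beta\ell\le h$, so that the MDS property forces full column rank. Distinctness of the subspaces must be checked as well, so that the covering code has exactly $r$ codewords rather than fewer.
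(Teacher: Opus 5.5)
Your proposal is correct and follows the paper's own argument: Theorem~\ref{cor5} makes $\{\langle v_i\rangle\}_{i\in[r]}$ a $\lfloor h/\ell\rfloor$-$(h,\ell,(\lfloor h/\ell\rfloor-1)\ell)_q^c$ covering Grassmannian code, Lemma~\ref{lem:upper bound of covering} then bounds $r$, and the field-size inequality comes from the $[\alpha\ell+\epsilon,h]_q$ MDS code $\mathcal{C}(\gamma)$ via the Griesmer bound, as in Corollary~\ref{cor:lower_bounds}. The extra details you supply are exactly the steps the paper leaves implicit: placing each $\lfloor h/\ell\rfloor$-subset of middle nodes inside a sink, and checking that the $r$ subspaces are pairwise distinct.
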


\begin{IEEEproof}
Since \(2 \le \lfloor h/\ell \rfloor \le \alpha\), Theorem \ref{cor5} implies that \(\{\langle v_i \rangle \}_{i \in [r]}\) forms an \(\lfloor h/\ell \rfloor\)-\((h, \ell, (\lfloor h/\ell \rfloor-1)\ell)_q^c\) covering Grassmannian code. The bound on \(r\) then follows from Lemma \ref{lem:upper bound of covering}.
\end{IEEEproof}

Let \(n_{\max} = \max(\alpha\ell+\epsilon, r\ell)\), and assume that there exists an \([n_{\max}, h]_q\) MDS code. Using the same approach as in Theorem \ref{thm4}, we can construct a linear network MDS code over \(\mathbb{F}_q\) for \(\GN\) with \(\alpha \le h \le \alpha\ell\). Accordingly, we derive the following upper bound on the minimum field size:
\[
q^{\mathrm{MDS}}(\GN) \le \psi(n_{\max} - 1).
\]

\begin{remark}
A general upper bound on the minimum field size of network MDS codes was given in \cite{GY2021}: for a network \(\mathcal{N}\) with \(h\) source messages and each sink node \(\gamma \in R\) satisfying \(C_\gamma \ge h\),
\[
q^{\mathrm{MDS}}(\mathcal{N}) \le \sum_{\gamma \in R} \left| \mathscr{A}_\gamma(C_\gamma - h) \right|.
\]
As shown in \cite[Corollary 10]{GY2021}, this bound can be expressed explicitly as
\begin{equation}\label{eq:best_bound}
q^{\mathrm{MDS}}(\mathcal{N}) \le \sum_{\gamma \in R} \binom{|\operatorname{In}(\gamma)|}{C_\gamma - h}.
\end{equation}
For the network \(\GN\), the bound in \eqref{eq:best_bound} evaluates to \(\binom{r}{\alpha}\binom{\alpha\ell+\epsilon}{h}\), which is typically much larger than our bound. In contrast, our results show that an MDS code exists whenever \(q \ge \max(\alpha\ell+\epsilon, r\ell) - 1\) for \(h \le \alpha\ell\). For the parameter range \(\alpha\ell \le h \le \alpha\ell+\epsilon\), Corollary~\ref{cor4} guarantees valid MDS constructions with an even smaller required field size. Thus, our upper bound is significantly smaller than the bound in \eqref{eq:best_bound} for this network.
\end{remark}

\section{Vector Error-Correcting Codes for Generalized Combination Networks}

\subsection{\(\mathbb{F}_q\)-Linear Codes over \(\mathbb{F}_q^t\)}
We first recall essential background on $\mathbb{F}_q$-linear codes over $\mathbb{F}_q^t$, which will be extensively used in our subsequent analysis of vector error-correcting codes.

Let \(q\) be a prime power, and let \(n, t\) be positive integers. We consider \(\mathbb{F}_q^t\)-valued vectors of length \(n\). For any vector \(\mathbf{x} = (x_1, x_2, \dots, x_n) \in (\mathbb{F}_q^t)^n\), its Hamming weight is defined as the number of nonzero components \(x_i\), namely
\[
\operatorname{wt}_H(\mathbf{x}) \triangleq \left| \left\{ 1 \le i \le n : x_i \neq \mathbf{0}_t \right\} \right|,
\]
where \(\mathbf{0}_t\) denotes the all-zero vector of length \(t\). For two vectors \(\mathbf{x}, \mathbf{y} \in (\mathbb{F}_q^t)^n\), their Hamming distance equals the Hamming weight of their difference, i.e.,
\[
d_H(\mathbf{x}, \mathbf{y}) \triangleq \operatorname{wt}_H(\mathbf{x} - \mathbf{y}).
\]
An \(\mathbb{F}_q\)-linear code \(C \) over \(\mathbb{F}_q^t\) of length $n$ is defined as a vector subspace of \((\mathbb{F}_q^t)^n\) over \(\mathbb{F}_q\). Its minimum Hamming distance is given by
\[
d_H(C) \triangleq \min_{\substack{x, x' \in C \\ x \neq x'}} d_H(x, x').
\]

\begin{lemma}\label{lem: vector linear}
Let $C$ be an \(\mathbb{F}_q\)-linear code over \(\mathbb{F}_q^t\) of length $n$ with generator matrix $G$. Partition $G$ blockwise as
\[
G = \begin{pmatrix} B_1 & B_2 & \cdots & B_n \end{pmatrix},
\]
where each block \(B_i\) contains exactly $t$ columns. Then the minimum Hamming distance of \(C\) is at least $d$ if and only if any \(n-d+1\) column blocks form a full row rank submatrix.
\end{lemma}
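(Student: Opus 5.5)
The plan is to make explicit how a message vector is encoded. Let $G\in\mathbb{F}_q^{k\times nt}$ be the generator matrix, where $k=\dim_{\mathbb{F}_q}C$ and $G$ has full row rank $k$. Every codeword is $\mathbf{c}=\mathbf{x}G$ for a unique $\mathbf{x}\in\mathbb{F}_q^{k}$, and its $i$-th component is $c_i=\mathbf{x}B_i\in\mathbb{F}_q^t$. Because $C$ is $\mathbb{F}_q$-linear, $d_H(C)$ equals the minimum Hamming weight of a nonzero codeword, and $c_i=\mathbf{0}_t$ exactly when $\mathbf{x}$ lies in the left kernel of $B_i$. For an index set $S\subseteq[n]$, write $B_S$ for the horizontal concatenation of the blocks $B_i$, $i\in S$. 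Then $\mathbf{x}B_S=\mathbf{0}$ holds exactly when $c_i=\mathbf{0}_t$ for all $i\in S$. The whole lemma is then a translation between zero coordinates of codewords and left kernels of the block submatrices $B_S$.

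\textbf{Sufficiency.} Suppose that for every $S$ with $|S|=n-d+1$ the matrix $B_S$ has full row rank $k$. Take a nonzero codeword $\mathbf{c}=\mathbf{x}G$, so that $\mathbf{x}\neq\mathbf{0}$, and suppose $\operatorname{wt}_H(\mathbf{c})\le d-1$. Then at least $n-d+1$ components of $\mathbf{c}$ vanish. Choose $S$ among those zero positions with $|S|=n-d+1$; this gives $\mathbf{x}B_S=\mathbf{0}$. This contradicts full row rank of $B_S$, so every nonzero codeword has weight at least $d$.

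\textbf{Necessity.} Suppose instead that some $S$ with $|S|=n-d+1$ has $\operatorname{rank}(B_S)<k$. Then there is a nonzero $\mathbf{x}$ with $\mathbf{x}B_S=\mathbf{0}$. The codeword $\mathbf{x}G$ is nonzero because $G$ has full row rank. It vanishes on all $n-d+1$ blocks in $S$, so its weight is at most $d-1$, contradicting $d_H(C)\ge d$.

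The only delicate points are minor. First, "full row rank" of $B_S$ means injectivity of $\mathbf{x}\mapsto \mathbf{x}B_S$, which is what the argument uses. Second, I would handle the degenerate range $d>n$ (or $n-d+1\le 0$) by convention, noting that the statement is intended for $1\le d\le n$. No step is a real obstacle; this is the block analogue of the classical column criterion, with columns replaced by $t$-column blocks.
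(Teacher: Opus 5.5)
Your proof is correct. It is the standard block analogue of the classical column criterion: a nonzero message vector annihilating $n-d+1$ blocks is exactly a nonzero codeword of weight at most $d-1$. The paper does not prove this lemma itself but defers to \cite[Lemma 3]{WS2024}, and your argument is the natural one to supply there. You are right to make the full-row-rank assumption on $G$ explicit, since the necessity direction depends on it to guarantee $\mathbf{x}G\neq\mathbf{0}$.
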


The proof of the above lemma can be found in \cite[Lemma 3]{WS2024}. For an \(\mathbb{F}_q\)-linear code \(C \subseteq (\mathbb{F}_q^t)^n\) with \(\mathbb{F}_q\)-dimension \(k'\) and minimum Hamming distance $d$, the Singleton bound shows that
\[
d \le n + 1 - \frac{k'}{t}.
\]
Codes that achieve equality in this bound are called MDS codes. For such codes, \(k'/t\) is necessarily an integer. Set \(k = k'/t\), and call such a code an \(\mathbb{F}_q\)-linear \([n, k]\) MDS code over \(\mathbb{F}_q^t\).


\subsection{Vector Network Error-Correcting Codes}

In this section, we focus on vector network coding for the generalized combination network $\GN$. As established earlier, we may assume without loss of generality that each middle node simply forwards the received packets. For a sink node \(\gamma = \{i_1, i_2, \dots, i_\alpha\}\), the extended global encoding matrix is
\[
\begin{pmatrix} F(\gamma) \\ G(\gamma) \end{pmatrix},
\]
where
\[
F(\gamma) = \begin{pmatrix} v_{i_1} & v_{i_2} & \cdots & v_{i_\alpha} & a_\gamma \end{pmatrix} \in \mathbb{F}_q^{th \times t(\alpha\ell+\epsilon)}
\]
and \(G(\gamma) \in \mathbb{F}_q^{tL \times t(\alpha\ell+\epsilon)}\). If \(G(\gamma)\) is partitioned as \(G(\gamma) = (G_{ij})_{i \in [L], j \in [\alpha\ell+\epsilon]}\) with each \(G_{ij} \in \mathbb{F}_q^{t \times t}\), then \(G(\gamma)\) has two key properties:
\begin{enumerate}
    \item There exists a subset \(A \subseteq [L]\) with \(|A| = \alpha\ell+\epsilon\) such that the submatrix \((G_{ij})_{i \in A, j \in [\alpha\ell+\epsilon]}\) is the \(t(\alpha\ell+\epsilon) \times t(\alpha\ell+\epsilon)\) identity matrix;
    \item For each \(i \in [L]\), there is at most one nonzero \(G_{ij}\).
\end{enumerate}

Let \(\mathcal{C}(\gamma)\) denote the \(\mathbb{F}_q\)-linear code over \(\mathbb{F}_q^t\) generated by \(F(\gamma)\). The following lemma bridges the vector network code's minimum distance to the Hamming distance of \(\mathcal{C}(\gamma)\):

\begin{lemma}\label{lem:relation between vector}
Let \(\mathcal{C}\) be a vector network code for the generalized combination network \(\GN\). For every sink node \(\gamma \in \binom{[r]}{\alpha}\),
\[
d_{\min}(\mathcal{C}, \gamma) = d_H(\mathcal{C}(\gamma)).
\]
\end{lemma}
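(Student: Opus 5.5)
The plan is to mirror the scalar proof of \cref{Lem: relation between} at the block level, working with the two structural properties of \(G(\gamma)\) listed above. Throughout, we view vectors in \(\mathbb{F}_q^{t(\alpha\ell+\epsilon)}\) as elements of \((\mathbb{F}_q^t)^{\alpha\ell+\epsilon}\), i.e., as \(\alpha\ell+\epsilon\) blocks of length \(t\). Similarly, we view an error vector \(\mathbf{z}\in\mathbb{F}_q^{tL}\) as \(L\) blocks \(z_1,\dots,z_L\in\mathbb{F}_q^t\), one per edge, so that \(|\operatorname{supp}(\mathbf{z})|\) counts the nonzero edge blocks. Fix distinct \(x,x'\in\mathbb{F}_q^{th}\) and set \(\Delta=x F(\gamma)-x'F(\gamma)=(\Delta_1,\dots,\Delta_{\alpha\ell+\epsilon})\). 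Since \(\mathcal{C}(\gamma)\) is linear over \(\mathbb{F}_q\), \(d_H(\mathcal{C}(\gamma))\) is the minimum of \(\operatorname{wt}_H(\Delta)\) over all such pairs. It therefore suffices to show that \(d^{\gamma}(xF(\gamma),x'F(\gamma))=\operatorname{wt}_H(\Delta)\) for every pair, and then take the minimum on both sides.

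\textbf{Upper bound.} Use property 1. There is a set \(A\subseteq[L]\) of \(\alpha\ell+\epsilon\) edge indices whose block rows of \(G(\gamma)\) form the identity \(I_{t(\alpha\ell+\epsilon)}\); these are the edges in \(\operatorname{In}(\gamma)\). Let \(a_j\in A\) be the row index whose identity block sits in block column \(j\). Define \(\mathbf{z}\) by \(z_{a_j}=\Delta_j\) for each \(j\), and \(z_i=\mathbf{0}_t\) for all \(i\notin A\). Then \(\mathbf{z}G(\gamma)=\Delta\) and \(|\operatorname{supp}(\mathbf{z})|=\operatorname{wt}_H(\Delta)\). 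Hence \(d^{\gamma}\le\operatorname{wt}_H(\Delta)\).

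\textbf{Lower bound.} Use property 2. Write \((\mathbf{z}G(\gamma))_j=\sum_{i} z_iG_{ij}\). Since each block row \(i\) has at most one nonzero block \(G_{ij}\), each edge index \(i\) contributes to at most one output block \(j\). So if \(\Delta=\mathbf{z}G(\gamma)\), then every \(j\) with \(\Delta_j\neq\mathbf{0}_t\) requires some \(i\) with \(z_i\neq\mathbf{0}_t\) and \(G_{ij}\neq 0\). These indices \(i\) are distinct for distinct \(j\). Therefore \(|\operatorname{supp}(\mathbf{z})|\ge\operatorname{wt}_H(\Delta)\), which gives \(d^{\gamma}\ge\operatorname{wt}_H(\Delta)\).

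The only point needing care is the block bookkeeping in the lower bound. Support must be counted per edge (per \(t\)-block of \(\mathbf{z}\)) rather than per coordinate, and the injectivity of the map sending each nonzero output block \(j\) to a distinct nonzero input block \(i\) must be argued at the block level. Once this is stated explicitly, the argument is routine. Decodability is not needed here, because both sides are computed over the same set of pairs \(x\neq x'\).
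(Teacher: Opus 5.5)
Your argument is correct and is essentially the paper's route. The paper gives no separate proof of this lemma. Instead it lists exactly the two block properties of $G(\gamma)$ (an identity block submatrix, and at most one nonzero $t\times t$ block per block row), so that the scalar proof of Lemma~\ref{Lem: relation between} carries over at the level of $t$-blocks. That is what you do, with support counted per edge.

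One correction: your closing claim that decodability is not needed is false. In the step ``$d_H(\mathcal{C}(\gamma))$ is the minimum of $\operatorname{wt}_H(\Delta)$ over all pairs $x\neq x'$'', you identify $\min_{x\neq x'}\operatorname{wt}_H\bigl((x-x')F(\gamma)\bigr)$ with a minimum over distinct \emph{codewords}. That identification requires $x\mapsto xF(\gamma)$ to be injective, i.e.\ $F(\gamma)$ of full row rank. If $F(\gamma)$ is not of full row rank, some $x\neq x'$ give $\Delta=\mathbf{0}$ and hence $d^{\gamma}=0$, while $d_H(\mathcal{C}(\gamma))\ge 1$. The paper avoids this because $d_{\min}$ is defined only for decodable vector codes, and the scalar lemma assumes decodability explicitly. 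So you should invoke that hypothesis rather than disclaim it.
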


Let \(\mathcal{C}(r)\) be the \(\mathbb{F}_q\)-linear code over \(\mathbb{F}_q^t\) generated by
\(
V = \begin{pmatrix} v_1 & v_2 & \cdots & v_r \end{pmatrix}
\). Following the same reasoning as in the scalar case, we obtain the following result.

\begin{theorem}\label{thm9}
For the generalized combination network \(\GN\) with \(h \le \alpha\), a linear network MDS code over \(\mathbb{F}_q^t\) exists if and only if an \(\mathbb{F}_q\)-linear \([n_{\max}, h]\) MDS code over \(\mathbb{F}_q^t\) exists, where \(n_{\max} = \max(\alpha\ell+\epsilon, r\ell)\).
\end{theorem}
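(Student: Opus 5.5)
The plan is to mirror the scalar argument (Lemma~\ref{thm3}, Theorem~\ref{cor1}, Theorem~\ref{thm4}), replacing every scalar column by a $t$-column block and replacing the column criterion for Hamming distance by Lemma~\ref{lem: vector linear}. Throughout we use the reduction established at the start of this section, so middle nodes only forward. Each $v_i$ is then an $th\times t\ell$ matrix made of $\ell$ blocks of $t$ columns, and each $a_\gamma$ consists of $\epsilon$ such blocks.

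\emph{Only if.} Suppose a vector network MDS code over $\mathbb{F}_q^t$ exists. Fix a sink $\gamma$. By Lemma~\ref{lem:relation between vector}, $d_H(\mathcal{C}(\gamma))=\alpha\ell+\epsilon-h+1$. Lemma~\ref{lem: vector linear} then says that any $h$ column blocks of $F(\gamma)$ form a matrix of full row rank $th$. Since $h\le\alpha$, any $h$ blocks of $V=(v_1\ \cdots\ v_r)$ come from at most $h\le\alpha$ middle nodes. We can therefore enlarge that set of middle nodes to an $\alpha$-subset $\gamma'$ (this needs $r\ge\alpha$, which holds since sinks exist). The chosen $h$ blocks are then among the blocks of $F(\gamma')$, so they have full row rank.

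Applying Lemma~\ref{lem: vector linear} again, $\mathcal{C}(r)$ is an $\mathbb{F}_q$-linear code over $\mathbb{F}_q^t$ of length $r\ell$ with $\mathbb{F}_q$-dimension $th$ and minimum distance at least $r\ell-h+1$. So $\mathcal{C}(r)$ is an $[r\ell,h]$ MDS code. Each $\mathcal{C}(\gamma)$ is likewise an $[\alpha\ell+\epsilon,h]$ MDS code. The code of length $n_{\max}$ is whichever of these two is longer, which gives the required $[n_{\max},h]$ MDS code.

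One point needs care here. Full row rank of $F(\gamma)$ (decodability) ensures that the $\mathbb{F}_q$-dimension of $\mathcal{C}(r)$ is exactly $th$, i.e.\ that $V$ has full row rank. This holds because $h\le\alpha\ell$ and any $h$ blocks of $V$ already have rank $th$.

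\emph{If.} Let $G=(B_1\ \cdots\ B_{n_{\max}})$ be a generator matrix of an $\mathbb{F}_q$-linear $[n_{\max},h]$ MDS code over $\mathbb{F}_q^t$. Its row count is $th$, because the $\mathbb{F}_q$-dimension is $th$ and $G$ may be taken with independent rows. We make the following assignments:
\begin{itemize}
\item the $j$-th link into middle node $i$ gets $B_{(i-1)\ell+j}$;
\item for each sink $\gamma$, the $\epsilon$ direct links get $\epsilon$ blocks of $G$ distinct from the $\alpha\ell$ blocks used by the middle nodes in $\gamma$.
\end{itemize}
The second assignment is possible because $n_{\max}\ge\alpha\ell+\epsilon$. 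When $r\ell<\alpha\ell+\epsilon$, the first assignment uses only the first $r\ell$ blocks, which is fine.

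With these choices, $F(\gamma)$ consists of $\alpha\ell+\epsilon$ distinct blocks of $G$. Any $h$ of them have full row rank by Lemma~\ref{lem: vector linear} applied to $G$. So $\mathcal{C}(\gamma)$ is MDS with $d_H=\alpha\ell+\epsilon-h+1$, and Lemma~\ref{lem:relation between vector} shows the network code is MDS at every sink; decodability follows from the same full-rank property.

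The main obstacle is Lemma~\ref{lem:relation between vector} itself, which the paper states here without proof. If it must be verified, the argument copies Lemma~\ref{Lem: relation between}. By the two listed structural properties of $G(\gamma)$, every row block of $G(\gamma)$ has at most one nonzero $t\times t$ block, and the identity submatrix realizes any difference vector with error support equal to its block weight. Together these give the two inequalities. The rest consists of routine block-wise bookkeeping.
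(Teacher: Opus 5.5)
Your proposal is correct and is essentially the paper's own argument. The paper justifies Theorem~\ref{thm9} only by saying it follows ``the same reasoning as in the scalar case'': Lemma~\ref{thm3} and Theorem~\ref{cor1} give necessity, and the block-assignment construction of Theorem~\ref{thm4} gives sufficiency, with $t$-column blocks and Lemma~\ref{lem: vector linear} in place of single columns and the classical column criterion, exactly as you do. Your extra checks (that $V$ has full row rank $th$, and the sketch of Lemma~\ref{lem:relation between vector} from the one-nonzero-block-per-row structure of $G(\gamma)$ under forwarding) just make explicit details the paper leaves implicit.
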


For an \(\mathbb{F}_q\)-linear \([n_{\max}, h]\) MDS code over \(\mathbb{F}_q^t\), partition its generator matrix as
\[
G = \begin{pmatrix} G_1 & G_2 & \cdots & G_{n_{\max}} \end{pmatrix},
\]
where each block \(G_i\) contains exactly \(t\) columns. Since every \(G_i\) has full column rank, its column space is a \(t\)-dimensional subspace of \(\mathbb{F}_q^{ht}\). This construction yields an \(h\)-\((ht, t, (h-1)t)_q^c\) covering Grassmannian code consisting of \(n_{\max}\) codewords. Conversely, suppose we are given an \(h\)-\((ht, t, (h-1)t)_q^c\) covering Grassmannian code with \(n_{\max}\) codewords. Each codeword corresponds to a \(t\)-dimensional subspace of \(\mathbb{F}_q^{ht}\). For each such subspace, fix a basis and stack the basis vectors columnwise to form a block matrix \(G_i\) for each \(i \in \{1, 2, \dots, n_{\max}\}\). Concatenating these blocks yields
\(
G = \begin{pmatrix} G_1 & G_2 & \cdots & G_{n_{\max}} \end{pmatrix}
\), and the code generated by \(G\) is an \(\mathbb{F}_q\)-linear \([n_{\max}, h]\) MDS code over \(\mathbb{F}_q^t\). In summary, an \(\mathbb{F}_q\)-linear \([n_{\max}, h]\) MDS code over \(\mathbb{F}_q^t\) exists if and only if there exists an \(h\)-\((ht, t, (h-1)t)_q^c\) covering Grassmannian code of size \(n_{\max}\). Combining this equivalence with Lemmas \ref{lem:upper bound of covering}--\ref{lem:greedy} yields the following consequences.

\begin{corollary}\label{cor:vector lower bound}
For the generalized combination network \(\mathcal{N}=\GN\) with \(2 \le h \le \alpha\),
\[
q_v^{\mathrm{MDS}}(\mathcal{N}) \ge \psi(n_{\max} - h + 1),
\]
where \(n_{\max} = \max(\alpha\ell + \epsilon, r\ell)\).
\end{corollary}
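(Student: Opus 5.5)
The plan is to turn the existence of a vector network MDS code into the existence of a large covering Grassmannian code, and then bound the size of that code with Lemma~\ref{lem:upper bound of covering}.

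First, suppose a vector network MDS code over $\mathbb{F}_q^t$ exists for $\mathcal{N}=\GN$ with $2\le h\le\alpha$. By Theorem~\ref{thm9} there is then an $\mathbb{F}_q$-linear $[n_{\max},h]$ MDS code over $\mathbb{F}_q^t$. By the equivalence established just before this corollary, this gives an $h$-$(ht,t,(h-1)t)_q^c$ covering Grassmannian code with $n_{\max}$ codewords: the column spaces of the blocks $G_i$ of a generator matrix serve as codewords, and Lemma~\ref{lem: vector linear} guarantees that any $h$ blocks have full row rank $ht$. Hence
\[
n_{\max}\le B_q(ht,t,(h-1)t;h).
\]

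Second, apply Lemma~\ref{lem:upper bound of covering} with $n=ht$, $k=t$ and $\alpha$ replaced by $h$. Its hypotheses are $h\ge 2$, which holds by assumption, and $n\ge hk$, which holds with equality. The lemma gives
\[
B_q(ht,t,(h-1)t;h)\le \left\lfloor \frac{q^{ht-(h-2)t}-1}{q^t-1}\right\rfloor + h-2
=\frac{q^{2t}-1}{q^t-1}+h-2 = q^t+h-1 .
\]
Combining the two displays yields $n_{\max}\le q^t+h-1$, that is, $q^t\ge n_{\max}-h+1$.

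Finally, $q^t$ is itself a prime power, so $q^t\ge \psi(n_{\max}-h+1)$ by the definition of $\psi$. Minimizing over all admissible pairs $(q,t)$ gives $q_v^{\mathrm{MDS}}(\mathcal{N})\ge\psi(n_{\max}-h+1)$.

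The only delicate point is checking that the equivalence between vector MDS codes and covering Grassmannian codes applies cleanly. Each block $G_i$ must have full column rank $t$, so that its column space is a genuine $t$-dimensional codeword; this follows because any $h$ blocks containing $G_i$ have rank $ht$. The spanning condition for $h$-subsets is then exactly Lemma~\ref{lem: vector linear} applied with $d=n_{\max}-h+1$. Once this is in place, the rest is the arithmetic above.
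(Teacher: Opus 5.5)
Your proposal is correct and follows the same route as the paper. Theorem~\ref{thm9} gives an $\mathbb{F}_q$-linear $[n_{\max},h]$ MDS code over $\mathbb{F}_q^t$, its block column spaces form an $h$-$(ht,t,(h-1)t)_q^c$ covering Grassmannian code of size $n_{\max}$, and Lemma~\ref{lem:upper bound of covering} with $n=ht$, $k=t$ gives $n_{\max}\le q^t+h-1$. The only detail you leave implicit is that the $n_{\max}$ column spaces are pairwise distinct, which follows from the same rank-$ht$ property because $h\ge 2$.
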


In particular, for the generalized combination network \(\mathcal{N}=\GN\) with \(h = 2\), Corollaries \ref{cor2} and \ref{cor:vector lower bound} yield
\[
q^{\mathrm{MDS}}(\mathcal{N}) - q_v^{\mathrm{MDS}}(\mathcal{N})
\le \psi(n_{\max} - 1) - \psi(n_{\max} - 1) = 0,
\]
i.e., the gap is zero. Thus, for such networks, vector MDS codes offer no advantage over scalar MDS codes in terms of alphabet size.

\begin{corollary}\label{cor:vector 1st case}
Let \(\alpha, \ell, \epsilon, h, r, t\) be positive integers such that \(h \le \alpha\). Set \(n_{\max} = \max(\alpha\ell+\epsilon, r\ell)\). If either of the following conditions holds:
\begin{itemize}
    \item \(q^t \ge n_{\max} - 1\);
    \item \(h \ge 3\) and \(T_1 - T_2 - T_3 > 0\), where \(T_1, T_2, T_3\) are as defined in Lemma~\ref{lem:greedy} with \(n = ht\), \(k = t\), \(\alpha = h\), and \(m = n_{\max}\);
\end{itemize}
then a linear network MDS code over \(\mathbb{F}_q^t\) exists for the network \(\GN\).
\end{corollary}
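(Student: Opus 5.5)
The plan is to reduce to Theorem~\ref{thm9} and then produce the required $\mathbb{F}_q$-linear $[n_{\max},h]$ MDS code over $\mathbb{F}_q^t$ separately under each of the two conditions. By Theorem~\ref{thm9}, a vector network MDS code over $\mathbb{F}_q^t$ for $\GN$ with $h\le\alpha$ exists as soon as such an MDS code exists. By the equivalence established just before Corollary~\ref{cor:vector lower bound}, this in turn is equivalent to the existence of an $h$-$(ht,t,(h-1)t)_q^c$ covering Grassmannian code with $n_{\max}$ codewords. A larger code would also suffice, since any $n_{\max}$ of its codewords still form such a code.

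\emph{First condition, $q^t\ge n_{\max}-1$.} I will exploit the fact that $\mathbb{F}_{q^t}$ is an $\mathbb{F}_q$-vector space of dimension $t$. Since $q^t$ is a prime power with $q^t\ge n_{\max}-1$, an extended (or doubly extended) Reed--Solomon code gives an $[n_{\max},h]_{q^t}$ MDS code over $\mathbb{F}_{q^t}$. For the degenerate case $h=1$, a repetition code works.

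Next, fix an $\mathbb{F}_q$-basis of $\mathbb{F}_{q^t}$. Each coordinate in $\mathbb{F}_{q^t}$ is then identified with a vector in $\mathbb{F}_q^t$. Under this identification the code becomes an $\mathbb{F}_q$-linear code over $\mathbb{F}_q^t$ of $\mathbb{F}_q$-dimension $ht$. The identification maps nonzero symbols to nonzero vectors, so Hamming weights are unchanged and the minimum distance stays $n_{\max}-h+1$. Hence the Singleton bound $d\le n+1-k'/t$ is attained, which is exactly an $\mathbb{F}_q$-linear $[n_{\max},h]$ MDS code over $\mathbb{F}_q^t$.

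Equivalently, in matrix form: each generator entry $g\in\mathbb{F}_{q^t}$ is replaced by the $t\times t$ matrix of multiplication by $g$. Any $h$ column blocks then form a matrix over $\mathbb{F}_q$ that represents an invertible $\mathbb{F}_{q^t}$-matrix, hence is invertible over $\mathbb{F}_q$. By Lemma~\ref{lem: vector linear}, this again gives distance at least $n_{\max}-h+1$.

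\emph{Second condition, $h\ge3$ and $T_1-T_2-T_3>0$.} Here I apply Lemma~\ref{lem:greedy} with $n=ht$, $k=t$, $\alpha=h$. The hypotheses hold: $\alpha=h\ge3$ and $n=ht\ge\alpha k=ht$. The lemma guarantees the existence of an $h$-$(ht,t,(h-1)t)_q^c$ covering Grassmannian code with $m$ codewords for the maximal $m$ satisfying $T_1-T_2-T_3>0$. Since the inequality holds at $m=n_{\max}$, the maximal such $m$ is at least $n_{\max}$, provided the expression is monotone in $m$. Monotonicity holds because $T_2$ and $T_3$ are nondecreasing in $m$.

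Taking any $n_{\max}$ codewords, the equivalence converts them into an $\mathbb{F}_q$-linear $[n_{\max},h]$ MDS code over $\mathbb{F}_q^t$. Each subspace gives a block $G_i$, and any $h$ blocks span $\mathbb{F}_q^{ht}$, i.e.\ have full row rank. Theorem~\ref{thm9} then finishes the argument.

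The main obstacle is bookkeeping rather than substance. In the first case, the point to make explicit is that the field-extension embedding preserves the MDS property, which reduces to invertibility of block-matrix representations. In the second, the point to check is the monotonicity of $T_1-T_2-T_3$ in $m$, needed to pass from ``holds at $m=n_{\max}$'' to ``the maximal $m$ is at least $n_{\max}$''.
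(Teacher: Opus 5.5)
Your proof is correct. For the second condition it follows the paper's route exactly: Theorem~\ref{thm9}, then the equivalence with $h$-$(ht,t,(h-1)t)_q^c$ covering Grassmannian codes of size $n_{\max}$, then Lemma~\ref{lem:greedy}.

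For the first condition you argue differently. The paper's one-line justification (``combining this equivalence with Lemmas \ref{lem:upper bound of covering}--\ref{lem:greedy}'') gets the first bullet from the constructive bounds of Lemmas~\ref{lem:2lower bound of covering}--\ref{lem:4lower bound}. With $n=ht$, $k=t$, $\alpha=h$, each of these yields at least $q^t+1$ codewords: $B_q(2t,t,t;2)\ge q^t+1$, $B_q(3t,t,2t;3)\ge q^t+1$, and $q^t+h-h+1$ for $h\ge4$. So $q^t\ge n_{\max}-1$ suffices.

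You instead expand a doubly extended Reed--Solomon code over $\mathbb{F}_{q^t}$ along an $\mathbb{F}_q$-basis, and check the block condition of Lemma~\ref{lem: vector linear} directly. The two constructions are essentially the same codes in two guises. Via the companion-matrix field $\mathscr{C}\cong\mathbb{F}_{q^t}$, the Vandermonde columns $(1,c,\dots,c^{h-1})^{T}$ become the MRD-based codewords $\langle M_B\rangle$ of Appendix~A, and the point at infinity becomes $\langle \mathbf{E}_h\rangle$.

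What each route buys:
\begin{itemize}
\item Your version is shorter, self-contained and uniform in $h$. It also covers $h=1$, which Lemmas~\ref{lem:2lower bound of covering}--\ref{lem:4lower bound} do not (they all need $\alpha\ge2$). At $h=1$ the ``set of distinct subspaces'' formulation degenerates anyway.
\item The paper's version stays inside the covering-Grassmannian framework it uses for all its other bounds.
\end{itemize}

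A minor point on the second case. You do not need monotonicity of $T_1-T_2-T_3$ to conclude that the maximal admissible $m$ is at least $n_{\max}$. That follows immediately once $m=n_{\max}$ is admissible. Moreover, Appendix~B builds a code with exactly $m$ codewords for every admissible $m$. Monotonicity of $T_2+T_3$ matters inside the greedy proof itself, to ensure every earlier extension step $m'\le m$ succeeds; it is harmless but unnecessary where you invoked it.
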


\begin{example}
\begin{enumerate}
    \item Consider the generalized combination network $(1,1)$-$\mathcal{N}_{3,4,4}$, which satisfies the hypotheses of Corollary~\ref{cor:vector 1st case} with $h=3$ and $\alpha=3$. Here, $n_{\max}=4$. The first condition holds for $q^t \ge 3$, while the second is satisfied for $q^t \ge 2$. Thus, Corollary~\ref{cor:vector 1st case} guarantees a linear network MDS code over $\mathbb{F}_2$ for this network via the second condition. This implies that the gap for this network is zero.
    \item Consider the generalized combination network $(2,1)$-$\mathcal{N}_{3,6,6}$, where $h=3$, $\alpha=4$ and $n_{\max}=6$. The first condition in Corollary~\ref{cor:vector 1st case} holds for $q^t \ge 5$, while the second holds for $q^t \ge 9$. Therefore, Corollary~\ref{cor:vector 1st case} guarantees a linear network MDS code over $\mathbb{F}_5$ for this network via the first condition.
\end{enumerate}
\end{example}

Let \(S(\mathcal{C})\) denote the set of subspaces of \(\mathbb{F}_q^{ht}\) spanned by the column vectors of \(v_i\) for \(i \in [r]\). Analogous to the scalar case, we obtain the following result under the parameter constraint \(\alpha\ell \le h \le \alpha\ell+\epsilon\).

\begin{theorem}\label{thm:vector_big}
For the generalized combination network \(\GN\) with \(\alpha\ell \le h \le \alpha\ell+\epsilon\), a linear network MDS code over \(\mathbb{F}_q^t\) exists if and only if both of the following hold:
\begin{enumerate}
    \item There exists an \(\alpha\)-\((ht, \ell t, (\alpha-1)\ell t)_q^c\) covering Grassmannian code with at least \(r\) codewords;
    \item There exists an \(\mathbb{F}_q\)-linear \([\alpha\ell+\epsilon, h]\) MDS code over \(\mathbb{F}_q^t\).
\end{enumerate}
\end{theorem}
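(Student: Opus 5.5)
I will follow the scalar argument of \cref{thm5} and \cref{thm6}, substituting \cref{lem:relation between vector} for \cref{Lem: relation between} and \cref{lem: vector linear} for the usual column criterion for minimum distance. As shown at the start of Section III, I may assume that middle nodes only forward. Then $F(\gamma)=\begin{pmatrix} v_{i_1} & \cdots & v_{i_\alpha} & a_\gamma\end{pmatrix}$, where each $v_i\in\mathbb{F}_q^{ht\times \ell t}$ consists of $\ell$ blocks of $t$ columns, and $a_\gamma$ consists of $\epsilon$ blocks.

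\emph{Only if.} Suppose the code is MDS. By \cref{lem:relation between vector}, $\mathcal{C}(\gamma)$ is an $\mathbb{F}_q$-linear code over $\mathbb{F}_q^t$ of length $\alpha\ell+\epsilon$ with minimum distance $\alpha\ell+\epsilon-h+1$. Decodability gives $\mathbb{F}_q$-dimension $ht$, so $\mathcal{C}(\gamma)$ is an $\mathbb{F}_q$-linear $[\alpha\ell+\epsilon,h]$ MDS code over $\mathbb{F}_q^t$, which is condition 2). By \cref{lem: vector linear}, any $h$ column blocks of $F(\gamma)$ have full row rank $ht$. Because $h\ge\alpha\ell$, the $\alpha\ell$ blocks of $V_1(\gamma)=\begin{pmatrix} v_{i_1}&\cdots&v_{i_\alpha}\end{pmatrix}$ can be extended by $h-\alpha\ell\le\epsilon$ blocks of $a_\gamma$ to a set of $h$ blocks of rank $ht$. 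These $ht$ columns are therefore linearly independent, so $V_1(\gamma)$ has full column rank $\alpha\ell t$. In particular, each $\langle v_i\rangle$ has dimension exactly $\ell t$, the subspaces are pairwise distinct since $\alpha\ge2$, and any $\alpha$ of them span dimension $\alpha\ell t=\ell t+(\alpha-1)\ell t$. Hence $S(\mathcal{C})$ is an $\alpha$-$(ht,\ell t,(\alpha-1)\ell t)_q^c$ covering Grassmannian code with $r$ codewords, which is condition 1). If $\alpha=1$ the covering condition is trivial, and I will handle that case separately.

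\emph{If.} Take $r$ codewords $s_1,\dots,s_r$ and let $v_i$ be an $ht\times \ell t$ basis matrix of $s_i$. Let $G=\begin{pmatrix}G_1&G_2\end{pmatrix}$ generate the given MDS code, where $G_1$ consists of the first $h$ blocks. By \cref{lem: vector linear}, $G_1\in\mathbb{F}_q^{ht\times ht}$ is invertible. For each sink $\gamma$, $V_1(\gamma)$ has full column rank $\alpha\ell t$, so I can complete it with $V_2(\gamma)$ of $(h-\alpha\ell)t$ columns so that $V(\gamma)=\begin{pmatrix}V_1(\gamma)&V_2(\gamma)\end{pmatrix}$ is invertible. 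Then $V(\gamma)G_1^{-1}G=\begin{pmatrix}V(\gamma)&V(\gamma)G_1^{-1}G_2\end{pmatrix}$ generates the same $\mathbb{F}_q$-linear code as $G$, since left multiplication by an invertible matrix preserves the row space. I define $a_\gamma$ as the last $\epsilon t$ columns, namely $V_2(\gamma)$ together with $V(\gamma)G_1^{-1}G_2$. Then $F(\gamma)$ generates an MDS code for every $\gamma$, and \cref{lem:relation between vector} gives $d_{\min}(\mathcal{C},\gamma)=\alpha\ell+\epsilon-h+1$.

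\textbf{Main obstacle.} The one delicate point is the block-level rank extraction in the ``only if'' direction. I must make sure that the set of $h$ blocks used contains all of $V_1(\gamma)$ and that this is compatible with the $t$-block structure, so that full column rank of the $ht\times ht$ submatrix forces $\dim\langle v_i\rangle=\ell t$ and the spanning dimension $\alpha\ell t$. This requires $\alpha\ell\le h$, which is exactly the hypothesis. The remaining steps are direct vector analogues of the scalar proofs.
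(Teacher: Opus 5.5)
Your proposal is correct and follows essentially the paper's route. The paper gives no separate proof for this theorem and simply transports the scalar arguments of Theorems~\ref{thm5} and~\ref{thm6}: extract full column rank of $V_1(\gamma)$ from $h\ge\alpha\ell$, complete $V_1(\gamma)$ to an invertible $V(\gamma)$, and take the direct-link blocks from $V(\gamma)G_1^{-1}G$. That is exactly what you do, with Lemmas~\ref{lem: vector linear} and~\ref{lem:relation between vector} replacing their scalar counterparts.

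Drop the promise to ``handle $\alpha=1$ separately'', though, because the statement is false there. For $\alpha=1$ the subspaces $\langle v_i\rangle$ need not be distinct; for instance, when $h=\ell$ every $\langle v_i\rangle$ equals $\mathbb{F}_q^{ht}$, while a network MDS code exists for every $r$ as soon as condition~2 holds. So the ``only if'' direction genuinely fails, and the theorem must be read with $\alpha\ge 2$ and $r\ge\alpha$, as the paper tacitly assumes.
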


\begin{corollary}\label{cor8}
Let \(\mathcal{N}=\GN\) satisfy \(\alpha \ge 2\) and \(\alpha\ell \le h \le \alpha\ell+\epsilon\). If a linear network MDS code over \(\mathbb{F}_q^t\) exists for \(\mathcal{N}\), then
\[
r \le \left\lfloor \frac{q^{ht-(\alpha-2)\ell t} - 1}{q^{\ell t} - 1} \right\rfloor + \alpha - 2
\quad \text{and} \quad
\alpha\ell + \epsilon - h + 1 \le q^t.
\]
\end{corollary}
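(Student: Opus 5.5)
We obtain Corollary~\ref{cor8} from Theorem~\ref{thm:vector_big} combined with Lemma~\ref{lem:upper bound of covering}, in the same way Corollary~\ref{cor:covering_necessary} follows from Theorem~\ref{thm5} in the scalar case. Suppose a linear network MDS code over $\mathbb{F}_q^t$ exists for $\GN$ with $\alpha\ge 2$ and $\alpha\ell\le h\le\alpha\ell+\epsilon$. The ``only if'' direction of Theorem~\ref{thm:vector_big} then yields two objects. The first is an $\alpha$-$(ht,\ell t,(\alpha-1)\ell t)_q^c$ covering Grassmannian code with at least $r$ codewords. The second is an $\mathbb{F}_q$-linear $[\alpha\ell+\epsilon,h]$ MDS code over $\mathbb{F}_q^t$.

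For the bound on $r$, we apply Lemma~\ref{lem:upper bound of covering} with $n=ht$, $k=\ell t$ and the same $\alpha$. Its hypotheses hold: $\alpha\ge 2$ is assumed, and $n=ht\ge\alpha\ell t=\alpha k$ because $h\ge\alpha\ell$. The lemma gives
\[
r\le B_q(ht,\ell t,(\alpha-1)\ell t;\alpha)\le \left\lfloor \frac{q^{ht-(\alpha-2)\ell t}-1}{q^{\ell t}-1}\right\rfloor+\alpha-2,
\]
which is the first inequality.

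For the second inequality, we use the MDS code over the alphabet $\mathbb{F}_q^t$, which has size $Q=q^t$. Write $n=\alpha\ell+\epsilon$. The code has length $n$, $\mathbb{F}_q$-dimension $ht$ (so $q^{ht}=Q^h$ codewords), and minimum distance $n-h+1$. If $h\ge 2$, we show $n-h+1\le Q$, which is exactly $\alpha\ell+\epsilon-h+1\le q^t$. There are two ways to do this.

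\begin{itemize}
\item \emph{Covering Grassmannian route.} As observed before Corollary~\ref{cor:vector lower bound}, the column blocks $G_1,\dots,G_n$ of a generator matrix form an $h$-$(ht,t,(h-1)t)_q^c$ covering Grassmannian code of size $n$. Lemma~\ref{lem:upper bound of covering} bounds its size by $\lfloor (q^{2t}-1)/(q^t-1)\rfloor+h-2=q^t+h-1$, which gives $n\le q^t+h-1$ directly.
\item \emph{Combinatorial route.} Puncture the code on $h-2$ coordinates. The result is an MDS code with $Q^h$ codewords, length $n-h+2$, and distance $n-h+1$. Shorten it by fixing $h-2$ further coordinate values, which leaves a code of size $Q^2$, length $m=n-h+2$, and distance $m-1$. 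For such a code, any two coordinates determine the codeword. The pairs of positions then form $Q^2$ words pairwise agreeing in at most one coordinate, i.e.\ a set of $m-2$ mutually orthogonal Latin squares of order $Q$. The classical bound on MOLS gives $m-2\le Q-1$, i.e.\ $n-h+1\le Q$.
\end{itemize}

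We prefer the first route, since it reuses Lemma~\ref{lem:upper bound of covering} with $n\ge ht$, which holds because $h\le n$. In the case $h=1$ the inequality reads $\alpha\ell+\epsilon\le q^t$. Here $h\ge\alpha\ell$ with $\alpha\ge2$ forces $\ell$ small, so this case needs separate attention. Our expectation is that the statement implicitly requires $h\ge 2$, which is automatic since $h\ge\alpha\ell\ge 2$.

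The main obstacle is this second inequality. The first inequality is a direct substitution, whereas the second needs a Griesmer/MOLS-type bound for $\mathbb{F}_q$-linear (non-$\mathbb{F}_{q^t}$-linear) MDS codes. Our plan is to derive it from the covering Grassmannian correspondence and Lemma~\ref{lem:upper bound of covering}, checking that the degenerate parameter cases are excluded.
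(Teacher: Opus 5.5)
Your proposal is correct and essentially follows the paper's route. The ``only if'' part of Theorem~\ref{thm:vector_big}, together with Lemma~\ref{lem:upper bound of covering} at $(n,k)=(ht,\ell t)$, gives the bound on $r$. The second inequality comes from reading the column blocks of the $[\alpha\ell+\epsilon,h]$ MDS code as an $h$-$(ht,t,(h-1)t)_q^c$ covering Grassmannian code and applying the same lemma, exactly as in the proof of Theorem~\ref{thm18}.

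Your optional combinatorial route is misstated. Puncturing $h-2$ coordinates drops the distance to $n-2h+3$, and puncturing followed by further shortening would not leave length $n-h+2$; you should only shorten on $h-2$ coordinates. Since you rely on the first route, this does not affect the proof.
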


\begin{corollary}\label{cor:cor9}
Let \(\mathcal{N}=\GN\) satisfy \(\alpha \ge 2\) and \(\alpha\ell \le h \le \alpha\ell+\epsilon\). If \(q^t \ge \alpha\ell+\epsilon-1\) and either of the following conditions holds:
\begin{itemize}
    \item \(\alpha = 2\) and \(q^{\ell t} \left\lfloor \frac{q^{ht-\ell t}-1}{q^{\ell t}-1} \right\rfloor + 1 \ge r\);
    \item \(\alpha = 3\) and \(\frac{q^{(\lfloor (h/\ell-3)/2 \rfloor + 2)\ell t}-1}{q^{\ell t}-1} + 2\left( \frac{h/\ell-3}{2} - \left\lfloor \frac{h/\ell-3}{2} \right\rfloor \right) \ge r\);
    \item \(\alpha \ge 4\) and \(q^{\ell t} + \left\lfloor \frac{h}{\ell} \right\rfloor - \alpha + 1 \ge r\);
    \item \(\alpha \ge 3\) and \(T_1 - T_2 - T_3 > 0\), where \(T_1, T_2, T_3\) are as defined in Lemma~\ref{lem:greedy} with \(n = ht\), \(k = \ell t\), \(m = r\), and \(\alpha\) as given.
\end{itemize}
then a linear network MDS code over \(\mathbb{F}_q^t\) exists for \(\mathcal{N}\).
\end{corollary}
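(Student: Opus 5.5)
The plan is to reduce Corollary~\ref{cor:cor9} to the sufficiency direction of Theorem~\ref{thm:vector_big}. It suffices to exhibit, for the given $q,t$, two objects: an $\alpha$-$(ht,\ell t,(\alpha-1)\ell t)_q^c$ covering Grassmannian code with at least $r$ codewords, and an $\mathbb{F}_q$-linear $[\alpha\ell+\epsilon,h]$ MDS code over $\mathbb{F}_q^t$. Theorem~\ref{thm:vector_big} then supplies the vector network MDS code.

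\textbf{The MDS component.} I would use the hypothesis $q^t\ge \alpha\ell+\epsilon-1$ as follows. Since $q^t$ is a prime power, an extended Reed--Solomon $[\alpha\ell+\epsilon,h]_{q^t}$ MDS code over $\mathbb{F}_{q^t}$ exists; this is the same existence fact used in Corollary~\ref{cor2}. Fix an $\mathbb{F}_q$-basis of $\mathbb{F}_{q^t}$ and view each field element as a vector in $\mathbb{F}_q^t$. This turns the code into an $\mathbb{F}_q$-linear code over $\mathbb{F}_q^t$ of $\mathbb{F}_q$-dimension $ht$, and its Hamming weight is unchanged because a symbol vanishes iff its image vanishes. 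It therefore has minimum distance $\alpha\ell+\epsilon-h+1$ and is an $\mathbb{F}_q$-linear $[\alpha\ell+\epsilon,h]$ MDS code over $\mathbb{F}_q^t$. At the matrix level, replacing each entry $g$ of the generator matrix by the $t\times t$ matrix of multiplication by $g$ gives exactly the block generator matrix required by Lemma~\ref{lem: vector linear}.

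\textbf{The covering component.} Apply the bounds of Section~\ref{sec:covering_bounds} with $n=ht$ and $k=\ell t$. Each bullet in the statement is the corresponding lemma after this substitution.
\begin{enumerate}
\item For $\alpha=2$, use Lemma~\ref{lem:2lower bound of covering}; this needs $ht\ge 2\ell t$, which follows from $h\ge\alpha\ell$.
\item For $\alpha=3$, use Lemma~\ref{lem:3lower bound of covering}. Here $n/k=h/\ell$, and if $\ell\nmid h$ we use the floor version from the accompanying remark.
\item For $\alpha\ge4$, use Lemma~\ref{lem:4lower bound}, noting $\lfloor n/k\rfloor=\lfloor h/\ell\rfloor$.
\item For the greedy condition, use Lemma~\ref{lem:greedy} with $m=r$.
\end{enumerate}
In all four cases the hypothesis $n\ge\alpha k$ is $ht\ge\alpha\ell t$, which holds. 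Each lemma gives $B_q(ht,\ell t,(\alpha-1)\ell t;\alpha)\ge r$, so a subcode of size $r$ exists.

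\textbf{Main obstacle.} The only delicate step is the greedy case. Lemma~\ref{lem:greedy} is phrased through the \emph{maximal} $m$ with $T_1-T_2-T_3>0$. So $T_1-T_2-T_3>0$ at $m=r$ must be shown to imply that a code with $r$ codewords exists. I would argue this by monotonicity: $T_2$ and $T_3$ are nondecreasing in $m$, so positivity at $r$ forces $m_{\max}\ge r$. Alternatively, I would note that the greedy construction succeeds step by step as long as the count stays positive. With both components in hand, Theorem~\ref{thm:vector_big} completes the proof.
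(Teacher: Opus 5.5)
Your proposal is correct and follows the route the paper intends (the paper gives no separate proof, only a pointer to the scalar case): apply the sufficiency direction of Theorem~\ref{thm:vector_big}, take the MDS component from an extended Reed--Solomon code over $\mathbb{F}_{q^t}$ read $\mathbb{F}_q$-linearly over $\mathbb{F}_q^t$, and take the covering component from Lemmas~\ref{lem:2lower bound of covering}--\ref{lem:greedy} with $n=ht$, $k=\ell t$. One small detail is worth making explicit: when $\ell\nmid h$, the statement's $\alpha=3$ expression (written with $h/\ell$) exceeds the floor version you invoke by the fractional part of $h/\ell$, which lies in $(0,1)$, and since both the floor-version bound and $r$ are integers the two conditions are equivalent.
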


For the generalized combination network \(\GN\) with \(\alpha \le h \le \alpha\ell\), replacing scalar \(\mathbb{F}_q\)-linear MDS codes with their \(\mathbb{F}_q\)-linear counterparts over \(\mathbb{F}_q^t\) gives the following result.

\begin{theorem}\label{thm:thm vector}
For the network \(\mathcal{N}=\GN\) with \(\alpha \le h \le \alpha\ell\),
\[
q_v^{\mathrm{MDS}}(\mathcal{N}) \le \psi(n_{\max} - 1),
\]
where \(n_{\max} = \max(\alpha\ell+\epsilon, r\ell)\). Moreover, if \(h \ge 2\ell\) and a linear network MDS code over \(\mathbb{F}_q^t\) exists for \(\mathcal{N}\), then
\[
r \le \left\lfloor \frac{q^{ht-(\lfloor h/\ell \rfloor-2)\ell t} - 1}{q^{\ell t} - 1} \right\rfloor + \left\lfloor \frac{h}{\ell} \right\rfloor - 2
\quad \text{and} \quad
\alpha\ell + \epsilon - h + 1 \le q^t.
\]
\end{theorem}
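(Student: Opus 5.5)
The statement has two parts, and I would treat them separately.

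\emph{Upper bound.} I would mirror the scalar argument after Theorem~\ref{cor5}, noting that a scalar code over $\mathbb{F}_{q'}$ is a vector code with $t=1$. This gives $q_v^{\mathrm{MDS}}(\mathcal{N})\le q^{\mathrm{MDS}}(\mathcal{N})\le\psi(n_{\max}-1)$. To be self-contained, I would also give the construction directly. Take $q=\psi(n_{\max}-1)$ and an extended Reed--Solomon $[n_{\max},h]_q$ MDS code with generator columns $g_1,\dots,g_{n_{\max}}$. Assign $v_{i,j}=g_{(i-1)\ell+j}$, let the middle nodes forward, and for each sink $\gamma$ choose $a_\gamma$ among the columns not used by $\gamma$'s $\alpha$ middle nodes; there are $n_{\max}-\alpha\ell\ge\epsilon$ of them. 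Then $F(\gamma)$ consists of $\alpha\ell+\epsilon$ distinct columns of an MDS generator, so it generates an $[\alpha\ell+\epsilon,h]$ MDS code. Lemma~\ref{Lem: relation between}, or Lemma~\ref{lem:relation between vector} with $t=1$, then gives $d_{\min}=\alpha\ell+\epsilon-h+1$ at every sink.

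\emph{Necessary conditions.} Assume a vector MDS code over $\mathbb{F}_q^t$ exists and that middle nodes forward, which is WLOG by the reduction at the start of the section. The plan has four steps.

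\begin{enumerate}
\item Prove the vector analogue of Theorem~\ref{cor5}: for each sink $\gamma=\{i_1,\dots,i_\alpha\}$, the block matrix $V=(v_{i_1}\cdots v_{i_\alpha})$, made of $\alpha\ell$ blocks of $t$ columns, generates an $\mathbb{F}_q$-linear $[\alpha\ell,h]$ MDS code over $\mathbb{F}_q^t$. By Lemma~\ref{lem:relation between vector}, $d_H(\mathcal{C}(\gamma))=\alpha\ell+\epsilon-h+1$. Lemma~\ref{lem: vector linear} then says any $h$ column blocks of $F(\gamma)$ have full row rank $ht$. Since $h\le\alpha\ell$, any $h$ blocks of $V$ have full row rank, so $V$ is MDS by Lemma~\ref{lem: vector linear} again.
\item Set $a=\lfloor h/\ell\rfloor$, so that $2\le a\le\alpha$. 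Take any $a$ middle nodes and extend them to an $\alpha$-subset, i.e.\ a sink; this is possible because $a\le\alpha\le r$. Their $a\ell\le h$ blocks have full column rank $a\ell t$ by step 1. Hence the spaces $\langle v_i\rangle$ are $\ell t$-dimensional, and any $a$ of them span dimension $a\ell t$. So $\{\langle v_i\rangle\}_{i\in[r]}$ is an $a$-$(ht,\ell t,(a-1)\ell t)_q^c$ covering Grassmannian code with $r$ distinct members. Distinctness follows because any two of them span dimension $2\ell t$.
\item Apply Lemma~\ref{lem:upper bound of covering} with $n=ht$, $k=\ell t$ and $\alpha\to a$. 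Its hypothesis $n\ge ak$ is exactly $h\ge a\ell$. This yields $r\le\lfloor (q^{ht-(a-2)\ell t}-1)/(q^{\ell t}-1)\rfloor+a-2$.
\item For $\alpha\ell+\epsilon-h+1\le q^t$: $\mathcal{C}(\gamma)$ is an $\mathbb{F}_q$-linear $[\alpha\ell+\epsilon,h]$ MDS code over $\mathbb{F}_q^t$, with $h\ge 2$ since $h\ge 2\ell$. I would use the equivalence with an $h$-$(ht,t,(h-1)t)$ covering Grassmannian code noted after Theorem~\ref{thm9}, together with Lemma~\ref{lem:upper bound of covering} at $n=ht$, $k=t$, $\alpha=h$. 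This gives length $\le\lfloor (q^{2t}-1)/(q^t-1)\rfloor+h-2=q^t+h-1$, i.e.\ $\alpha\ell+\epsilon-h+1\le q^t$.
\end{enumerate}

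The main obstacle is step 1. Lemma~\ref{lem: vector linear} is phrased for the whole generator matrix, so I must check carefully that restricting to the $V$-blocks preserves the full-rank property for every $h$-subset of blocks. I must also verify that nondegeneracy holds, so that each block $v_i$ has full column rank $\ell t$. Both follow since $h\le\alpha\ell$ blocks can always be chosen inside $V$.
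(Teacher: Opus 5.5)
Your proposal is correct and follows the route the paper intends. The upper bound comes from $q_v^{\mathrm{MDS}}(\mathcal{N})\le q^{\mathrm{MDS}}(\mathcal{N})$ together with the Theorem~\ref{thm4}-style construction. The necessary conditions come from transporting Theorem~\ref{cor5} and Corollary~\ref{cor:cor5} to $\mathbb{F}_q^t$, which yields a $\lfloor h/\ell\rfloor$-$(ht,\ell t,(\lfloor h/\ell\rfloor-1)\ell t)_q^c$ covering Grassmannian code, and then applying Lemma~\ref{lem:upper bound of covering}. Your derivation of $\alpha\ell+\epsilon-h+1\le q^t$ through the $h$-$(ht,t,(h-1)t)_q^c$ correspondence is the same device the paper uses in the proof of Theorem~\ref{thm18}, and it makes explicit a step the paper leaves implicit here.
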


\subsection{A Family of Generalized Combination Networks with Zero Gap}

For combination networks \(\mathcal{N}_{h,r,s}\), it is known that vector MDS codes offer no alphabet-size advantage over scalar MDS codes when \(h=2\). Moreover, no analogous results have been established for combination networks with more than two source messages. We have already shown in the previous subsection that for \(\GN\) with \(h=2\), vector MDS codes also offer no advantage. In this subsection, we introduce a family of generalized combination networks with \(2\ell\) source messages, and prove that for these networks, vector MDS codes cannot outperform scalar MDS codes with respect to alphabet size.

Let \(\ell\) and \(r\) be positive integers. We consider a class of generalized combination networks denoted by \((1,\ell)\text{-}\mathcal{N}_{2\ell,r,2\ell+1}\). Each such network contains exactly one direct link from the source node to each sink node. We first recall a fundamental combinatorial object used throughout this section. For an \(n\)-dimensional vector space \(V=\mathbb{F}_{q}^{n}\) over \(\mathbb{F}_q\), a \(t\)-spread \(S\) is a collection of \(t\)-dimensional subspaces of \(V\) such that any two distinct subspaces intersect trivially and their union equals the entire space \(V\). The total number of subspaces in a \(t\)-spread is given by
\[
|S|=\frac{q^{n}-1}{q^t-1}.
\]
A \(t\)-spread is known to exist in \(\mathbb{F}_{q}^{n}\) if and only if \(t\) divides \(n\). See \cite{SE2002} for details.

\begin{construction}\label{cst1}
Let \(S'\) be an \(\ell\)-spread in \(\mathbb{F}_q^{2\ell}\), so that \(|S'|=q^{\ell}+1\). Suppose \(q^{\ell}+1 \ge r\). We arbitrarily select \(r\) subspaces from \(S'\) and label them as \(S=\{s_{1},s_{2},\ldots,s_{r}\}\). For each \(i \in [r]\), let \(B_i \in \mathbb{F}_{q}^{2\ell \times \ell}\) be a matrix whose columns form a basis of \(s_i\).

Consider the generalized combination network \((1,\ell)\text{-}\mathcal{N}_{2\ell,r,2\ell+1}\) with source message vector \((x_{1}, x_{2}, \ldots, x_{2\ell})\). Each middle node indexed by \(i\in [r]\) receives and forwards
\[
y_{i}=(x_{1},x_{2}, \ldots, x_{2\ell})\cdot B_{i} \in \mathbb{F}_{q}^{\ell},
\]
where \(B_i\) acts as the global encoding matrix for all incoming and outgoing edges of the \(i\)-th middle node.

For each sink node \(\gamma=\{i_1,i_2\}\), define the block matrix
\(
B_{\gamma}=\begin{pmatrix}
    B_{i_1} & B_{i_2}
\end{pmatrix}\in \mathbb{F}_{q}^{2\ell \times 2\ell}
\), and partition its columns as \(B_{\gamma}=(b^{1},b^{2},\ldots, b^{2\ell})\). We choose the encoding vector \(a_{\gamma}\) for the direct link associated with sink \(\gamma\) from the set
\begin{equation}
\mathbb{F}_{q}^{2\ell}\setminus \bigcup_{1\leq j_{1}<j_{2}<\cdots<j_{2 \ell-1}\leq 2\ell} \langle 
b^{j_1},b^{j_2},\ldots,b^{j_{2\ell-1}} \rangle,
\label{eq:set}
\end{equation}
where \(\langle\cdot \rangle\) stands for the linear span of the enclosed vectors. Then the sink node \(\gamma=\{i_1,i_2\}\) receives the symbol
\[
(x_{1},x_{2},\ldots, x_{2\ell})\cdot \begin{pmatrix}
  a_{\gamma} & B_{i_1} & B_{i_2}
\end{pmatrix} \in \mathbb{F}_{q}^{2\ell+1}.
\]
\end{construction}

\begin{theorem}\label{thm13}
Construction \ref{cst1} yields a scalar linear MDS code for the generalized combination network \((1,\ell)\text{-}\mathcal{N}_{2\ell,r,2\ell+1}\) with alphabet size \(\psi\left((r-1)^{\frac{1}{\ell}}\right)\).
\end{theorem}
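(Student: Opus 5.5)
The plan is to verify the two conditions of \cref{thm6} directly for the output of Construction~\ref{cst1}, and then determine the smallest field for which the construction is well defined. Here $\epsilon=1$, $\alpha=2$, $h=2\ell=\alpha\ell$, so we are in the regime $\alpha\ell\le h\le\alpha\ell+\epsilon$. By \cref{Lem: relation between} it suffices to show that for every sink $\gamma=\{i_1,i_2\}$ the decoding matrix $F(\gamma)=\begin{pmatrix} a_\gamma & B_{i_1} & B_{i_2}\end{pmatrix}$ generates a $[2\ell+1,2\ell]_q$ MDS code. Equivalently, every $2\ell$ of its $2\ell+1$ columns must be linearly independent, since then $d_H=2=\alpha\ell+\epsilon-h+1$.

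First I would treat the column set omitting $a_\gamma$. This is $B_\gamma=\begin{pmatrix}B_{i_1}&B_{i_2}\end{pmatrix}$. Because $s_{i_1},s_{i_2}$ are distinct members of an $\ell$-spread of $\mathbb{F}_q^{2\ell}$, they intersect trivially, so $s_{i_1}\oplus s_{i_2}=\mathbb{F}_q^{2\ell}$ and $B_\gamma$ is invertible. In particular $b^1,\dots,b^{2\ell}$ is a basis. Next I would treat any set containing $a_\gamma$ and $2\ell-1$ of the $b^j$. It is independent exactly when $a_\gamma\notin\langle b^{j_1},\dots,b^{j_{2\ell-1}}\rangle$, which is precisely the choice \eqref{eq:set}. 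So every $2\ell$-subset of columns is independent, provided the set in \eqref{eq:set} is nonempty.

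Nonemptiness is easy to see in coordinates relative to the basis $b^1,\dots,b^{2\ell}$. The union of the $2\ell$ hyperplanes spanned by $2\ell-1$ basis vectors is the set of vectors with at least one zero coordinate. Its complement consists of the vectors with all coordinates nonzero, for instance $a_\gamma=\sum_j b^j$, and it is nonempty over every $\mathbb{F}_q$, including $q=2$. Hence the only constraint on $q$ is the spread condition $q^\ell+1\ge r$, which is needed to pick $r$ distinct spread elements.

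Finally I would translate this into the stated alphabet size. The condition $q^\ell\ge r-1$ is equivalent to $q\ge (r-1)^{1/\ell}$, so the smallest admissible prime power is $\psi\big((r-1)^{1/\ell}\big)$. A spread exists there because $\ell\mid 2\ell$. Construction~\ref{cst1} therefore works over $\mathbb{F}_q$ with $q=\psi\big((r-1)^{1/\ell}\big)$. I would also remark that this is consistent with \cref{thm6}, since $S$ is a $2$-$(2\ell,\ell,\ell)_q^c$ covering Grassmannian code with $r$ codewords. The main step needing care is the independence claim involving $a_\gamma$, together with confirming that \eqref{eq:set} is nonempty over small fields. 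The coordinate description above settles both.
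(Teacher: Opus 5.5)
Your proof is correct and follows the paper's argument. The spread property makes $B_\gamma$ invertible. The set in \eqref{eq:set} consists exactly of the vectors with all coordinates nonzero in the basis $b^1,\dots,b^{2\ell}$; the paper counts $(q-1)^{2\ell}$ of them, while you exhibit $\sum_j b^j$. This makes every $2\ell$ columns of $F(\gamma)$ independent, so $\mathcal{C}(\gamma)$ is MDS and \cref{Lem: relation between} applies. You are more explicit than the paper in two places, namely the reduction to independence of every $2\ell$ columns and the conversion of $q^\ell+1\ge r$ into $q\ge (r-1)^{1/\ell}$, but the route is the same.
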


\begin{IEEEproof}
We first verify that the set defined in \eqref{eq:set} is nonempty, which guarantees the validity of Construction \ref{cst1}. Since \(s_{i_1}\) and \(s_{i_2}\) intersect trivially, the columns of \(B_{\gamma}\) form a basis of \(\mathbb{F}_{q}^{2\ell}\). Then a counting argument gives
\[
\left|\mathbb{F}_{q}^{2\ell}\setminus \bigcup_{1\leq j_{1}<j_{2}<\cdots<j_{2 \ell-1}\leq 2\ell} \langle 
b^{j_1},b^{j_2},\cdots,b^{j_{2\ell-1}} \rangle\right|=(q-1)^{2\ell}.
\]
For any prime power \(q\), we have \((q-1)^{2\ell}\ge 1\), so the desired vector \(a_{\gamma}\) always exists. The invertibility of \(B_{\gamma}\) and the choice of \(a_{\gamma}\) ensure that the constructed network code satisfies the MDS property.
\end{IEEEproof}

We now compare the performance of vector and scalar MDS codes for the proposed network by using the lower bound established in previous sections on the minimum alphabet size required for vector MDS codes.

\begin{theorem}\label{thm:thm10}
Let \(\mathcal{N} =(1,\ell)\text{-}\mathcal{N}_{2\ell,r,2\ell+1}\). Then
\[
q_v^{\mathrm{MDS}}(\mathcal{N})=q^{\mathrm{MDS}}(\mathcal{N}).
\]
\end{theorem}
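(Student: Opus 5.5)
The plan is to show $q^{\mathrm{MDS}}(\mathcal{N})\le q_v^{\mathrm{MDS}}(\mathcal{N})$; the reverse inequality $q_v^{\mathrm{MDS}}(\mathcal{N})\le q^{\mathrm{MDS}}(\mathcal{N})$ holds for every network, as noted in the preliminaries.

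First I place the network in the right regime. Here $\epsilon=1$ and $s=2\ell+1$, so $\alpha=(s-\epsilon)/\ell=2$ and $h=2\ell=\alpha\ell$. Hence $\alpha\ell\le h\le\alpha\ell+\epsilon$, and the vector results for that regime apply.

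Next I take an arbitrary vector network MDS code over $\mathbb{F}_q^t$ for $\mathcal{N}$ and apply Corollary~\ref{cor8} with $\alpha=2$, $h=2\ell$. The correction term $\alpha-2$ vanishes, so the bound reads
\[
r\le \left\lfloor \frac{q^{2\ell t}-1}{q^{\ell t}-1}\right\rfloor = q^{\ell t}+1 .
\]
Setting $Q=q^t$, this gives $Q^{\ell}\ge r-1$, i.e.\ $Q\ge (r-1)^{1/\ell}$. The crucial observation is that $Q=q^t$ is itself a prime power. Therefore $Q\ge \psi\bigl((r-1)^{1/\ell}\bigr)$, and since the vector code was arbitrary,
\[
q_v^{\mathrm{MDS}}(\mathcal{N})\ge \psi\bigl((r-1)^{1/\ell}\bigr).
\]
The other condition of Corollary~\ref{cor8}, namely $\alpha\ell+\epsilon-h+1=2\le q^t$, is automatic.

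For the matching scalar upper bound I invoke Theorem~\ref{thm13}. Over any prime power $q$ with $q^\ell+1\ge r$, Construction~\ref{cst1} yields a scalar MDS code; in particular this works for $q=\psi((r-1)^{1/\ell})$. The construction is valid because an $\ell$-spread of $\mathbb{F}_q^{2\ell}$ exists (since $\ell\mid 2\ell$) and the set \eqref{eq:set} is nonempty. As a cross-check, Theorem~\ref{thm6} gives the same conclusion: the spread elements form a $2$-$(2\ell,\ell,\ell)_q^c$ covering Grassmannian code with $q^\ell+1\ge r$ codewords, and a $[2\ell+1,2\ell]_q$ MDS code (the parity-check code) exists over every field.

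Combining the two steps gives
\[
q^{\mathrm{MDS}}(\mathcal{N})\le \psi\bigl((r-1)^{1/\ell}\bigr)\le q_v^{\mathrm{MDS}}(\mathcal{N})\le q^{\mathrm{MDS}}(\mathcal{N}),
\]
so all three quantities are equal. The step needing most care is the lower bound for vector codes. One must confirm that Corollary~\ref{cor8} (through Theorem~\ref{thm:vector_big} and Lemma~\ref{lem:upper bound of covering}, with the $\alpha=2$ base bound from \cite{QW2023}) applies with $n=2\ell t$ and $k=\ell t$; here the bound is just the partial-spread bound in $\mathbb{F}_q^{2\ell t}$. One must also note that rounding $Q$ up to $\psi(\cdot)$ loses nothing, precisely because every effective field size $q^t$ is a prime power.
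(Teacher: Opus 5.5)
Your proposal is correct and follows the paper's proof: the lower bound $q_v^{\mathrm{MDS}}(\mathcal{N})\ge\psi\bigl((r-1)^{1/\ell}\bigr)$ comes from Corollary~\ref{cor8} with $\alpha=2$, $h=2\ell$, $\epsilon=1$ (the partial-spread bound $r\le q^{\ell t}+1$), and it is matched by the scalar upper bound from Theorem~\ref{thm13}. You also make explicit two points the paper leaves implicit, namely that $q^t$ is a prime power so rounding up to $\psi$ loses nothing, and that $q_v^{\mathrm{MDS}}(\mathcal{N})\le q^{\mathrm{MDS}}(\mathcal{N})$ is what closes the chain of inequalities.
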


\begin{IEEEproof}
Applying \cref{cor8} with \(h=2\ell\), \(\alpha=2\), and \(\epsilon=1\) gives
\(q_v^{\mathrm{MDS}}(\mathcal{N}) \ge \psi\left((r-1)^{1/\ell}\right)
\). By Theorem \ref{thm13}, the minimum alphabet size for scalar MDS codes satisfies
\[
q^{\mathrm{MDS}}(\mathcal{N}) \le \psi\left((r-1)^{1/\ell}\right).
\]
Combining these two inequalities yields \(q_v^{\mathrm{MDS}}(\mathcal{N})=q^{\mathrm{MDS}}(\mathcal{N})\).
\end{IEEEproof}

This theorem implies that, for the generalized combination network \((1,\ell)\text{-}\mathcal{N}_{2\ell,r,2\ell+1}\), vector MDS codes cannot achieve a smaller alphabet size than scalar MDS codes.

\section{Vector Error-Correcting Codes for Zosin-Khuller Networks}

The Zosin-Khuller network was originally proposed by Zosin and Khuller in \cite{ZK2002} to characterize the integrality gap of linear programming formulations for directed Steiner tree problems. It was later adopted in \cite{CF2006} to illustrate the throughput gain of network coding over traditional routing schemes. In addition, scalar error-correcting codes for this network were studied in \cite{WS2024}. In this section, we extend the investigation to vector error-correcting codes.

We first review fundamental hypergraph concepts required for subsequent analysis. A hypergraph \(H=(V(H),E(H))\) consists of a vertex set \(V(H)\) and an edge set \(E(H)\), where each edge is a nonempty subset of \(V(H)\). A hypergraph is \(r\)-uniform if every edge contains exactly \(r\) vertices. We further review the definition of hypergraph homomorphisms. For two hypergraphs \(G=(V(G),E(G))\) and \(H=(V(H),E(H))\), a mapping \(\phi:V(G)\to V(H)\) is a homomorphism from \(G\) to \(H\) if \(\{\phi(a):a\in e\}\in E(H)\) holds for every edge \(e\in E(G)\). We write \(G\to H\) to denote such a homomorphism. Note that a graph is a $2$-uniform hypergraph. The chromatic number \(\chi(G_1)\) of a graph \(G_1\) is defined as the minimum integer \(n\) such that \(G_1\to K_n\), where \(K_n\) denotes the complete graph on \(n\) vertices. By transitivity of homomorphisms, \(G_1\to G_2\) implies \(\chi(G_1)\le \chi(G_2)\).

Let \(m,N\) be positive integers satisfying \(2 \le m \le N\). Let \(\binom{[N]}{m-1}\) and \(\binom{[N]}{m}\) denote the collections of all \((m-1)\)-element and \(m\)-element subsets of \([N]\), respectively. The Zosin-Khuller network \(Z_{h,m,N}\) is a layered acyclic network with five layers. The source node \(\sigma\) transmits \(h\) messages to \(N\) sink nodes via three intermediate node layers, namely \(A\)-nodes, \(B\)-nodes, and \(C\)-nodes. Specifically, \(A\)-nodes are indexed by the elements of \(\binom{[N]}{m-1}\), while both \(B\)-nodes and \(C\)-nodes are indexed by the elements of \(\binom{[N]}{m}\). The connectivity is as follows: the source node is connected to all \(A\)-nodes; an \(A\)-node is connected to a \(B\)-node if the index subset of the former is contained in that of the latter; each \(B\)-node is connected to the \(C\)-node with the same index; and a sink node \(i \in [N]\) is connected to a \(C\)-node if \(i\) belongs to the index subset of the \(C\)-node. The minimum cut capacity between the source and each sink is \(\bar{C} = \binom{N-1}{m-1}\). An example is illustrated in \cref{fig:ZKmodel}.

\begin{figure}[h]
    \centering
\def\x{0.55}
\begin{tikzpicture}[node distance=2cm, every node/.style={circle, fill, inner sep=2pt}]
    \node (top) at (0, 4) {};
    \node (t1) at (-2.5, 3) {};
    \node (t2) at (-1.5, 3) {};
    \node (t3) at (-0.5, 3) {};
    \node (t4) at (0.5, 3) {};
    \node (t5) at (1.5, 3) {};
    \node (t6) at (2.5, 3) {};
    \node (t11) at (-2, 2) {};
    \node (t12) at (-0.75, 2) {};
    \node (t13) at (0.75, 2) {};
    \node (t14) at (2, 2) {};
    \node (t21) at (-2, 1) {};
    \node (t22) at (-0.75, 1) {};
    \node (t23) at (0.75, 1) {};
    \node (t24) at (2, 1) {};
    \node (b1) at (-3, 0) {};
    \node (b2) at (-1, 0) {};
    \node (b3) at (1, 0) {};
    \node (b4) at (3, 0) {};

    \node[fill=none, inner sep=0pt] at (-2.5, 3.3) {1,2};
    \node[fill=none, inner sep=0pt] at (-1.5, 3.3) {1,3};
    \node[fill=none, inner sep=0pt] at (-0.5, 3.3) {1,4};
    \node[fill=none, inner sep=0pt] at (0.5, 3.3) {2,3};
    \node[fill=none, inner sep=0pt] at (1.5, 3.3) {2,4};
    \node[fill=none, inner sep=0pt] at (2.5, 3.3) {3,4};
    \node[fill=none, inner sep=0pt] at (-2, 2.3) {1,2,3};
    \node[fill=none, inner sep=0pt] at (-0.75, 2.3) {1,2,4};
    \node[fill=none, inner sep=0pt] at (0.75, 2.3) {1,3,4};
    \node[fill=none, inner sep=0pt] at (2, 2.3) {2,3,4};
    \node[fill=none, inner sep=0pt] at (-2, 0.7) {1,2,3};
    \node[fill=none, inner sep=0pt] at (-0.75, 0.7) {1,2,4};
    \node[fill=none, inner sep=0pt] at (0.75, 0.7) {1,3,4};
    \node[fill=none, inner sep=0pt] at (2, 0.7) {2,3,4};
    \node[fill=none, inner sep=0pt] at (-3, -0.3) {1};
    \node[fill=none, inner sep=0pt] at (-1, -0.3) {2};
    \node[fill=none, inner sep=0pt] at (1, -0.3) {3};
    \node[fill=none, inner sep=0pt] at (3, -0.3) {4};
    \draw (top) -- (t1);
    \draw (top) -- (t2);
    \draw (top) -- (t3);
    \draw (top) -- (t4);
    \draw (top) -- (t5);
    \draw (top) -- (t6);
    \draw (t1) -- (t11);
    \draw (t1) -- (t12);
    \draw (t2) -- (t11);
    \draw (t2) -- (t13);
    \draw (t3) -- (t12);
    \draw (t3) -- (t13);
    \draw (t4) -- (t11);
    \draw (t4) -- (t14);
    \draw (t5) -- (t12);
    \draw (t5) -- (t14);
    \draw (t6) -- (t13);
    \draw (t6) -- (t14);

    \draw (t11) -- (t21);
    \draw (t12) -- (t22);
    \draw (t13) -- (t23);
    \draw (t14) -- (t24);

    \draw (t21) -- (b1);
    \draw (t21) -- (b2);
    \draw (t21) -- (b3);
    \draw (t22) -- (b1);
    \draw (t22) -- (b2);
    \draw (t22) -- (b4);
    \draw (t23) -- (b1);
    \draw (t23) -- (b3);
    \draw (t23) -- (b4);
    \draw (t24) -- (b2);
    \draw (t24) -- (b3);
    \draw (t24) -- (b4);
\end{tikzpicture}
\caption{A depiction of the Zosin-Khuller network with $m=3$ and $N=4$.}
   \label{fig:ZKmodel}
 \end{figure}

We now introduce two hypergraphs that will be used in our analysis. First, we construct the hypergraph \(H_{N:m}^h\) with vertex set \(\binom{[N]}{m}\). A collection of \(h\) vertices \(\{t_1, \dots, t_h\} \subseteq \binom{[N]}{m}\) forms a hyperedge if and only if the underlying \(m\)-element subsets have nonempty intersection. Second, we recall the definition of \(q\)-analog Kneser hypergraphs. Let \(V \triangleq \mathbb{F}_q^{ht}\). The \(q\)-analog Kneser hypergraph \(qK_{ht:t}^h\) has all \(t\)-dimensional subspaces of \(V\) as its vertices. A set of \(h\) vertices \(\{V_{i_1}, V_{i_2}, \dots, V_{i_h}\}\) constitutes a hyperedge if and only if
\[
\dim(V_{i_1} + \cdots + V_{i_h}) = ht.
\]
For \(h = 2\), we write \(qK_{2t:t}\) for brevity.

\begin{lemma}\label{thm15}
If there exists a vector network MDS code over \(\mathbb{F}_q^t\) for the Zosin-Khuller network \(Z_{h,m,N}\), then there exists a hypergraph homomorphism from \(H_{N:m}^h\) to \(qK_{ht:t}^h\).
\end{lemma}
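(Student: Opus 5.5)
The plan is to read the homomorphism off the global encoding matrices of the edges entering the $C$-nodes. Each $C$-node, indexed by $T \in \binom{[N]}{m}$, has exactly one incoming edge, namely the edge from the $B$-node $T$. Let $f_T \in \mathbb{F}_q^{ht\times t}$ be the global encoding matrix of that edge. Every packet leaving $C$-node $T$ is then of the form $\mathbf{x} f_T K_{T,i}$ for some $K_{T,i}\in\mathbb{F}_q^{t\times t}$. Here $K_{T,i}$ is the local coefficient on the edge from $C$-node $T$ to sink $i$, and $i\in T$. I would define $\phi(T)=\langle f_T\rangle$, the column space of $f_T$, and then show two things: that $\phi(T)$ is $t$-dimensional, and that $\phi$ sends hyperedges to hyperedges.

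First I would mimic Lemma~\ref{lem:relation between vector} at a sink $i\in[N]$. The incoming edges of sink $i$ are the $\bar C=\binom{N-1}{m-1}$ edges from the $C$-nodes $T\ni i$. The decoding matrix is $F(i)=(f_T K_{T,i})_{T\ni i}$. The matrix $G(i)$ contains the identity $I_{t\bar C}$ in the rows indexed by these edges, since an error injected on a last-hop edge reaches the sink unchanged. Hence the network distance is at most the Hamming distance over $\mathbb{F}_q^t$, which gives $d_H(\mathcal{C}(i))\ge d_{\min}(\mathcal{C},i)=\bar C-h+1$ for the $\mathbb{F}_q$-linear code $\mathcal{C}(i)$ generated by $F(i)$. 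Only this inequality is needed, not equality.

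Next I apply Lemma~\ref{lem: vector linear} with $n=\bar C$ and $d=\bar C-h+1$. It says any $h$ column blocks $f_{T_j}K_{T_j,i}$ with $T_j\ni i$ form an $ht\times ht$ matrix of full row rank, so that matrix is invertible. Now take a hyperedge $\{T_1,\dots,T_h\}$ of $H^h_{N:m}$ and pick $i\in\bigcap_j T_j$. All $T_j$ are incoming $C$-nodes of sink $i$. Since $\langle f_{T_j}K_{T_j,i}\rangle\subseteq\langle f_{T_j}\rangle$, we get
\[
ht=\dim\Big(\sum_j \langle f_{T_j}K_{T_j,i}\rangle\Big)\le \dim\Big(\sum_j\langle f_{T_j}\rangle\Big)\le \sum_j \operatorname{rank} f_{T_j}\le ht .
\]
So every $f_{T_j}$ has rank exactly $t$, and $\dim(\phi(T_1)+\cdots+\phi(T_h))=ht$. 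In particular the $\phi(T_j)$ are $h$ distinct $t$-dimensional subspaces whose sum is all of $\mathbb{F}_q^{ht}$, so they form a hyperedge of $qK^h_{ht:t}$.

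The remaining issue is well-definedness: $\phi(T)$ must be a vertex of $qK^h_{ht:t}$ even for a $T$ lying in no hyperedge. This can happen only in degenerate cases, for instance $\bar C<h$, and decodability already forces $\bar C\ge h$. Whenever $\bar C\ge h$, every $T$ together with $h-1$ other $m$-sets sharing an element of $T$ forms a hyperedge, so the rank argument above covers it. In any leftover degenerate case I would simply redefine $\phi(T)$ to be an arbitrary $t$-dimensional subspace, which affects no hyperedge condition.

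The main obstacle is the first step, justifying $d_H(\mathcal{C}(i))\ge d_{\min}(\mathcal{C},i)$. This is where I need that errors on the $C\to i$ edges are not transformed before reaching the sink. That holds because those edges terminate at the sink itself, so the corresponding rows of $G(i)$ are identity blocks.
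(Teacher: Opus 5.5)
Your proposal is correct and follows essentially the same route as the paper. Like the paper, you set $\phi(T)$ to be the column space of the global encoding matrix on the unique incoming edge of $C$-node $T$, and you use the MDS property at a sink $i\in\bigcap_j T_j$ together with Lemma~\ref{lem: vector linear} to get $\dim(\phi(T_1)+\cdots+\phi(T_h))=ht$. Your version is more careful than the paper's in three places: you handle the last-hop local coefficients $K_{T,i}$ through the inclusion $\langle f_T K_{T,i}\rangle\subseteq\langle f_T\rangle$, you justify $d_H(\mathcal{C}(i))\ge d_{\min}(\mathcal{C},i)$ from the identity blocks in $G(i)$, and you check that every $\phi(T)$ is $t$-dimensional and that each hyperedge maps to $h$ distinct subspaces.
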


\begin{IEEEproof}
Let \(\mathcal{C}\) be a vector network MDS code over \(\mathbb{F}_q^t\) for \(Z_{h,m,N}\). For each \(C\)-node indexed by \(j \in \binom{[N]}{m}\), let \(V_j \in \mathbb{F}_q^{ht \times t}\) denote the global encoding vector of \(\mathcal{C}\) associated with its incoming edge. By the MDS property of \(\mathcal{C}\), the decoding matrix at each sink induces an \(\mathbb{F}_q\)-linear \([\bar{C}, h]\) MDS code over \(\mathbb{F}_q^t\), which in particular implies that every \(V_j\) has full column rank. Furthermore, it follows from Lemma \ref{lem: vector linear} that for any \(h\) distinct \(C\)-nodes indexed by \(j_1, \dots, j_h\) whose corresponding subsets contain a common element of \([N]\),
\[
\dim(V_{j_1} + \cdots + V_{j_h}) = ht.
\]
We now define a mapping \(\phi : V(H_{N:m}^h) \to V(qK_{ht:t}^h)\) by setting \(\phi(j) = \langle V_j \rangle\), where \(\langle \cdot \rangle\) denotes the linear span of the matrix columns. By the preceding argument, any hyperedge in \(H_{N:m}^h\) is mapped to a hyperedge in \(qK_{ht:t}^h\). Thus, \(\phi\) is a valid hypergraph homomorphism, which completes the proof.
\end{IEEEproof}

Based on the homomorphism relationship established in \cref{thm15}, we derive a lower bound on the minimum effective field size \(q_v^{\mathrm{MDS}}(Z_{h,m,N})\) for vector MDS codes over Zosin-Khuller networks. To this end, we first summarize known chromatic number results for \(q\)-Kneser graphs.

\begin{lemma}[\cite{BB2012},\cite{CG2006},\cite{I2019}]\label{thm16}
For any prime power \(q\) and positive integer \(t\), the \(q\)-analog Kneser graph \(qK_{2t:t}\) satisfies
\[
\chi(qK_{2t:t}) \le q^t + q^{t-1}.
\]
This bound is tight for \(q \ge 5\) or \(t \le 3\), yielding the exact chromatic number
\[
\chi(qK_{2t:t}) = q^t + q^{t-1}.
\]
\end{lemma}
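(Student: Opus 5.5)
This is a known result, so the plan is to reassemble the two halves from the cited literature rather than build a new argument. The inequality $\chi(qK_{2t:t})\le q^t+q^{t-1}$ comes from an explicit proper colouring. The exact value for $q\ge 5$ or $t\le 3$ comes from the matching lower bound proved in \cite{BB2012,CG2006,I2019}.

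For the upper bound I would follow the colouring of Chowdhury, Godsil and Royle \cite{CG2006}. Two vertices of $qK_{2t:t}$ are adjacent exactly when they intersect trivially, so it suffices to partition the $t$-subspaces of $\mathbb{F}_q^{2t}$ into $q^t+q^{t-1}$ classes, each pairwise intersecting.
\begin{enumerate}
\item Fix a $(t+1)$-dimensional subspace $W$. Every $t$-space $X$ meets $W$ nontrivially, since $\dim X+\dim W>2t$.
\item Fix a $(t-1)$-dimensional subspace $U\subset W$. The points (one-dimensional subspaces) of $W$ outside $U$ number
\[
\frac{q^{t+1}-1}{q-1}-\frac{q^{t-1}-1}{q-1}=q^t+q^{t-1},
\]
and these points are the colours.
\item Give each $X$ with $X\cap W\not\subseteq U$ a colour chosen among the points of $(X\cap W)\setminus U$. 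Any two $X$ with the same colour share that point, so each such class is intersecting.
\end{enumerate}

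The remaining $t$-spaces are those with $0\neq X\cap W\subseteq U$. The step I expect to be the main obstacle is placing them into the existing colour classes without breaking the intersecting property, and I am not sure the scheme above handles them as stated. My first attempt would be to use the $q+1$ hyperplanes $H$ of $W$ that contain $U$. For such an $X$, the sum $X+U$ lies in $W$'s ambient space and $X+W$ is large, so a dimension count should force $X$ to meet some fixed auxiliary $t$-space $Y$ chosen complementary-ish to $W$. I would then route these leftover spaces into colour classes whose points lie on a single such $H$, checking that any two spaces in a merged class still share a nonzero vector. If this bookkeeping fails, I would fall back on quoting the colouring of \cite{CG2006}, which attains exactly $q^t+q^{t-1}$ for $n=2t$.

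For tightness I would not reprove anything and would cite the lower bounds directly:
\begin{itemize}
\item Blokhuis et al.\ \cite{BB2012} classify the maximal intersecting families of $t$-spaces in $\mathbb{F}_q^{2t}$ and show that for $q\ge 5$ no colouring with fewer than $q^t+q^{t-1}$ classes exists. This uses the Hilton--Milner-type stability that any non-point-pencil intersecting family is small.
\item For $t\le 3$, the equality follows from the small-case analysis in \cite{CG2006,I2019}.
\end{itemize}
Combining the two halves gives $\chi(qK_{2t:t})=q^t+q^{t-1}$ in those ranges, which is what the subsequent homomorphism argument via \cref{thm15} needs.
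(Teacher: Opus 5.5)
The paper gives no argument for this lemma; it only cites \cite{BB2012,CG2006,I2019}. Your fallback of quoting the literature therefore matches the paper. The explicit colouring you sketch, however, has a genuine gap, and it is not just bookkeeping: step~3 cannot be completed by merging. In your scheme, every $t$-space $Y$ with $Y\cap W=\langle Q\rangle$, $Q\in W\setminus U$, is forced into the class of $Q$. Now take a leftover $X$ with $0\neq X\cap W\subseteq U$ (such $X$ exist once $t\ge 2$). Then $Q\notin X$, so in $\mathbb{F}_q^{2t}/\langle Q\rangle$ the images of $X$ and $W$ are both $t$-dimensional. Hence they have a common $(t-1)$-dimensional complement, whose preimage is a $t$-space $Y\ni Q$ with $Y\cap W=\langle Q\rangle$ and $Y\cap X=0$. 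So $X$ is adjacent to a member of every one of your classes, and some point classes must be replaced by classes of another type. Your remark that every non-point-pencil intersecting family is small is false exactly here. For $n=2t$, the set of all $t$-spaces in a hyperplane is intersecting (since $2t>2t-1$) and is as large as a point pencil.

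The standard colouring uses these dual pencils. Fix a $(t+1)$-space $W$ and a $t$-space $H_0\subset W$. Use two kinds of colours:
\begin{itemize}
\item the $q^t$ points of $W\setminus H_0$, as point-pencil colours;
\item the $q^{t-1}$ hyperplanes $Y$ of $\mathbb{F}_q^{2t}$ with $H_0\subseteq Y$ and $W\not\subseteq Y$, as dual-pencil colours. These correspond to the hyperplanes of $\mathbb{F}_q^{2t}/H_0\cong\mathbb{F}_q^{t}$ missing the point $W/H_0$.
\end{itemize}
If $X\cap W\not\subseteq H_0$, colour $X$ by a point of $(X\cap W)\setminus H_0$. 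Otherwise $W\not\subseteq X+H_0$, so $(X+H_0)/H_0$ avoids $W/H_0$. It then lies in one of those hyperplanes, whose preimage contains $X$. In your language, the $q^{t-1}$ points of $H_0\setminus U$ are traded for dual pencils.

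You should also correct the attributions for tightness:
\begin{itemize}
\item $t=1$ is trivial, since $qK_{2:1}=K_{q+1}$;
\item $t=2$ is \cite{CG2006};
\item $t=3$ for every $q$ is \cite{BB2012};
\item $q\ge 5$ is \cite{I2019}.
\end{itemize}
With your assignment, the cases $t=3$, $q\le 4$ are left uncovered.
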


\begin{theorem}\label{thm17}
Let \(m\), \(N\) be positive integers with \(2 \leq m \leq N\), and let \(\bar{C}=\binom{N-1}{m-1}\geq 2\). The minimum effective field size for vector network MDS codes in \(Z_{2,m,N}\) satisfies
\[q_{v}^{\mathrm{MDS}}(Z_{2,m,N})\geq \tau, \]
where \(\tau=\operatorname{min}\{q^{t}\mid q^{t}+q^{t-1}\geq \binom{N}{m}/\left\lfloor\frac{N}{m}\right\rfloor,\ q~\text{is a prime power},\ t\in \mathbb{N}\}\).
\end{theorem}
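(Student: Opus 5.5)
The plan is to combine the homomorphism of \cref{thm15} with the chromatic number bound of \cref{thm16}. For $h=2$ the hypergraph $H_{N:m}^2$ is an ordinary graph $G$ on $\binom{[N]}{m}$. Two $m$-subsets are adjacent exactly when they intersect. Suppose a vector network MDS code over $\mathbb{F}_q^t$ exists for $Z_{2,m,N}$. Then \cref{thm15} gives a homomorphism $G\to qK_{2t:t}$. By transitivity of homomorphisms and \cref{thm16}, this yields $\chi(G)\le \chi(qK_{2t:t})\le q^t+q^{t-1}$.

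Next we bound $\chi(G)$ from below by the standard independence argument, $\chi(G)\ge |V(G)|/\alpha(G)$. An independent set in $G$ is a family of pairwise disjoint $m$-subsets of $[N]$, so $\alpha(G)=\lfloor N/m\rfloor$. Hence
\[
\binom{N}{m}\Big/\left\lfloor \frac{N}{m}\right\rfloor \le \chi(G)\le q^t+q^{t-1}.
\]
So every pair $(q,t)$ for which a vector MDS code over $\mathbb{F}_q^t$ exists satisfies the constraint defining $\tau$. Taking the minimum of $q^t$ over such pairs gives $q_v^{\mathrm{MDS}}(Z_{2,m,N})\ge\tau$.

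Two steps need care. First, \cref{thm15} must apply with $h=2$. For this, every pair of $C$-nodes sharing a sink must have trivially intersecting $t$-dimensional spans. This follows from \cref{lem: vector linear}: the decoding matrix at a sink generates an $\mathbb{F}_q$-linear $[\bar C,2]$ MDS code, so its minimum distance is $\bar C-1$. Then any $\bar C-(\bar C-1)+1=2$ column blocks have full row rank $2t$. This is exactly where the hypothesis $\bar C\ge 2$ is used, since it guarantees that two intersecting $m$-subsets are distinct $C$-nodes feeding a common sink. Second, $G$ has no loops, because each vertex is a single subset, so the homomorphism is a genuine proper coloring pullback.

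The main point to state cleanly is the inequality $\chi(G)\ge|V|/\alpha$: each color class of a proper coloring is a family of pairwise disjoint $m$-sets, hence has size at most $\lfloor N/m\rfloor$. Only the upper bound $\chi(qK_{2t:t})\le q^t+q^{t-1}$ from \cref{thm16} is needed, so no tightness assumption on $q$ or $t$ is required.
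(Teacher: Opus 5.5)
Your proposal is correct and follows essentially the same route as the paper. Both arguments use the homomorphism of \cref{thm15} to get $\chi(H_{N:m}^2)\le\chi(qK_{2t:t})\le q^t+q^{t-1}$, then the color-class bound $\chi(H_{N:m}^2)\ge\binom{N}{m}/\lfloor N/m\rfloor$, and conclude from the definition of $\tau$. Two small remarks. You are right that only the upper bound in \cref{thm16} is needed. The hypothesis $\bar C\ge 2$ is really there so that the min-cut is at least $h=2$ and a decodable code can exist; it does not guarantee that intersecting subsets feed a common sink, since any two distinct intersecting $m$-subsets already force $\bar C\ge 2$.
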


\begin{IEEEproof}
Let \(\mathcal{C}\) be a vector network MDS code over \(\mathbb{F}_{q}^{t}\) for \(Z_{2,m,N}\). By \cref{thm15}, there exists a homomorphism \(H_{N:m}^{2}\to qK_{2t:t}\), which implies \(\chi(H_{N:m}^{2})\leq \chi(qK_{2t:t})\).

For the graph \(H_{N:m}^{2}\), any valid vertex coloring partitions its vertex set of \(m\)-subsets into disjoint color classes. Each class contains only pairwise disjoint subsets and thus has size at most \(\lfloor N/m\rfloor\), which yields the lower bound \(\chi(H_{N:m}^{2})\geq \binom{N}{m}/\lfloor N/m\rfloor\).

Combining this result with the chromatic number formula of \(qK_{2t:t}\) in \cref{thm16}, we obtain
\[
\frac{\binom{N}{m}}{\left\lfloor N/m \right\rfloor} \le q^t + q^{t-1}.
\]
By the definition of \(\tau\), the effective field size \(q^t\) of any vector MDS code satisfies \(q^{t}\geq \tau\), completing the proof.
\end{IEEEproof}

\begin{theorem}\label{thm18}
Let \(m,N,h\) be positive integers such that \(3\leq h\leq \bar{C}=\binom{N-1}{m-1}\). The minimum effective field size for vector MDS codes over the Zosin-Khuller network \(Z_{h,m,N}\) satisfies
\[
q_{v}^{\mathrm{MDS}}(Z_{h,m,N})\geq
\begin{cases}
\psi\left(\binom{N}{m}-h+1\right), & m=N-1,\\
\psi\left(\binom{N-1}{m-1}-h+1\right), & 2\leq m\leq N-2.
\end{cases}
\]
\end{theorem}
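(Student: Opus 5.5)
Both cases reduce to the existence of a long $\mathbb{F}_q$-linear MDS code over $\mathbb{F}_q^t$ of dimension $h$. Let $\mathcal{C}$ be a vector network MDS code over $\mathbb{F}_q^t$ for $Z_{h,m,N}$, and for each $C$-node $j\in\binom{[N]}{m}$ let $V_j\in\mathbb{F}_q^{ht\times t}$ be its global encoding matrix, as in the proof of \cref{thm15}. Sink $i$ receives exactly the blocks $V_j$ with $i\in j$, so its decoding matrix has $\bar C=\binom{N-1}{m-1}$ column blocks. By \cref{lem: vector linear}, the MDS property means any $h$ of these blocks have full row rank $ht$. Hence the $\bar C$ blocks seen by a single sink generate an $\mathbb{F}_q$-linear $[\bar C,h]$ MDS code over $\mathbb{F}_q^t$. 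Equivalently, their column spaces form an $h$-$(ht,t,(h-1)t)_q^c$ covering Grassmannian code of size $\bar C$, as noted after \cref{thm9}.

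\textbf{Step 1: the field-size bound for such codes.} The paper states this bound in Corollary~\ref{cor:vector lower bound}, namely $q^t\ge n-h+1$ for a length-$n$ code with $h\ge 2$. I would prove it directly. Because the code is MDS, its dual is again an $\mathbb{F}_q$-linear MDS code over $\mathbb{F}_q^t$. I then pass to a subcode of $\mathbb{F}_q$-dimension $2t$, i.e.\ an effective dimension-$2$ code. Puncturing and shortening preserve MDS-ness in this additive setting, and the shortening is possible since $h\ge 3$. A dimension-$2$ MDS code of length $n'$ over $\mathbb{F}_q^t$ is exactly $n'$ pairwise trivially intersecting $t$-subspaces of $\mathbb{F}_q^{2t}$. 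There are at most $(q^{2t}-1)/(q^t-1)=q^t+1$ such subspaces, which gives $n'\le q^t+1$. Shortening $h-2$ times gives $n'=n-h+2$, so $n\le q^t+h-1$. Since $q^t$ is a prime power, this yields $q^t\ge\psi(n-h+1)$.

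\textbf{Step 2: the case $2\le m\le N-2$.} Apply Step 1 with $n=\bar C=\binom{N-1}{m-1}$ to the decoding matrix of any single sink. This gives $q_v^{\mathrm{MDS}}(Z_{h,m,N})\ge\psi(\bar C-h+1)$.

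\textbf{Step 3: the case $m=N-1$.} Here every $C$-node is $[N]\setminus\{k\}$, and the aim is to show all $N=\binom{N}{N-1}$ blocks $V_j$ together form an MDS code of dimension $h$. Take any $h$ distinct $C$-nodes $j_1,\dots,j_h$. Their complements are $h$ distinct singletons, so their intersection has size $N-h$. This is nonempty when $h\le N-1=\bar C$, which is exactly the hypothesis. So some sink $i$ lies in all $h$ of these sets, and the MDS property at $i$ gives $\dim(V_{j_1}+\cdots+V_{j_h})=ht$. Thus every $h$ of the $N$ blocks have full rank, and by \cref{lem: vector linear} we get an $\mathbb{F}_q$-linear $[N,h]$ MDS code over $\mathbb{F}_q^t$. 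Step 1 then gives $q^t\ge\psi(\binom{N}{m}-h+1)$.

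\textbf{Main obstacle.} The only delicate point is Step 1 in the vector setting. Duality, puncturing and shortening must each be checked to preserve the additive MDS property, so that the dimension-$2$ spread-counting argument applies. If the paper's Corollary~\ref{cor:vector lower bound} can be cited as established for general lengths, this step can simply be invoked. Otherwise, the shortening step is where I would spend the most care.
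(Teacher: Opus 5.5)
Your proof is correct and is essentially the paper's argument: a single sink yields an $h$-$(ht,t,(h-1)t)_q^c$ covering Grassmannian code of size $\bar C$, and for $m=N-1$ any $h\le N-1$ of the sets $[N]\setminus\{k\}$ share a sink, so all $N$ blocks qualify (working at the level of blocks also avoids the paper's tacit need for $\phi$ to be injective). Your Step 1 (shortening $h-2$ times, then counting pairwise trivially intersecting $t$-subspaces of $\mathbb{F}_q^{2t}$) is exactly the quotient recursion and spread base case in the proof of Lemma~\ref{lem:upper bound of covering} with $n=ht$, $k=t$, $\alpha=h$, so the duality remark is superfluous and you can simply cite that lemma.
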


\begin{IEEEproof}
Let \(\mathcal{C}\) be a vector MDS code over \(\mathbb{F}_q^t\) for the network \(Z_{h,m,N}\). The decoding matrix at each sink induces an \(\mathbb{F}_q\)-linear \([\bar{C},h]\) MDS code over \(\mathbb{F}_q^t\), which corresponds to an \(h\)-\((ht,t,(h-1)t)_q^c\) covering Grassmannian code of size \(\bar{C}\). For \(2 \le m \le N-2\), Lemma \ref{lem:upper bound of covering} gives \(q^t \ge \bar{C} - h + 1\).

For the special case \(m=N-1\), it follows from \cref{thm15} that there exists a hypergraph homomorphism \(\phi: H_{N:m}^h \to qK_{ht:t}^h\). Since \(h\leq N-1\), any \(h\) distinct \((N-1)\)-element subsets of \([N]\) have nonempty intersection, implying that every \(h\)-vertex subset of \(H_{N:m}^h\) forms a hyperedge. Consequently, the entire vertex set of \(H_{N:m}^h\) is mapped via \(\phi\) to an \(h\)-\((ht,t,(h-1)t)_q^c\) covering Grassmannian code of full size \(\binom{N}{m}\). Applying the bound in \cref{lem:upper bound of covering}, we further derive \(q^t \geq \binom{N}{m}-h+1\).
\end{IEEEproof}

\begin{remark}
The scalar network MDS codes over \(Z_{h,m,N}\) have been studied in \cite{WS2024}, where a lower bound on the required field size was derived as \(q^{\mathrm{MDS}}(Z_{h,m,N})\geq \psi\left(\binom{N-1}{m-1}-h+1\right)\).

Consider the two-message scenario, i.e., $h=2$. For $t=1$, the vector-case lower bound established in \cref{thm17} degenerates to a new bound for scalar MDS codes over \(Z_{2,m,N}\), given by
\[
q^{\mathrm{MDS}}(Z_{2,m,N})\geq \psi\left(\frac{\binom{N}{m}}{\left\lfloor N/m \right\rfloor}-1\right).
\]
Whenever $m\nmid N$, we have $\frac{\binom{N}{m}}{\lfloor N/m \rfloor} > \binom{N-1}{m-1}$, so the bound from Theorem~\ref{thm17} is strictly tighter than the prior result in \cite{WS2024}.

Moreover, for $m=N-1$, the inequality $\binom{N}{m}>\binom{N-1}{m-1}$ holds strictly. Hence, the lower bound derived herein again strictly improves the existing bound from \cite{WS2024}.
\end{remark}

\begin{example}
To illustrate the bounds, consider the Zosin-Khuller network with \(N=7\) and \(m=3\). The minimum cut capacity is \(\bar{C} = \binom{7-1}{3-1} = 15\). Combining \cref{thm17} and \cref{thm18}, we derive the following lower bounds on the minimum field size for scalar MDS codes:
\[
q^{\mathrm{MDS}}(Z_{h,3,7}) \ge
\begin{cases}
17, & h=2,\\
\psi(16-h), & 3 \le h \le 15,
\end{cases}
\]
and the following lower bounds for vector MDS codes:
\[
q_v^{\mathrm{MDS}}(Z_{h,3,7}) \ge
\begin{cases}
16, & h=2,\\
\psi(16-h), & 3 \le h \le 15.
\end{cases}
\]

The lower bound for scalar MDS codes given in \cite{WS2024} is \(q^{\mathrm{MDS}}(Z_{h,3,7}) \ge \psi(16-h)\) for all \(2 \le h \le 15\). Our bound improves this to \(17\) for \(h=2\), demonstrating a significant improvement. Note also that for \(h=2\), the lower bound on the minimum field size for vector network MDS codes is strictly smaller than that for scalar MDS codes, suggesting that vector coding may outperform scalar coding in this setting.
\end{example}

\begin{lemma}\label{lem:subset_intersection}
Let \(P_1, P_2, \dots, P_\tau\) be \(m\)-element subsets of \([N]\) such that
\(
\bigcap_{i=1}^{\tau} P_i = \emptyset
\). Then
\[
N \ge \frac{\tau}{\tau-1}\,m.
\]
\end{lemma}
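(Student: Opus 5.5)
The plan is a double-counting argument on complements. Set $Q_i \triangleq [N]\setminus P_i$ for $i\in[\tau]$, so that each $Q_i$ has exactly $N-m$ elements. By De Morgan's law, the hypothesis $\bigcap_{i=1}^{\tau} P_i=\emptyset$ is equivalent to
\[
\bigcup_{i=1}^{\tau} Q_i=[N],
\]
that is, every element of $[N]$ lies outside at least one of the sets $P_i$.

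Next we count incidences. Summing the sizes of the complements and using the union bound gives
\[
N=\Bigl|\bigcup_{i=1}^{\tau} Q_i\Bigr|\le \sum_{i=1}^{\tau}|Q_i|=\tau(N-m).
\]
Rearranging yields $\tau m\le (\tau-1)N$.

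Finally we divide by $\tau-1$, which requires $\tau\ge 2$. If $\tau=1$, the hypothesis would force $P_1=\emptyset$, which contradicts $|P_1|=m\ge 2$ in the present setting. For $\tau=1$ the right-hand side $\frac{\tau}{\tau-1}m$ is also undefined, so that case is excluded anyway. For $\tau\ge 2$ we obtain $N\ge \frac{\tau}{\tau-1}m$.

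There is no real obstacle here. The only points needing care are the degenerate case $\tau=1$ and the observation that the sets $P_i$ need not be distinct; repeated sets do not affect the union bound. In the subsequent application, the contrapositive is what matters: if $N<\frac{h}{h-1}m$, equivalently $h<N/(N-m)$, then any $h$ of the $m$-subsets have nonempty intersection. Consequently $H^h_{N:m}$ is complete, and the argument used for the case $m=N-1$ in Theorem \ref{thm18} carries over.
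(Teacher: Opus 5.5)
Your proof is correct, and it takes a different and much shorter route than the paper's. The paper argues by induction on $\tau$. It strips off the common part $\bigcap_{i=1}^{\tau-1}P_i=\{i_1,\dots,i_j\}$ and applies the inductive hypothesis to the $(m-j)$-subsets $P_i'$. That application tacitly uses the statement in the stronger form $\bigl|\bigcup_{i=1}^{\tau-1}P_i'\bigr|\ge\frac{\tau-1}{\tau-2}(m-j)$, with the ambient set replaced by the union. The paper then splits into cases according to whether $\bigl|\bigcup_{i=1}^{\tau-1}P_i'\bigr|\ge m$, comparing $j$ with $m/(\tau-1)$ in each.

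Your complement-and-union-bound argument replaces all of this with the single inequality $N\le\tau(N-m)$. It needs no strengthening of the statement to drive an induction. It also shows exactly when the bound is tight: the complements $Q_i$ must partition $[N]$. Finally, it makes the hypothesis of Theorem~\ref{thm:thm13} transparent. The condition $h<N/(N-m)$ is precisely $h(N-m)<N$, which says that $h$ complements of size $N-m$ cannot cover $[N]$, so any $h$ of the $m$-subsets must meet. The inductive route yields nothing beyond this same inequality and reaches it less directly.
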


\begin{IEEEproof}
The proof proceeds by induction on \(\tau\). For the case \(\tau=2\), \(P_1 \cap P_2 = \emptyset\) yields \(|P_1 \cup P_2| = 2m\), so \(N \ge 2m\), and the bound holds.

Assume that the claim holds for every collection of at most \(\tau-1\) \(m\)-subsets of \([N]\). Now consider \(\tau\) subsets \(P_1, \dots, P_\tau\) satisfying \(\bigcap_{i=1}^\tau P_i = \emptyset\). Write \(\bigcap_{i=1}^{\tau-1} P_i = \{i_1, i_2, \dots, i_j\}\) for some integer \(0 \le j \le m\), and define \(P_i' = P_i \setminus \{i_1, \dots, i_j\}\) for \(1 \le i \le \tau-1\). By construction, \(\bigcap_{i=1}^{\tau-1} P_i' = \emptyset\). Applying the inductive hypothesis,
\begin{align}\label{eq:lower bound condition}
\left| \bigcup_{i=1}^{\tau-1} P_i' \right| \ge \frac{\tau-1}{\tau-2} (m - j). 
\end{align}
\begin{itemize}
    \item Case 1: \(\left| \bigcup_{i=1}^{\tau-1} P_i' \right| \ge m\). Then
\[
N \ge \left| \bigcup_{i=1}^{\tau-1} P_i \right| = j + \left| \bigcup_{i=1}^{\tau-1} P_i' \right|.
\]
If \(j > m/(\tau-1)\), then \(N \ge j + m > \frac{\tau}{\tau-1}m\). Otherwise, \(j \le m/(\tau-1)\), and
\[
N \ge j + \frac{\tau-1}{\tau-2}(m-j) \ge \frac{\tau}{\tau-1}m.
\]
\item Case 2: \(\left| \bigcup_{i=1}^{\tau-1} P_i' \right| < m\). Since \(\bigcap_{i=1}^\tau P_i = \emptyset\), we have \(\{i_1, \dots, i_j\} \cap P_\tau = \emptyset\), and hence
\[
\left| P_\tau \cap \left( \bigcup_{i=1}^{\tau-1} P_i \right) \right| \le \left| \bigcup_{i=1}^{\tau-1} P_i' \right|.
\]
Thus,
\[
\begin{aligned}
N &\ge \left| \bigcup_{i=1}^{\tau-1} P_i \right| + \left| P_\tau \setminus \bigcup_{i=1}^{\tau-1} P_i \right| \\
&\ge j + \left| \bigcup_{i=1}^{\tau-1} P_i' \right| + m - \left| \bigcup_{i=1}^{\tau-1} P_i' \right| = j + m.
\end{aligned}
\]
From \eqref{eq:lower bound condition} and \(\left| \bigcup_{i=1}^{\tau-1} P_i' \right| < m\), we get \(j > m/(\tau-1)\), so
\[
N \ge j + m > \frac{m}{\tau-1} + m = \frac{\tau}{\tau-1}m.
\]
\end{itemize}
All cases verify \(N \ge \frac{\tau}{\tau-1}m\), completing the induction.
\end{IEEEproof}

\begin{theorem}\label{thm:thm13}
Let \(m, N, h\) be positive integers such that \(m < N\) and
\(
h < \frac{N}{N-m}
\). The minimum effective field size for vector MDS codes over the Zosin-Khuller network \(Z_{h,m,N}\) satisfies
\[
q_v^{\mathrm{MDS}}(Z_{h,m,N}) \ge \psi\left(\binom{N}{m} - h + 1\right).
\]
\end{theorem}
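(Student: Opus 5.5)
The plan mirrors the case $m=N-1$ in the proof of Theorem~\ref{thm18}: show that under the hypothesis $h<\frac{N}{N-m}$, every $h$ distinct vertices of $H_{N:m}^h$ form a hyperedge. Once this holds, the image of the full vertex set under the homomorphism of Lemma~\ref{thm15} is a large covering Grassmannian code, and Lemma~\ref{lem:upper bound of covering} turns that size into a field-size bound.

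\emph{Combinatorial step.} Take any $h$ distinct $m$-subsets $P_1,\dots,P_h$ of $[N]$ and suppose $\bigcap_{i=1}^h P_i=\emptyset$. Lemma~\ref{lem:subset_intersection} with $\tau=h$ gives $N\ge \frac{h}{h-1}m$, equivalently $N(h-1)\ge hm$, i.e.\ $h(N-m)\ge N$, i.e.\ $h\ge \frac{N}{N-m}$. This contradicts the hypothesis, so every $h$ of the $m$-subsets share a common element. The degenerate case $h=1$ is trivial and can be handled separately, since the bound then reads $q_v^{\mathrm{MDS}}\ge\psi(\binom{N}{m})$ and needs its own check or can be excluded. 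Alternatively, and more directly, note that $|\bigcup_i P_i^c|\le h(N-m)<N$, so some element lies in every $P_i$.

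\emph{Coding step.} Let $\mathcal{C}$ be a vector network MDS code over $\mathbb{F}_q^t$. By Lemma~\ref{thm15} there is a homomorphism $\phi:H_{N:m}^h\to qK_{ht:t}^h$, $\phi(j)=\langle V_j\rangle$. Since every $h$ vertices form a hyperedge, any $h$ of the $t$-dimensional subspaces $\langle V_j\rangle$, $j\in\binom{[N]}{m}$, span all of $\mathbb{F}_q^{ht}$. In particular, these subspaces are pairwise distinct when $h\ge 2$, so $\{\langle V_j\rangle\}$ is an $h$-$(ht,t,(h-1)t)_q^c$ covering Grassmannian code of size $\binom{N}{m}$.

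\emph{Bound.} Apply Lemma~\ref{lem:upper bound of covering} with $n=ht$, $k=t$, $\alpha=h$:
\[
\binom{N}{m}\le \left\lfloor \frac{q^{2t}-1}{q^t-1}\right\rfloor+h-2=q^t+h-1.
\]
This gives $q^t\ge\binom{N}{m}-h+1$, and since $q^t$ is a prime power, $q^t\ge\psi(\binom{N}{m}-h+1)$. That lemma requires $\alpha\ge2$, so the case $h=1$ needs care. There the MDS condition forces each $V_j$ merely to have full rank with no covering constraint, and I expect $h=1$ to be the main obstacle to stating the result cleanly; I would either assume $h\ge2$ implicitly, as in Theorem~\ref{thm18}, or argue it separately. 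Otherwise the key step is the intersection argument, which is immediate.
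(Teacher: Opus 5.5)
Your proposal is correct and follows the paper's proof step for step: the intersection property (from Lemma~\ref{lem:subset_intersection}, or your more direct count $h(N-m)<N$) makes every $h$-set of vertices of $H_{N:m}^h$ a hyperedge; Lemma~\ref{thm15} then yields an $h$-$(ht,t,(h-1)t)_q^c$ covering Grassmannian code of size $\binom{N}{m}$; and Lemma~\ref{lem:upper bound of covering} with $n=ht$, $k=t$ gives $q^t\ge\binom{N}{m}-h+1$. Your extra care is well placed: the paper takes injectivity of $j\mapsto\langle V_j\rangle$ for granted, which you justify, and its rewriting of the hypothesis divides by $h-1$, so it tacitly needs $h\ge 2$ — and for $h=1$ the bound genuinely fails, e.g.\ $q_v^{\mathrm{MDS}}(Z_{1,2,3})=2<\psi(3)$ by Theorem~\ref{thm:thm14} applied with a constant homomorphism.
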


\begin{IEEEproof}
Let \(\mathcal{C}\) be an arbitrary vector MDS code over \(\mathbb{F}_q^t\) for the network \(Z_{h,m,N}\). By Lemma~\ref{thm15}, there exists a hypergraph homomorphism \(\phi: H_{N:m}^h \to qK_{ht:t}^h\). The condition \(h < N/(N-m)\) is equivalent to \(N < \frac{h}{h-1}m\). By Lemma~\ref{lem:subset_intersection}, any \(h\) distinct \(m\)-element subsets of \([N]\) share a common element, implying that every \(h\)-vertex subset of \(H_{N:m}^h\) forms a hyperedge. Consequently, the entire vertex set of \(H_{N:m}^h\) is mapped via \(\phi\) to an \(h\)-\((ht, t, (h-1)t)_q^c\) covering Grassmannian code of full size \(\binom{N}{m}\). Applying the bound from Lemma~\ref{lem:upper bound of covering}, we obtain
\(
q^t \ge \binom{N}{m} - h + 1
\).
\end{IEEEproof}

In what follows, we focus on constructing vector MDS codes for the Zosin--Khuller network \(Z_{h,m,N}\). For \(m > N/2\), the problem of constructing vector network MDS codes reduces to finding a hypergraph homomorphism between two associated hypergraphs.

\begin{theorem}\label{thm:thm14}
For the Zosin--Khuller network \(Z_{h,m,N}\) with \(m > N/2\), a linear network MDS code over \(\mathbb{F}_q^t\) exists if and only if there exists a hypergraph homomorphism
\(
\phi: H_{N:m}^h \to qK_{ht:t}^h
\).
\end{theorem}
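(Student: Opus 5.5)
The ``only if'' direction is exactly \cref{thm15}, so the work lies in the ``if'' direction. Given a homomorphism $\phi: H_{N:m}^h \to qK_{ht:t}^h$, I would assign to each $C$-node $j\in\binom{[N]}{m}$ a matrix $V_j\in\mathbb{F}_q^{ht\times t}$ whose columns form a basis of $\phi(j)$. The goal is to route $V_j$ to $C$-node $j$ in such a way that every link error, even one on an upstream edge, corrupts at most one incoming packet at each sink.

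\textbf{Where $m>N/2$ enters.} The danger lies in the $A$-layer. An $A$-node $a$ with $i\in a$ feeds the $N-m+1$ $B$-nodes $a\cup\{x\}$, and all of these contain $i$. So if such an $A$-node carried useful mixed information, a single error on the edge $\sigma\to a$ could corrupt up to $N-m+1$ packets arriving at sink $i$. To avoid this, I would choose an injective map $a:\binom{[N]}{m}\to\binom{[N]}{m-1}$ with $a(j)\subset j$, that is, a matching in the inclusion bipartite graph that saturates the $m$-subsets. This graph is biregular: each $m$-set has degree $m$ and each $(m-1)$-set has degree $N-m+1$. When $m>N/2$ we have $m\ge N-m+1$, so Hall's condition holds, since any family $X$ of $m$-sets has $m|X|$ edges landing in at least $m|X|/(N-m+1)\ge |X|$ distinct $(m-1)$-sets. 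Checking this matching step is, in my view, the real content of the hypothesis, and it is the step I expect to need the most care.

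\textbf{The code.} I would define the code as follows:
\begin{enumerate}
\item The source sends $\mathbf{x}V_j$ on the edge $\sigma\to a(j)$, and sends zero on the edges to unmatched $A$-nodes.
\item Each $A$-node $a(j)$ forwards its packet only to $B$-node $j$, with zero coefficients toward every other $B$-node.
\item $B$-node $j$ forwards the packet from $a(j)$ to $C$-node $j$.
\item Each $C$-node forwards its packet to every sink it is connected to.
\end{enumerate}
Then every edge carries the information of at most one $C$-node. Consequently each row block of $G(i)$ has at most one nonzero $t\times t$ block, and the identity appears on the incoming edges of sink $i$. Arguing as in \cref{lem:relation between vector}, $d_{\min}(\mathcal{C},i)$ therefore equals the Hamming distance of the $\mathbb{F}_q$-linear code over $\mathbb{F}_q^t$ generated by $(V_j)_{j\ni i}$.

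\textbf{MDS verification.} Take any $h$ distinct $C$-nodes $j_1,\dots,j_h$ containing $i$. They share the element $i$, so they form a hyperedge of $H_{N:m}^h$, and hence $\dim(\phi(j_1)+\cdots+\phi(j_h))=ht$. In particular the images under $\phi$ are distinct $t$-dimensional subspaces, and their column blocks have full row rank. By \cref{lem: vector linear}, the code at sink $i$ has minimum distance $\bar C-h+1$. This holds for every sink $i$, so the code is a vector network MDS code and decodability follows. The routine points left to write out carefully are the block form of $G(i)$ after the zero assignments and the Hall-condition count above.
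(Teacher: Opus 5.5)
Your proposal is correct and follows essentially the same route as the paper: a Hall matching of the $B$-nodes into the $A$-nodes (using that the inclusion graph is biregular with degrees $m \ge N-m+1$), routing a basis of $\phi(j)$ along the matched path to $C$-node $j$ with zero coefficients elsewhere, and then combining the one-nonzero-block-per-row structure of $G(i)$ with the hyperedge condition and Lemma~\ref{lem: vector linear}. Your explicit Hall count and your requirement that each $B$-node forward only the packet from its matched $A$-node are a little more careful than the paper's write-up. Like the paper, you tacitly need $h \le \bar{C}$ for decodability: otherwise $H_{N:m}^h$ has no hyperedges, so the homomorphism exists vacuously while no decodable code exists.
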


\begin{IEEEproof}
The necessity direction follows directly from Lemma~\ref{thm15}. It therefore suffices to establish sufficiency.

When \(m > N/2\), the number of \(A\)-nodes is at least the number of \(B\)-nodes. Consider the bipartite subgraph of \(Z_{h,m,N}\) induced by its \(A\)-node and \(B\)-node partitions. An edge connects an \(A\)-node to a \(B\)-node precisely when the index set of the \(A\)-node is contained within that of the \(B\)-node. Under this adjacency rule, every \(A\)-node has degree \(N-m+1\), and every \(B\)-node has degree \(m\). Since \(N-m+1 \le m\), Hall's marriage theorem guarantees a matching covering all \(B\)-nodes.

Delete all unmatched \(A\)-nodes and their incident edges to obtain a subnetwork \(S\) of \(Z_{h,m,N}\). In \(S\), the counts of \(A\)-nodes, \(B\)-nodes, and \(C\)-nodes are equal, each taking value \(\binom{N}{m}\). For each \(C\)-node, retain \(\binom{N}{m}\) edge-disjoint source-to-\(C\)-node paths as well as edges connecting \(C\)-nodes to sinks, yielding a subnetwork \(S'\). For later construction, we assign identical index labels to the \(A\)-nodes, \(B\)-nodes, and their corresponding \(C\)-nodes along each source-to-\(C\)-node path.

A linear MDS code defined over \(S'\) automatically induces a valid linear MDS code for the full network \(Z_{h,m,N}\). Suppose such a hypergraph homomorphism exists, denoted \(\phi: V(H_{N:m}^h) \to V(qK_{ht:t}^h)\). We construct a vector coding scheme for \(Z_{h,m,N}\) using only edges within \(S'\), and local encoding vectors for edges outside \(S'\) are set to the zero matrix. For each \(A\)-node indexed by \(k \in \binom{[N]}{m}\), define the local encoding vector \(V_k\) on its incoming link using a basis of the subspace \(\phi(k)\). All \(A\)-nodes, \(B\)-nodes, and \(C\)-nodes simply forward received packets without additional processing.

We now verify that the constructed code satisfies the MDS property. Fix an arbitrary sink \(i \in [N]\), and let \(\{i_1, i_2, \dots, i_{\bar{C}}\}\subseteq \binom{[N]}{m}\) be the collection of \(C\)-nodes connected to \(i\), where \(\bar{C} = \binom{N-1}{m-1}\). The extended global encoding matrix at sink \(\gamma\) has the form
\[
\begin{pmatrix} F(\gamma) \\ G(\gamma) \end{pmatrix},
\]
where \(F(\gamma) = \begin{pmatrix} V_{i_1} & V_{i_2} & \cdots & V_{i_{\bar{C}}} \end{pmatrix}\). Viewing \(G(\gamma)\) as a block matrix composed of \(t \times t\) subblocks, each row of \(G(\gamma)\) contains at most one nonzero block. Take any \(h\)-element subset \(\{i_{j_1}, \dots, i_{j_h}\} \subseteq \{i_1, i_2, \dots, i_{\bar{C}}\}\). By the homomorphism property, we get that the block matrix
\[
\begin{pmatrix} V_{i_{j_1}} & V_{i_{j_2}} & \cdots & V_{i_{j_h}} \end{pmatrix}
\]
has full rank. This confirms that the constructed scheme is a valid linear MDS code over \(\mathbb{F}_q^t\) for \(Z_{h,m,N}\).
\end{IEEEproof}

\begin{corollary}
Let \(m, N, h\) be positive integers such that \(N/2 < m \le N\) and \(h \le \bar{C} = \binom{N-1}{m-1}\). The minimum effective field size for vector MDS codes in \(Z_{h,m,N}\) satisfies
\[
q_v^{\mathrm{MDS}}(Z_{h,m,N}) \le \eta,
\]
where
\(
\eta = \min \left\{ q^t \mid \exists \, H_{N:m}^h \to qK_{ht:t}^h \text{ for prime power } q \text{ and positive integer } t \right\}
\).
\end{corollary}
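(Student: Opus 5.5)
This corollary follows directly from Theorem~\ref{thm:thm14}. The plan is to show that the minimum in the definition of \(\eta\) is attained, and that its minimizer gives a vector network MDS code.

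\textbf{Step 1: \(\eta\) is well defined.} We need at least one pair \((q,t)\) for which a homomorphism \(H_{N:m}^h \to qK_{ht:t}^h\) exists. Take \(t=1\) and a prime power \(q \ge \binom{N}{m}-1\). An extended Reed--Solomon code of length \(\binom{N}{m}\) and dimension \(h\) exists, since \(h \le \bar{C} \le \binom{N}{m}\). Its columns give \(\binom{N}{m}\) one-dimensional subspaces of \(\mathbb{F}_q^h\), and any \(h\) of them span \(\mathbb{F}_q^h\). Map each vertex of \(H_{N:m}^h\) to a distinct one of these subspaces.

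\textbf{Step 2: the map is a homomorphism.} Every hyperedge \(\{t_1,\dots,t_h\}\) consists of \(h\) distinct vertices. Their images are therefore \(h\) distinct columns, which span a space of dimension \(h = ht\). So each hyperedge is sent to a hyperedge of \(qK_{h:1}^h\). This shows the set in the definition of \(\eta\) is nonempty. Since it is a set of positive integers, its minimum exists.

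\textbf{Step 3: apply Theorem~\ref{thm:thm14}.} Let \((q,t)\) attain \(q^t=\eta\), and fix a homomorphism \(\phi: H_{N:m}^h \to qK_{ht:t}^h\). The hypothesis \(m > N/2\) is exactly the condition of Theorem~\ref{thm:thm14}. Its sufficiency direction then yields a linear network MDS code over \(\mathbb{F}_q^t\) for \(Z_{h,m,N}\).

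\textbf{Step 4: conclude.} By definition of \(q_v^{\mathrm{MDS}}\) as a minimum over effective field sizes, \(q_v^{\mathrm{MDS}}(Z_{h,m,N}) \le q^t = \eta\).

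There is no real obstacle here. The only point needing care is Step 1, the non-emptiness of the set defining \(\eta\). Even without it the inequality holds vacuously if we adopt the convention \(\min\emptyset=\infty\), but the MDS-code argument in Steps 1--2 makes the bound meaningful.
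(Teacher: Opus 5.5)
Your proposal is correct and follows the paper, which obtains this corollary directly from the sufficiency direction of Theorem~\ref{thm:thm14}. Your Steps 1--2, showing $\eta$ is finite via an extended Reed--Solomon code with $t=1$, play the role of the paper's subsequent remark that $\eta \le \psi\bigl(\binom{N}{m}-1\bigr)$, which the paper derives from its covering-Grassmannian lemmas instead; the one slip is minor: for $h=1$ the vertices cannot be sent to \emph{distinct} one-dimensional subspaces of $\mathbb{F}_q^1$, but the constant map onto $\mathbb{F}_q$ is then trivially a homomorphism.
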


\begin{remark}
A trivial upper bound for \(\eta\) is given by \(\psi(\binom{N}{m}-1)\). Indeed, Lemmas \ref{lem:2lower bound of covering}, \ref{lem:3lower bound of covering}, and \ref{lem:4lower bound} guarantee the existence of an \(h\)-\((ht, t, (h-1)t)_q^c\) covering Grassmannian code with \(\binom{N}{m}\) codewords whenever \(q^t \ge \binom{N}{m} - 1\). This yields
\[
q_v^{\mathrm{MDS}}(Z_{h,m,N}) \le \psi\left(\binom{N}{m} - 1\right).
\]
Tighter upper bounds for \(q_v^{\mathrm{MDS}}(Z_{h,m,N})\) can be obtained by analyzing the underlying combinatorial structure of the hypergraph \(H_{N:m}^h\).
\end{remark}

We now establish upper bounds on the MDS gap for the Zosin--Khuller networks.

\begin{corollary}\label{cor:gap}
Let \(m, N, h\) be positive integers such that \(h \le \binom{N-1}{m-1}\). Then
\[
\operatorname{gap}^{\mathrm{MDS}}(Z_{h,2,N})
\le
\begin{cases}
\psi_{\mathrm{even}}(N) - \tau, & h = 2, \\[6pt]
\psi_{\mathrm{even}}(N) - \psi(N - h ), & \text{otherwise},
\end{cases}
\]
and when \(m > N/2\),
\[
\operatorname{gap}^{\mathrm{MDS}}(Z_{h,m,N})
\le
\begin{cases}
\psi\left(\binom{N}{m}-1\right) - \tau, & h = 2, \\[6pt]
\psi\left(\binom{N}{m}-1\right) - \psi\left(\binom{N}{m}-h+1\right), & h < \frac{N}{N-m}, \\[6pt]
\psi\left(\binom{N}{m}-1\right) - \psi\left(\binom{N-1}{m-1} - h + 1\right), & \text{otherwise},
\end{cases}
\]
where \(\psi_{\mathrm{even}}(N)\) denotes the smallest even prime power greater than or equal to \(N\), and
\(
\tau = \min\left\{ q^t \;\middle|\; q^t + q^{t-1} \ge \binom{N}{m}/{\left\lfloor \frac{N}{m} \right\rfloor},\ q \text{ is a prime power},\ t \in \mathbb{N} \right\}.
\)
\end{corollary}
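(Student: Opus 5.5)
The gap is $q^{\mathrm{MDS}}-q_v^{\mathrm{MDS}}$, so each case needs an upper bound on the scalar minimum field size and a lower bound on the vector one. All the vector lower bounds are already in hand. For $h=2$ we take $q_v^{\mathrm{MDS}}\ge\tau$ from \cref{thm17}. For $h<N/(N-m)$ we take $\psi\bigl(\binom{N}{m}-h+1\bigr)$ from \cref{thm:thm13}. Otherwise we take $\psi\bigl(\binom{N-1}{m-1}-h+1\bigr)$ from \cref{thm18}, or from the scalar-type covering argument of its proof, which works for all $h\ge 2$ and every $m$. When $m=2$ this last bound reads $\psi(N-1-h+1)=\psi(N-h)$, which is exactly the "otherwise" entry of the first display. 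So the real work is the scalar upper bounds.

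\textbf{Case $m>N/2$.} I would specialize \cref{thm:thm14} to $t=1$. A scalar MDS code over $\mathbb{F}_q$ exists as soon as there is a homomorphism $H_{N:m}^h\to qK_{h:1}^h$. It suffices to map the $\binom{N}{m}$ vertices injectively onto lines of $\mathbb{F}_q^h$ such that any $h$ of them span $\mathbb{F}_q^h$. This is the same as the columns of an $[\binom{N}{m},h]_q$ MDS code, and an extended Reed--Solomon code provides one whenever $q\ge\binom{N}{m}-1$. Hence $q^{\mathrm{MDS}}(Z_{h,m,N})\le\psi\bigl(\binom{N}{m}-1\bigr)$, and subtracting the three vector lower bounds gives the three cases. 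The Hall-matching subnetwork construction in \cref{thm:thm14} already handles the network structure, so nothing new is needed here.

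\textbf{Case $m=2$.} Here $m>N/2$ fails for $N\ge4$, so \cref{thm:thm14} does not apply and a direct scalar construction is required. I expect this to be the main obstacle. The A-nodes are indexed by singletons $\{a\}$, the B/C-nodes by pairs $\{a,b\}$, and sink $i$ sees the $N-1$ C-nodes $\{i,b\}$. The plan is to reserve one encoding vector $u_a\in\mathbb{F}_q^h$ per A-node. C-node $\{a,b\}$ receives two inputs, $u_a$ and $u_b$, and must output a combination $\lambda u_a+\mu u_b$. We need, for each $i$, that the $N-1$ vectors on the C-nodes $\{i,b\}$ have every $h$ of them spanning $\mathbb{F}_q^h$, i.e.\ they form an MDS configuration.

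The natural choice is $u_a=(1,x_a,\dots,x_a^{h-1})$ with distinct $x_a\in\mathbb{F}_q$. The C-node $\{a,b\}$ then sends something like $u_a+u_b$, or better a vector tied to the "midpoint" of $x_a$ and $x_b$. In characteristic 2, the sum $x_a+x_b$ is symmetric, and for fixed $i$ the map $b\mapsto x_i+x_b$ is injective. So the C-nodes at sink $i$ carry distinct evaluation points, provided the C-node can produce the Reed--Solomon column at $x_a+x_b$. Since it only has two vectors to combine, this forces a dimension check.

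I would first try $h=2$, where any two distinct points in $\mathbb{P}^1$ suffice and a line spanned by $u_a,u_b$ trivially contains such points. For general $h$, I would instead look for a field-multiplicative or additive labelling of $\mathbb{F}_q$ with $q\ge N$ that makes the C-node vectors at each sink form a Reed--Solomon-type set. This needs $q$ to be a power of 2 so that the pairing $\{a,b\}\mapsto x_a+x_b$ is well defined and injective at each sink, and that explains $\psi_{\mathrm{even}}(N)$.

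If that construction closes, then $q^{\mathrm{MDS}}(Z_{h,2,N})\le\psi_{\mathrm{even}}(N)$. Subtracting $\tau$ for $h=2$ and $\psi(N-h)$ otherwise completes the proof. If the construction fails, I would fall back on citing the corresponding scalar upper bound for $Z_{h,2,N}$ from \cite{WS2024}, which I expect is where $\psi_{\mathrm{even}}(N)$ originates.
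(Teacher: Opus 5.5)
Your vector lower bounds (Theorems~\ref{thm17}, \ref{thm:thm13}, \ref{thm18}, or Lemma~\ref{lem:upper bound of covering} applied to a sink code) and your scalar bound $q^{\mathrm{MDS}}(Z_{h,m,N})\le\psi\bigl(\binom{N}{m}-1\bigr)$ for $m>N/2$ are exactly what the corollary is built from. The paper states the corollary without proof. Your $m>N/2$ bound is the remark after Theorem~\ref{thm:thm14} taken at $t=1$, realised with an extended Reed--Solomon code.

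The gap is the $m=2$ scalar bound $q^{\mathrm{MDS}}(Z_{h,2,N})\le\psi_{\mathrm{even}}(N)$. Nothing in the paper produces it, since Theorem~\ref{thm:thm14} needs $m>N/2$. It must be imported from the known scalar construction for $Z_{h,2,N}$, presumably the one in \cite{WS2024}. So your ``fallback'' is the actual proof step, and the construction you put first should be dropped. It fails for two reasons.

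\textbf{First failure.} For $h\ge3$ the Reed--Solomon column at $x_a+x_b$ never lies in $\langle u_a,u_b\rangle$. Columns at three distinct points are linearly independent, while B-node $\{a,b\}$ can only output $\lambda u_a+\mu u_b$. Also, symmetry of $x_a+x_b$ and injectivity of $b\mapsto x_i+x_b$ hold in every characteristic, so they do not explain the ``even''.

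\textbf{Second failure.} Your sink criterion (every $h$ C-node vectors at each sink independent) is necessary but not sufficient when $m=2$. A-node $\{i\}$ feeds all $N-1$ C-nodes $\{i,b\}$. So the edge $\sigma\to\{i\}$ contributes a row $\Lambda_i$ of $G(i)$ with up to $N-1$ nonzero entries. Hence $d_{\min}$ is no longer the Hamming distance of the sink code, in contrast to Lemma~\ref{Lem: relation between} and the matched subnetwork in Theorem~\ref{thm:thm14}. You must also exclude $yF(i)=c\Lambda_i+\mathbf{e}$ with $y\neq0$ and $\mathrm{wt}_H(\mathbf{e})\le N-h-2$.

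This extra condition really bites already for $N=4$, $h=2$ over $\mathbb{F}_4$, where $d_{\min}$ should be $2$.
\begin{itemize}
\item Take $u_1=e_1$, $u_2=e_1+e_2$, $u_3=u_4=e_2$.
\item Let C-node $\{1,2\}$ forward $u_1$ only, and set $w_{\{1,3\}}=u_1+u_3$ and $w_{\{1,4\}}=u_1+\omega u_4$ with $\omega\in\mathbb{F}_4\setminus\{0,1\}$.
\item The remaining C-node vectors can be chosen so that every sink sees three pairwise independent vectors.
\item Yet $y=(1,0)$ gives $yF(1)=\Lambda_1=(1,1,1)$.
\end{itemize}
So a single error on $\sigma\to\{1\}$ mimics a message change, and $d_{\min}=1<2$.

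\textbf{The case $h=1$.} Your argument, like the statement itself, silently needs $h\ge2$. Theorem~\ref{thm:thm13} and Lemma~\ref{lem:upper bound of covering} require $\alpha=h\ge2$. Yet for $m>N/2$ the value $h=1$ always falls in the branch $h<N/(N-m)$, and there the bound is false. Take $Z_{1,2,3}$. Routing along the Hall matching of Theorem~\ref{thm:thm14} gives a binary MDS code: each sink receives two copies of $x$ on non-interfering paths. So the gap is $0$, while the claimed bound is $\psi(2)-\psi(3)=-1$. The case $h=1$ of the first display is likewise not covered by any of the cited lower bounds. You should restrict to $h\ge2$.
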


\section{Summary and Concluding Remarks}

In this paper, we systematically investigated the minimum field size required for scalar and vector network MDS codes over two families of network topologies: generalized combination networks $(\epsilon,\ell)$-$\mathcal{N}_{h,r,\alpha\ell+\epsilon}$ and Zosin–Khuller networks $Z_{h,m,N}$.

For generalized combination networks, we established rigorous equivalences linking the minimum distance of scalar network codes to the minimum Hamming distance of classical linear codes, and that of vector network codes to the minimum Hamming distance of $\mathbb{F}_{q}$-linear vector codes. These equivalences convert network-level MDS constraints into conventional coding-theoretic conditions. Depending on the parameter regimes, namely $h\leq \alpha$, $\alpha\leq h\leq \alpha\ell$, and $\alpha\ell\leq h\leq \alpha\ell+\epsilon$, we derived necessary and sufficient conditions for the existence of scalar and vector network MDS codes, connecting them to classical MDS codes and covering Grassmannian codes. Using refined greedy constructions and MRD-based designs, we obtained improved lower and upper bounds on the minimum field size. Notably, our upper bound significantly outperforms the state-of-the-art universal bound for general network MDS codes in \cite{GY2021}. Furthermore, we identified a new family of generalized combination networks $(1,\ell)$-$\mathcal{N}_{2\ell,r,2\ell+1}$ with zero MDS gap, demonstrating that vector coding offers no field-size advantage over scalar coding in these configurations.

For Zosin–Khuller networks, we developed a hypergraph homomorphism framework for vector network MDS codes and established a necessary and sufficient condition for their existence. For $h=2$, using chromatic number arguments, we derived a lower bound for vector MDS codes, which in the scalar case yields \(q^{\mathrm{MDS}}(Z_{2,m,N})\geq \psi\left(\binom{N}{m}/\left\lfloor\frac{N}{m}\right\rfloor-1\right)\), strictly improving the prior bound from \cite{WS2024}. For parameters satisfying $h<N/(N-m)$, applying subset-intersection arguments, we obtained a lower bound $q_{v}^{\mathrm{MDS}}(Z_{h,m,N})\geq \psi\left(\binom{N}{m}-h+1\right)$, which also improves the corresponding bound in \cite{WS2024}. For general $h$, we provided lower and upper bounds on the minimum field size for vector MDS codes, as well as upper bounds on the MDS gap between optimal scalar and vector solutions.

Several open problems remain for future investigation. For generalized combination networks with $\epsilon\neq 0$, it remains unknown whether vector MDS codes can achieve a strictly smaller field size than scalar MDS codes, since the direct links between the source and the sink nodes are crucial for vector codes to have an advantage in such networks. For Zosin–Khuller networks, tightening the gap between our derived lower and upper bounds on the minimum field size is a natural next step. Finally, extending the hypergraph homomorphism framework to other network topologies beyond ZK networks may further clarify the inherent limitations and potential advantages of vector network coding.


\section*{Appendix A}

\subsection{Proof of \cref{lem:3lower bound of covering}}

We take the MRD code \(\mathscr{C}\) constructed in Section~\ref{sec:covering_bounds}  as the fundamental building block to construct the target covering Grassmannian codes and further derive the corresponding lower bound.

Let \(n=3k+2mk\) for some non-negative integer \(m\). Define
\[B_0 = \left\{ \begin{pmatrix} I_k \\ c \\ c^2 \end{pmatrix} \,\bigg|\, c\in \mathscr{C} \right\} \cup \left\{ \begin{pmatrix} 0_k \\ 0_k \\ I_k \end{pmatrix} \right\}.\]
For each integer \(1\le j\le m\), recursively set
\[B_j = \left\{ \begin{pmatrix} b \\ c \\ c^2 \end{pmatrix} \,\bigg|\, b\in B_{j-1},\,c\in \mathscr{C} \right\} \cup \left\{ \begin{pmatrix} 0_k \\ \vdots \\ 0_k \\ I_k \end{pmatrix} \right\}.\]
Let \(S(B_j)=\{\langle b\rangle \mid b\in B_j\}\), where \(\langle b\rangle\) denotes the subspace spanned by the columns of the matrix \(b\). We verify that \(S(B_m)\) constitutes a \(3\text{-}(n,k,2k)_q^c\) covering Grassmannian code by induction on \(m\).

We first consider \(m=0\). For any distinct \(c_1,c_2,c_3\in \mathscr{C}\), we compute
\[\det\begin{pmatrix} I_k & I_k & I_k \\ c_1 & c_2 & c_3 \\ c_1^2 & c_2^2 & c_3^2 \end{pmatrix} =\det\begin{pmatrix} I_k & 0 & 0 \\ c_1 & c_2-c_1 & c_3-c_1 \\ c_1^2 & c_2^2-c_1^2 & c_3^2-c_1^2 \end{pmatrix}= \det\begin{pmatrix} c_2-c_1 &c_3-c_1 \\  c_2^2-c_1^2 & c_3^2-c_1^2 \end{pmatrix}.\]
Since \(\mathscr{C}\) is an MRD code, \(c_2-c_1\), \(c_3-c_1\) and \(c_3-c_2\) are invertible matrices. It follows that
\begin{equation}
\begin{split}
\det\begin{pmatrix} c_2-c_1 & c_3-c_1 \\ c_2^2-c_1^2 & c_3^2-c_1^2 \end{pmatrix} &=\det\left( \begin{pmatrix} I_k & I_k \\ c_1+c_2 & c_1+c_3 \end{pmatrix} \begin{pmatrix} c_2-c_1 & 0 \\ 0 & c_3-c_1 \end{pmatrix} \right)\\
&=\det \begin{pmatrix} I_k & 0 \\ c_1+c_2 & c_3-c_2\end{pmatrix}\cdot\det\begin{pmatrix} c_2-c_1 & 0 \\ 0 & c_3-c_1 \end{pmatrix}  \\
 &=\det (c_3-c_2)\cdot\det(c_2-c_1)\cdot\det(c_3-c_1)\neq 0.\end{split}\label{eq:det}\end{equation}
Furthermore, for arbitrary distinct \(c_1,c_2\in \mathscr{C}\),
\[\det\begin{pmatrix} I_k & I_k & 0_k \\ c_1 & c_2 & 0_k \\ c_1^2 & c_2^2 & I_k \end{pmatrix} =\det\begin{pmatrix} I_k & I_k \\ c_1 & c_2 \end{pmatrix} =\det(c_2-c_1)\neq 0.\]
Hence \(S(B_0)\) is a \(3\text{-}(3k,k,2k)_q^c\) covering Grassmannian code.

Suppose the statement holds for all positive integers \(j < m\), i.e., assume that \(S(B_{j})\) is a \(3\text{-}(3k+2jk,k,2k)_q^c\) covering Grassmannian code for all \(j<m\). We now prove the assertion for \(j=m\).
Take three arbitrary elements \(b_{1}\), $b_{2}$ and $b_{3}$ from \(B_m\) and analyze their linear span in two cases.

\begin{itemize}
    \item Assume that the matrix \(\begin{pmatrix}0_k\\\vdots\\0_k\\I_k\end{pmatrix}\) is contained in the three selected elements. Without loss of generality, let
    \(b_{1}=\begin{pmatrix}0_k\\\vdots\\0_k\\I_k\end{pmatrix},\; b_{2}=\begin{pmatrix}b'_{2}\\c_{2}\\c_{2}^2\end{pmatrix},\; b_{3}=\begin{pmatrix}b'_{3}\\c_{3}\\c_{3}^2\end{pmatrix},\)
    where \(b'_{2},b'_{3}\in B_{m-1}\). We have
    \[\mathrm{rank}\begin{pmatrix} b'_{2} & b'_{3} & 0\\ c_{2} & c_{3} & 0_k\\ c_{2}^2 & c_{3}^2 & I_k \end{pmatrix} =\mathrm{rank}\begin{pmatrix}b'_{2} & b'_{3}\\c_{2} & c_{3}\end{pmatrix}+k.\]
    \begin{itemize}
        \item If \(b'_{2}\neq b'_{3}\), the induction hypothesis yields \(\mathrm{rank}\begin{pmatrix}b'_{2} & b'_{3}\end{pmatrix}=2k\), so the total rank equals \(3k\).
        \item If \(b'_{2}=b'_{3}\), then \(c_{2}\neq c_{3}\), and
    \(\mathrm{rank}\begin{pmatrix}b'_{2} & b'_{2}\\c_{2} & c_{3}\end{pmatrix} =\mathrm{rank}\begin{pmatrix}b'_{2} & 0\\c_{2} & c_{3}-c_{2}\end{pmatrix}=2k.\)
    The overall rank is still \(3k\).
    \end{itemize}
    \item Assume that the matrix \(\begin{pmatrix}0_k\\\vdots\\0_k\\I_k\end{pmatrix}\) is not contained in the three selected elements. Write
\(b_{1}=\begin{pmatrix}b'_{1}\\c_{1}\\c_{1}^2\end{pmatrix},\; b_{2}=\begin{pmatrix}b'_{2}\\c_{2}\\c_{2}^2\end{pmatrix},\; b_{3}=\begin{pmatrix}b'_{3}\\c_{3}\\c_{3}^2\end{pmatrix},\)
with \(b'_{1},b'_{2},b'_{3}\in B_{m-1}\).
\begin{itemize}
    \item If \(b'_{1},b'_{2},b'_{3}\) are mutually distinct, then \(\mathrm{rank}\begin{pmatrix}b'_{1} & b'_{2} & b'_{3}\end{pmatrix}=3k\) by induction.
    \item If \(b'_{1}=b'_{2}=b'_{3}\), then \(c_{1},c_{2},c_{3}\) are distinct. And
    \begin{align*}
    \mathrm{rank}\begin{pmatrix} b'_{1} & b'_{1} & b'_{1} \\ c_1 & c_2 & c_3 \\ c_1^2 & c_2^2 & c_3^2 \end{pmatrix}
    &=\mathrm{rank}\begin{pmatrix} b'_{1} & 0 & 0 \\ c_1 & c_2-c_1 & c_3-c_1 \\ c_1^2 & c_2^2-c_1^2 & c_3^2-c_1^2 \end{pmatrix}\\
    &=\mathrm{rank}\begin{pmatrix}  c_2-c_1 & c_3-c_1 \\ c_2^2-c_1^2 & c_3^2-c_1^2 \end{pmatrix}+k.
    \end{align*}
    From \eqref{eq:det}, the last matrix has rank \(2k\). Consequently, the overall rank is \(3k\).
    \item If exactly two entries coincide, similar rank computation also leads to full rank \(3k\).
    \end{itemize}
\end{itemize}
All cases verify \(\mathrm{rank}(b_{1},b_{2},b_{3})=3k\). Hence \(S(B_m)\) is a \(3\text{-}(3k+2mk,k,2k)_q^c\) covering Grassmannian code. The cardinality satisfies the recurrence \(|B_j|=|B_{j-1}|\cdot|\mathscr{C}|+1\) with initial value \(|B_0|=q^k+1\). Solving the recurrence yields
\(|B_m|=\frac{q^{(m+2)k}-1}{q^k-1}.\)

Furthermore, for the case \(n=3k+(2m+1)k\) where \(m\) is a non-negative integer, define
\[\hat{B}_m=\left\{\begin{pmatrix}b\\0_k\end{pmatrix}\,\bigg|\,b\in B_m\right\} \cup\left\{\begin{pmatrix}0_k\\\vdots\\0_k\\I_k\end{pmatrix}\right\}.\]
Then \(S(\hat{B}_m)\) forms a \(3\text{-}(3k+(2m+1)k,k,2k)_q^c\) covering Grassmannian code of size
\(|\hat{B}_m|=|B_m|+1=\frac{q^{(m+2)k}-1}{q^k-1}+1.\)

\subsection{Proof of \cref{lem:4lower bound}}

Let $r$ be the remainder when $n$ is divided by $k$, and define \(k^{\prime} = \lfloor \frac{n}{k}\rfloor\). We construct the target covering Grassmannian code based on the MRD code \(\mathscr{C}\) constructed in \cref{sec:covering_bounds}. For each matrix \(B\in \mathscr{C}\), define a \(k\times n\) block matrix \(M_B\) consisting of \(k'\) full \(k\times k\) square blocks and one residual \(k\times r\) zero block:
\[
M_{B}=\begin{pmatrix}
I_{k}&B&B^{2}&\cdots &B^{k'-1} & \mathbf{0}_{k\times r}
\end{pmatrix},
\]
where \(B^i\) denotes the \(i\)-th power of \(B\). Let
\[
\mathcal{M} = \big\{ \langle M_{B} \rangle \,\big|\, B\in \mathscr{C} \big\},
\]
where \(\langle M_B\rangle\) denotes the row space of \(M_B\) over \(\mathbb{F}_q\).

For each integer \(1\le i\le k'\), define the block unit row matrix
\[
\mathbf{E}_i = \begin{pmatrix}
\mathbf{0}_{k\times k} & \cdots & I_k & \cdots & \mathbf{0}_{k\times k} & \mathbf{0}_{k\times r}
\end{pmatrix}\in \mathbb{F}_q^{k\times n},
\]
where the \(i\)-th \(k\times k\) block is the identity matrix \(I_k\) and all other blocks are zero matrices. Since \(n\ge \alpha k\), we have \(k'\ge \alpha\). Define
\[
\mathcal{S} = \big\{ \langle \mathbf{E}_i \rangle \,\big|\, \alpha \le i \le k' \big\},
\]
which implies \(|\mathcal{S}| = k'-\alpha+1\).

We now verify that \(\mathcal{D} = \mathcal{M}\cup \mathcal{S}\) forms an \(\alpha\)-\((n,k,(\alpha-1)k)_q^c\) covering Grassmannian code. By definition, it suffices to prove that any \(\alpha\) distinct subspaces in \(\mathcal{D}\) span an \(\alpha k\)-dimensional subspace of \(\mathbb{F}_q^n\).

Take an arbitrary \(\alpha\)-subset \(\{t_1,t_2,\dots,t_\alpha\}\subseteq \mathcal{D}\). We first consider the case where all subspaces are chosen from \(\mathcal{M}\). In this scenario, \(t_i = \langle M_{B_i}\rangle\) for pairwise distinct \(B_i\in \mathscr{C}\). Consider the Vandermonde-type block matrix
\[
F=\begin{pmatrix}
I_{k} & B_{1} & B_{1}^{2} & \dots & B_{1}^{\alpha-1} \\
I_{k} & B_{2} & B_{2}^{2} & \dots & B_{2}^{\alpha-1} \\
\vdots & \vdots & \vdots & \ddots & \vdots \\
I_{k} & B_{\alpha} & B_{\alpha}^{2} & \dots & B_{\alpha}^{\alpha-1}
\end{pmatrix}.
\]
All matrices in the MRD code \(\mathscr{C}\) commute with each other under matrix multiplication. Meanwhile, the difference of any two distinct matrices \(B_i,B_j\) is full rank. This guarantees that \(F\) admits a determinant factorization identical to that of the classical Vandermonde matrix, which further implies that \(F\) is nonsingular, i.e., \(\det(F)\neq 0\). Consequently,
\[
\dim\big(t_1+t_2+\cdots+t_\alpha\big)=\alpha k.
\]

We next consider the case where \(m\) subspaces are selected from \(\mathcal{S}\) and the remaining \(\alpha-m\) subspaces are selected from \(\mathcal{M}\), with \(1\le m\le \min(\alpha,k^{\prime}-\alpha+1)\). The subspaces in \(\mathcal{S}\) correspond to distinct block positions, whose row spaces are blockwise linearly independent. Combined with the full-rank property of the Vandermonde block structure of matrices in \(\mathscr{C}\), the overall sum of these \(\alpha\) subspaces still has dimension \(\alpha k\).

Therefore, \(\mathcal{D}\) is a valid \(\alpha\)-\((n,k,(\alpha-1)k)_q^c\) covering Grassmannian code. Its cardinality is
\(
|\mathcal{D}|= |\mathcal{M}| + |\mathcal{S}| = q^k + k' - \alpha + 1
\). Substituting \(k' = \lfloor \frac{n}{k} \rfloor\), we obtain the following lower bound:
\[
B_q\big(n, k, (\alpha-1)k; \alpha\big) \ge q^k + \left\lfloor \frac{n}{k} \right\rfloor - \alpha + 1.
\]

\section*{Appendix B}

\section*{Proof of \cref{lem:greedy}}

We construct the desired \(\alpha\)-\((n,k,(\alpha-1) k)_q^c\) covering Grassmannian code via a greedy sequential method.

Let \(\mathcal{D}\) be an empty collection. We first select \(\alpha-1\) mutually linearly independent \(k\)-dimensional subspaces \(V_1,V_2,\dots,V_{\alpha-1}\in\mathcal{G}_q(n,k)\). We then construct the \(\alpha\)-th subspace \(V_\alpha\) by sequentially choosing its basis vectors. Define the initial valid vector set
\[
S_{\alpha,1} = \mathbb{F}_q^n \setminus  \sum_{i=1}^{\alpha-1} V_i,
\]
and select an arbitrary vector \(v_{\alpha,1}\in S_{\alpha,1}\). For \(2\le j\le k\), recursively set
\[
S_{\alpha,j} = \mathbb{F}_q^n \setminus \left( \sum_{i=1}^{\alpha-1} V_i + \langle v_{\alpha,1},\dots,v_{\alpha,j-1}\rangle \right),
\]
and pick \(v_{\alpha,j}\in S_{\alpha,j}\). The condition \(n\ge\alpha k\) guarantees \(S_{\alpha,j}\neq\emptyset\) for all \(j\), ensuring the successful construction of the \(k\)-dimensional subspace \(V_\alpha\). By construction,
\[
\dim\left(\sum_{i=1}^{\alpha}V_i\right)=\alpha k,
\]
which satisfies the property of the target code. We now obtain the initial valid code set \(\mathcal{D}=\{V_1,V_2,\dots,V_\alpha\}\).

For each integer \(m\ge \alpha+1\), we construct the subsequent subspace \(V_{m}\) following a unified validity rule. Define the candidate set for the first basis vector of \(V_{m}\) as
\[
S_{m,1} = S_{m-1,1} \setminus \bigcup_{\substack{A \subseteq [m-2]: \\ |A|=\alpha-2}} \left( \sum_{i \in A} V_i + V_{m-1} \right),
\]
and select \(v_{m,1}\in S_{m,1}\). For \(2\le j\le k\), further define
\[
S_{m,j} = S_{m,1} \setminus \bigcup_{\substack{A \subseteq [m-1]: \\ |A|=\alpha-1}} \left( \sum_{i \in A} V_i + \langle v_{m,1}, \dots, v_{m,j-1}\rangle \right),
\]
and select a valid basis vector \(v_{m,j}\in S_{m,j}\). The extension terminates when \(S_{\tau,k}\) becomes empty for some index \(\tau\). The construction ensures that any \(\alpha\) distinct subspaces in the resulting collection span an \(\alpha k\)-dimensional space.

We next calculate the maximum number of constructible codewords. For the initial construction, the valid vector set satisfies
\[
|S_{\alpha,1}|\geq q^n-q^{(\alpha-1)k}.
\]
For each extension step, the candidate set obeys
\[
|S_{m,1}|\geq |S_{m-1,1}|-(q^k-1)-\binom{m-2}{\alpha-2}(q^k-1)\big(q^{(\alpha-2)k}-1\big),
\]
and
\[
|S_{m,k}|\geq |S_{m,1}|-(q^{k-1}-1)-\binom{m-1}{\alpha-1}(q^{k-1}-1)\big(q^{(\alpha-1)k}-1\big).
\]
Iterative substitution yields the unified lower bound
\[
\begin{aligned}
|S_{m,k}|&\geq q^n-q^{(\alpha-1)k}-(m-\alpha)(q^k-1)-\sum_{i=\alpha-1}^{m-2}\binom{i}{\alpha-2}(q^k-1)\big(q^{(\alpha-2)k}-1\big) \\
&\quad -(q^{k-1}-1)-\binom{m-1}{\alpha-1}(q^{k-1}-1)\big(q^{(\alpha-1)k}-1\big).
\end{aligned}
\]

Therefore, for any positive integer \(m\) such that the right-hand side of the above inequality is positive, a valid \(\alpha\)-\((n,k,(\alpha-1) k)_q^c\) covering Grassmannian code with exactly \(m\) codewords can be constructed by the greedy procedure.

\bibliographystyle{IEEEtran}  
\bibliography{references}



\end{document}